\theoremstyle{thmstyleone}%
\newtheorem{theorem}{Theorem}
\newtheorem{corollary}{Corollary}
\newtheorem{lemma}{Lemma}
\newtheorem{identity}{Identity}
\newtheorem{proposition}[theorem]{Proposition}%
\theoremstyle{thmstyletwo}%
\newtheorem{example}{Example}%
\newtheorem{remark}{Remark}%
\newtheorem{specialcase}{Special Case}
\theoremstyle{thmstylethree}%
\newtheorem{definition}{Definition}%
\begin{document}

\title[Discrete Radar based on Modulo Arithmetic]{Discrete Radar based on Modulo Arithmetic}


\author*[1]{\fnm{Nishant} \sur{Mehrotra}}\email{nishant.mehrotra@duke.edu}
\equalcont{These authors contributed equally to this work.}

\author[1]{\fnm{Sandesh Rao} \sur{Mattu}}\email{sandesh.mattu@duke.edu}
\equalcont{These authors contributed equally to this work.}

\author[2]{\fnm{Saif Khan} \sur{Mohammed}}\email{saifkmohammed@gmail.com}

\author[3,4]{\fnm{Ronny} \sur{Hadani}}\email{hadani@math.utexas.edu}

\author[1]{\fnm{Robert} \sur{Calderbank}}\email{robert.calderbank@duke.edu}

\affil[1]{\orgdiv{Department of Electrical \& Computer Engineering}, \orgname{Duke University}, \orgaddress{\city{Durham}, \state{NC}, \country{USA}}} 

\affil[2]{\orgdiv{Department of Electrical Engineering}, \orgname{Indian Institute of Technology}, \orgaddress{\city{Delhi}, \country{India}}} 

\affil[3]{\orgdiv{Department of Mathematics}, \orgname{University of Texas}, \orgaddress{\city{Austin}, \state{TX}, \country{USA}}} 

\affil[4]{{\orgname{Cohere Technologies Inc.}, \orgaddress{\city{Santa Clara}, \state{CA}, \country{USA}}}} 


\abstract{Zak-OTFS is modulation scheme where signals are formed in the delay-Doppler (DD) domain, converted to the time domain (DD) for transmission and reception, then returned to the DD domain for processing. We describe how to use the same architecture for radar sensing. The intended delay resolution is $\nicefrac{1}{B}$ where $B$ is the radar bandwidth, and the intended Doppler resolution is $\nicefrac{1}{T}$ where $T$ is the transmission time. We form a radar waveform in the DD domain, illuminate the scattering environment, match filter the return, then correlate with delay and Doppler shifts of the transmitted waveform. This produces an image of the scattering environment, and the radar ambiguity function expresses the blurriness of this image. The possible delay and Doppler shifts generate the continuous Heisenberg-Weyl group which has been widely studied in the theory of radar. We describe how to approach the problem of waveform design, not from the perspective of this continuous group, but from the perspective of a discrete group of delay and Doppler shifts, where the discretization is determined by the intended delay and Doppler resolution of the radar. We describe how to approach the problem of shaping the ambiguity surface through symplectic transformations that normalize our discrete Heisenberg-Weyl group. The complexity of traditional continuous radar signal processing is $\mathcal{O}\big(B^2T^2\big)$. We describe how to reduce this complexity to $\mathcal{O}\big(BT\log T\big)$ by choosing the radar waveform to be a common eigenvector of a maximal commutative subgroup of our discrete Heisenberg-Weyl group. The theory of symplectic transformations also enables defining libraries of optimal radar waveforms with small peak-to-average power ratios.}

\keywords{Low-Complexity Delay-Doppler Signal Processing, Heisenberg-Weyl Group, Modulo Arithmetic, Peak-to-Average Power Ratio, Zak-OTFS}



\maketitle

\section{Introduction}
\label{sec:intro}

The radar ambiguity function expresses blurriness of a scattering environment as a result of illumination by a radar waveform and processing of the return by matched filtering with the transmitted waveform. We approach the problem of designing a radar waveform through the framework of the discrete Heisenberg-Weyl group. The continuous Heisenberg-Weyl groups have a long history in physics~\cite{Folland2016}, in the theory of radar~\cite{Auslander1985,Miller1991,Moran2001_mathofradar,Moran2004_grouptheory_radar}, and in signal processing~\cite{dzt}. This theory features many mathematically sophisticated connections to harmonic analysis. Their discrete time counterparts~\cite{Segal1963_finiteHW,Weil1964_finiteHW,Mackey1965_finiteHW} are more accessible but have received less attention, with the exception of~\cite{Tolimieri1997_finiteHW,Richman1998_finiteHW}.

We start from the Zak-OTFS (orthogonal frequency space modulation) carrier waveform\footnote{Although this paper develops the theory of discrete radar with Zak-OTFS, the results are realizable on practical hardware, e.g., see~\cite{AFRL2025} for an example of a hardware demonstration of sensing using Zak-OTFS.}~\cite{otfs_book,bitspaper1,bitspaper2} and use the Heisenberg-Weyl group to construct a large family of radar waveforms. The Zak-OTFS carrier waveform is a pulse in the delay-Doppler (DD) domain, more specifically a quasi-periodic localized function defined by a delay period $\tau_p$ and a Doppler period $\nu_p$ where $\tau_p \nu_p = 1$. We then apply a DD domain pulse shaping filter. The TD representation of the filtered signal is a TD pulsone, that is a train of narrow impulses modulated by a sinusoid, with time duration $T$ and bandwidth $B$ inversely proportional to the Doppler and delay spread respectively of the filter. In Zak-OTFS modulation, we transmit information using non-overlapping DD domain pulses spaced $\nicefrac{1}{B}$ apart along the delay axis and $\nicefrac{1}{T}$ apart along the Doppler axis. Since each pulse repeats quasi-periodically, there are $M = \nicefrac{\tau_p}{\nicefrac{1}{B}} = B\tau_p$ pulse locations along the delay axis and $N = \nicefrac{\nu_p}{\nicefrac{1}{T}} = T\nu_p$ pulse locations along the Doppler axis. The number of distinct non-overlapping information carriers is the time-bandwidth product $BT = MN$, and arithmetic modulo $MN$ plays an essential role in the mathematical development of our framework for discrete radar. Throughout this paper we assume, for simplicity, that $M$, $N$ are distinct odd primes. We design radar waveforms by applying discrete affine Fourier transforms to a Zak-OTFS carrier.

The inverse Zak transform converts a DD domain waveform to a time domain (TD) waveform. The discrete Zak transform (DZT)~\cite{dzt} defines a unitary equivalence between the Hilbert space of $MN$-periodic sequences in the TD and the Hilbert space of quasi-periodic $M \times N$ arrays in the DD domain. Note that a quasi-periodic $M \times N$ array is periodic modulo $MN$. Section~\ref{subsec:hw_group_def} describes how the Heisenberg-Weyl group acts on both Hilbert spaces, and how the DZT respects both group actions. 

It is customary in radar theory to regard a waveform as a signal that is modulated onto a carrier and to separate carrier modulation and demodulation from the analysis of ambiguity. Separating the choice of the modulating signal from the choice of carrier makes the problem of waveform design modular, and this modularity is at the heart of conventional processing for pulse Doppler radars. In this paper, we focus not on sequence design, but on designing carrier waveforms with target ambiguity properties. 

In Section~\ref{subsec:impl_radar_wvf}, we compare our direct approach to radar waveform design with sequence design followed by modulation onto a carrier. We compare against constant amplitude (CA) sequences with zero autocorrelation (ZAC), known as CAZAC sequences, which are widely used to design discrete time radar waveforms (see~\cite{benedetto_phasecoded}). The family of CAZAC sequences includes Zadoff-Chu (ZC) sequences~\cite{zadoff_chu} that are used in 5G wireless standards because of their correlation properties. By applying a discrete affine Fourier transform to a Zak-OTFS carrier we are able to construct a discrete chirp waveform that coincides with the ZC sequence when sampled appropriately. When we compare the magnitude of the self-ambiguity function of our discrete chirp waveform with the magnitude of the self-ambiguity function of the ZC phase coded rectangular waveform, we see a $100$ dB difference in the level of the sidelobes.

The phase of the Zak-OTFS carrier waveform changes when the pulse location shifts by an integer multiple of $\tau_p$ along the delay axis, and there is no change in phase when the pulse location shifts by an integer multiple of $\nu_p$ along the Doppler axis. We study the ambiguity properties of a general radar waveform $\mathbf{v}$ through its isotropy group $\mathcal{I}_{\mathbf{v}}$ which is the subgroup of the discrete Heisenberg-Weyl group that acts on $\mathbf{v}$ by multiplication by a complex phase. The isotropy group of the Zak-OTFS carrier has size $MN$, generated by a delay shift of $M\tau_p$ and a Doppler shift of $N\nu_p$. The Zak-OTFS carriers form an orthonormal basis, the discrete Heisenberg-Weyl group permutes the carriers, the self-ambiguity function takes the value $1$ when the delay shift is a multiple of $M$ and the Doppler shift is a multiple of $N$, and it takes the value $0$ elsewhere. In general, the self-ambiguity function of a radar waveform $\mathbf{v}$ takes the value $1$ on the isotropy group $\mathcal{I}_{\mathbf{v}}$ and is constant on cosets of $\mathcal{I}_{\mathbf{v}}$ in the discrete Heisenberg-Weyl group. We understand the structure of the ambiguity function of a radar waveform $\mathbf{v}$ by understanding the orbits of $\mathbf{v}$ under the discrete Heisenberg-Weyl group, and we understand the orbits by understanding the isotropy group $\mathcal{I}_{\mathbf{v}}$.

Isotropy groups must be commutative, and a commutative subgroup of the discrete Heisenberg-Weyl group has size at most $MN$. The operators in a maximal commutative subgroup of size $MN$ can be simultaneously diagonalized, giving an orthonormal basis of common eigenvectors. The isotropy subgroup of any eigenvector has size $MN$, and the self-ambiguity function takes the value $1$ on the isotropy group and takes the value $0$ elsewhere. By selecting radar waveforms that are eigenvectors of maximal commutative subgroups we minimize the support of the self-ambiguity function, and we simplify target detection by reducing ambiguity. When we select the radar waveform to be a Zak-OTFS carrier the self-ambiguity function is supported on points $(aM, bN)$ where $a$, $b$ are arbitrary integers (the self-ambiguity function is periodic modulo $MN$ and in Section~\ref{subsec:comm_subgroups} we show that this support set has the structure of a line modulo $MN$). The intended delay resolution $\nicefrac{1}{B}$ and intended Doppler resolution $\nicefrac{1}{T}$ of the radar determine the delay and Doppler periods that define the Zak-OTFS modulation. We have shown in~\cite{bitspaper1,bitspaper2} that if we select the radar waveform to be a Zak-OTFS carrier, then we can separate targets if their delays differ by more than $\nicefrac{1}{B}$ or if their Doppler shifts differ by more than $\nicefrac{1}{T}$. In fact, we can read off an image of the scattering environment\footnote{This paper is limited to forming radar images only in the presence of noise. Extensions to radar image formation in the presence of clutter~\cite{Skolnik1980,Spafford1968_clutter,Shnidman1999_clutter,Shnidman2005_clutter,Calderbank2007_seaclutter} are possible, but are not pursued in this paper.} from the response to a single Zak-OTFS carrier waveform~\cite{bitspaper1,bitspaper2,zakotfs_ltv}.

In Section~\ref{subsec:gdaft} we introduce symplectic operators that normalize the discrete Heisenberg-Weyl group. Conjugation by a symplectic operator defines an isomorphism of the Heisenberg-Weyl group that maps maximal commutative subgroups to maximal commutative subgroups. If a symplectic operator $\mathcal{W}$ maps an eigenvector $\mathbf{v}$ to $\mathcal{W}\mathbf{v}$ then the isotropy group of $\mathcal{W}\mathbf{v}$ is simply $\mathcal{W}\mathcal{I}_{\mathbf{v}}\mathcal{W}^{-1}$. The self-ambiguity function of $\mathbf{v}$, denoted $\mathbf{A}_{\mathbf{v}}[k,l]$ is a function of a delay shift $k$ and a Doppler shift $l$, and is periodic modulo $MN$. In Section~\ref{subsec:gdaft} we introduce discrete affine Fourier transforms, prove that they normalize the discrete Heisenberg-Weyl group, and describe how they transform ambiguity functions. We show that a discrete affine Fourier transform $\mathcal{W}$ determines a $2 \times 2$ matrix $g$ with entries in $\mathbb{Z}_{MN}$ and determinant $1$ such that $\big|\mathbf{A}_{\mathcal{W}\mathbf{v}}[k,l]\big| = \big|\mathbf{A}_{\mathbf{v}}[g \cdot (k,l)]\big|$. Symplectic operators rotate ambiguity functions, and the development in Section~\ref{subsec:gdaft} is the discrete counterpart of the continuous theory used to construct waveform libraries~\cite{Moran2004_wvf_lib,Moran2009_wvf_lib_survey,Evans2002,Moran2006}. These operators also map the optimal radar waveforms to counterparts with small peak-to-average power ratio (PAPR).

Section~\ref{sec:impl_radar} describes how to reduce the complexity of radar signal processing by taking advantage of underlying discrete structure in the radar waveform. Our baseline is traditional continuous radar processing (Figs.~\ref{fig:block_diag}(\subref{fig:cont_radar}) \&~\ref{fig:block_diag}(\subref{fig:disc_phase_coded_radar})) where the complexity of computing the radar cross-ambiguity function is $\mathcal{O}\big(B^2T^2\big)$ dominates signal processing complexity. We propose a discrete architecture that is based on choosing a radar waveform that is highly symmetric –- the isotropy subgroup of the waveform is a maximal commutative subgroup of the Heisenberg-Weyl group. The received waveform is matched filtered to optimize SNR, then sampled and periodized in the time-domain, resulting in a periodic discrete-time signal with period $BT = MN$. The DD domain locations of the targets are obtained by calculating the discrete cross-ambiguity between the discrete-time $MN$-periodic transmitted waveform and the sampled and periodized received signal. Utilizing a Zak-OTFS carrier (or its symplectic transformations) as a radar waveform enables low-complexity cross-ambiguity calculations at a \emph{sublinear complexity} of $\mathcal{O}\big(BT\log T\big)$. Our discrete architecture makes it possible to improve delay and Doppler resolution for a fixed complexity budget. 

There is an additional benefit to working with discrete chirps modulo $MN$ within our framework. The self-ambiguity function of a continuous chirp is a line, whereas the self-ambiguity functions of our chirps are supported on non-contiguous sets in the DD domain. This enables simultaneous resolution in both delay and Doppler, which is not possible with continuous chirps. For more information about target resolution using chirp waveforms and Zak-OTFS waveforms, we refer the reader to~\cite{zakotfs_ltv}.

\textit{Notation:} $x$ denotes a complex scalar, $\mathbf{x}$ denotes a vector with $n$th entry $\mathbf{x}[n]$, and $\mathbf{X}$ denotes a matrix with $(n,m)$th entry $\mathbf{X}[n,m]$. $(\cdot)^{\ast}$ denotes complex conjugate, $(\cdot)^{\top}$ denotes transpose, $(\cdot)^{\mathsf{H}}$ denotes complex conjugate transpose and $\langle \mathbf{x}, \mathbf{y} \rangle = \sum_{n} \mathbf{x}[n] \mathbf{y}^{\ast}[n]$ denotes the inner product. Calligraphic font $\mathcal{X}$ denotes operators or sets, with usage clear from context. $\emptyset$ denotes the empty set. $\mathbb{Z}$ denotes the set of integers and $\mathbb{Z}_{N}$ the set of integers modulo $N$. $(a,b)$ denotes the greatest common divisor of two integers $a,b$, $\lfloor \cdot \rfloor$ and $\lceil \cdot \rceil$ denote the floor and ceiling functions. $(\cdot)_{{}_{N}}$ denotes the value modulo $N$ and $(\cdot)^{-1}_{{}_{N}}$ denotes the inverse modulo $N$. $\delta(\cdot)$ denotes the delta function, $\delta[\cdot]$ denotes the Kronecker delta function, $\mathds{1}{\{\cdot\}}$ denotes the indicator function, and $\mathbf{e}_{n}$ denotes the standard basis vector with value $1$ at location $n$ and zero elsewhere.

\section{Background}
\label{sec:prelim}


This Section prepares the ground for our development of discrete radar based on Zak-OTFS. We use the following identity from classical number theory repeatedly.



\begin{identity}[\cite{iwaniec2021_numtheorybook}, pg.18]
    \label{idty:sumrootsofunity}
    The sum of all $N$th roots of unity satisfies:
    \begin{align*}
        \sum_{n=0}^{N-1}e^{\frac{j2\pi}{N}kn} = \begin{cases}
        N \quad \text{if } \ k \equiv 0 \bmod{N} \\
        0 \quad \ \text{otherwise}
        \end{cases}.
    \end{align*}
\end{identity}


\subsection{Zak-OTFS Signaling}
\label{subsec:prelim_zakotfs}

Here we briefly review how Zak-OTFS signals are formed in the delay-Doppler (DD) domain and are transmitted in the time domain (TD). We refer the interested reader to~\cite{otfs_book,bitspaper1,bitspaper2} for a more detailed description of Zak-OTFS.

The Zak-OTFS carrier waveform is a pulse in the delay-Doppler (DD) domain, formally a quasi-periodic localized function termed the \emph{DD pulsone} defined by a delay period $\tau_p$ and a Doppler period $\nu_p$ where $\tau_p \nu_p = 1$. The DD pulsone occupies infinite time and bandwidth. To enable practical implementation, the DD pulsone is limited to a time interval $T$ and a bandwidth $B$ via DD domain pulse shaping (filtering). The resulting TD representation of the filtered signal is a train of narrow impulses modulated by a sinusoid, with time duration $T$ and bandwidth $B$ inversely proportional to the Doppler and delay spreads of the filter respectively.

Information symbols in Zak-OTFS are transmitted using non-overlapping DD domain pulsones spaced $\nicefrac{1}{B}$ apart along the delay axis and $\nicefrac{1}{T}$ apart along the Doppler axis. Since each pulsone repeats quasi-periodically, this corresponds to mounting information on $M = \nicefrac{\tau_p}{\nicefrac{1}{B}} = B\tau_p$ distinct locations along delay and $N = \nicefrac{\nu_p}{\nicefrac{1}{T}} = T\nu_p$ distinct locations along Doppler. Thus, the number of distinct non-overlapping information carriers in Zak-OTFS is the time-bandwidth product $BT = MN$. The $MN$ information symbols are mounted on pulsones in the DD domain as:
\begin{align}
    \label{eq:sys_model1}
    x(\tau, \nu) = \sum_{k=0}^{M-1} \sum_{l=0}^{N-1} \mathbf{X}[k, l]\sum_{c,d\in\mathbb{Z}} e^{\frac{j2\pi}{N}dl}\delta\left(\tau - \frac{k\tau_p}{M}-d\tau_p\right)\delta\left(\nu -\frac{l\nu_p}{N}-c\nu_p\right),
\end{align}
where $\mathbf{X}\in \mathbb{C}^{M\times N}$ denotes the $M \times N$ DD array of information symbols, $\tau$ and $\nu$ are delay and Doppler axes respectively, and $\delta(\cdot)$ is the delta function. The continuous TD representation of the DD signal in~\eqref{eq:sys_model1} is via the inverse Zak transform:
\begin{align}
    \label{eq:sys_model2}
    x(t) = \mathcal{Z}^{-1}\{x(\tau,\nu)\}(t) = \sqrt{\tau_p}\int_0^{\nu_p}x(t, \nu) d\nu.
\end{align}

Substituting~\eqref{eq:sys_model1} in~\eqref{eq:sys_model2}:
\begin{align}
    \label{eq:sys_model3}
    x(t) = \sqrt{\tau_p}\sum_{k=0}^{M-1}\sum_{l=0}^{N-1}\mathbf{X}[k, l]\sum_{d\in\mathbb{Z}}e^{\frac{j2\pi}{N}dl}\delta\left(t-\frac{k\tau_p}{M}-d\tau_p\right).
\end{align}

Note that~\eqref{eq:sys_model3} is defined in terms of the DD array of information symbols $\mathbf{X}$. Formally, $\mathbf{X}$ is a \emph{quasi-periodic} array\footnote{An $M \times N$ array $\mathbf{X}$ is quasi-periodic if $\mathbf{X}[k+nM,l+mN] = e^{j2\pi\frac{nl}{N}} \mathbf{X}[k,l], \text{for all~} n,m \in \mathbb{Z}$.} (see~\cite{dzt,otfs_book,bitspaper1,bitspaper2} for more details). We now introduce the discrete Zak transform (DZT), which is a unitary map from $MN$-periodic discrete time sequences $\mathbf{x}$ to $M \times N$ quasi-periodic arrays $\mathbf{X}$ (see~\cite{dzt} for details). This fundamental equivalence makes it possible to implement signal processing functions in either the time domain or the delay-Doppler domain. 

\begin{definition}[\cite{dzt}]
    \label{def:dzt}
    The discrete Zak transform (DZT) maps an $MN$-periodic sequence $\mathbf{x}$ to a quasi-periodic $M \times N$ array $\mathbf{X}$ as follows:
    \begin{align*}
        \mathbf{X}[k, l] = \frac{1}{\sqrt{N}} \sum_{p = 0}^{N-1}\mathbf{x}[k+pM]e^{-\frac{j2\pi}{N} pl},
    \end{align*}
    where $k \in \mathbb{Z}_{M}$ is the delay and $l \in \mathbb{Z}_{N}$ is the Doppler index. The $M \times N$ quasi-periodic arrays $\mathbf{X}[k, l]$ form a Hilbert space, and inner products can be calculated on any $M \times N$ subarray. Similarly, the $MN$-periodic sequences $\mathbf{x}$ form a Hilbert space, and inner products can be calculated on any interval of length $MN$.
\end{definition}


Substituting Definition~\ref{def:dzt} in~\eqref{eq:sys_model3}:
\begin{align}
    \label{eq:sys_model5}
    x(t) &= \sqrt{\tau_p}\sum_{k=0}^{M-1}\sum_{l=0}^{N-1}\mathbf{X}[k, l]\sum_{d\in\mathbb{Z}}e^{\frac{j2\pi}{N}dl}\delta\left(t-\frac{k\tau_p}{M}-d\tau_p\right) \nonumber \\
    &= \sqrt{\frac{\tau_p}{N}} \sum_{k=0}^{M-1}\sum_{p = 0}^{N-1}\sum_{d\in\mathbb{Z}} \mathbf{x}[k+pM] \delta\left(t-\frac{k\tau_p}{M}-d\tau_p\right) \sum_{l=0}^{N-1} e^{\frac{j2\pi}{N}(d-p)l}.
\end{align}

From Identity~\ref{idty:sumrootsofunity}, the inner summation over $l$ vanishes unless $d \equiv p \bmod{N}$, when it takes the value $N$. Hence, we obtain:
\begin{align}
    \label{eq:sys_model6}
    x(t) &= \sqrt{N\tau_p} \sum_{k=0}^{M-1}\sum_{p = 0}^{N-1} \mathbf{x}[k+pM] \delta\left(t-\frac{k\tau_p}{M}-p\tau_p\right) \nonumber \\
    &= \sqrt{N\tau_p} \sum_{n=0}^{MN-1} \mathbf{x}[n] \delta\left(t-\frac{(n)_{{}_{M}}\tau_p}{M}-\bigg\lfloor \frac{n}{M} \bigg\rfloor \tau_p\right) \nonumber \\
    &= \sqrt{N\tau_p} \sum_{n=0}^{MN-1} \mathbf{x}[n] \delta\left(t-\frac{n\tau_p}{M}\right).
\end{align}

We can view the continuous TD signal $x(t)$ as a discrete time sequence $\mathbf{x}$ mounted on delta functions spaced at the delay resolution $\nicefrac{1}{B} = \nicefrac{\tau_p}{M}$. This fact is central to our proposed framework for discrete radar based on Zak-OTFS. 

\subsection{Discrete Radar System Model}
\label{subsec:prelim_radar}

At a high-level, the operation of a radar consists of transmitting a waveform, which is reflected by the scattering environment and its return is processed at the receiver~\cite{Woodward1953,Skolnik1980,Levanon2004}. Representing the scattering environment by a DD function $h(\tau,\nu)$ and the continuous TD transmit signal by $x(t)$, the continuous TD received signal is:
\begin{align}
    \label{eq:radar1}
    y(t) &= \iint h(\tau,\nu) x(t-\tau) e^{j2\pi\nu(t-\tau)} d\tau d\nu,
\end{align}
where we have neglected additive noise at the receiver. 

To derive the system model corresponding to a discrete radar, we map the transmit and received continuous TD signals to $MN$-periodic sequences by projecting the continuous TD signals onto the $\nicefrac{1}{B} = \nicefrac{\tau_p}{M}$-spaced delta function basis from~\eqref{eq:sys_model6}. Specifically, from~\eqref{eq:radar1} we compute:
\begin{align}
    \label{eq:radar2}
    \mathbf{y}[n] &= \int y(t) \delta\left(t-\frac{n\tau_p}{M}\right) dt \nonumber \\
    &= \iint h(\tau,\nu) \underbrace{\bigg(\int x(t-\tau) e^{j2\pi\nu(t-\tau)} \delta\left(t-\frac{n\tau_p}{M}\right) dt\bigg)}_{\mathsf{A}} d\tau d\nu.
\end{align}

To evaluate the integral $\mathsf{A}$ we substitute~\eqref{eq:sys_model6}:
\begin{align}
    \label{eq:radar3}
    \mathsf{A} &= \sqrt{N\tau_p} \sum_{m=0}^{MN-1} \mathbf{x}[m] \int e^{j2\pi\nu(t-\tau)} \delta\left(t-\tau-\frac{m\tau_p}{M}\right) \delta\left(t-\frac{n\tau_p}{M}\right) dt.
\end{align}

To evaluate the integral in~\eqref{eq:radar3}, we assume the delay $\tau$ and Doppler $\nu$ are integer multiples of the delay and Doppler resolutions, i.e., $\tau = \nicefrac{k\tau_p}{M}$ for some $k \in \mathbb{Z}_{MN}$, and $\nu = \nicefrac{l\nu_p}{N}$ for some $l \in \mathbb{Z}_{MN}$. Substituting into~\eqref{eq:radar3}: 
\begin{align}
    \label{eq:radar4}
    \mathsf{A} &= \sqrt{N\tau_p} \sum_{m=0}^{MN-1} \mathbf{x}[m] \int e^{\frac{j2\pi}{N}l\big(t\nu_p-\frac{k}{M}\big)} \delta\left(t-\frac{(m+k)\tau_p}{M}\right) \delta\left(t-\frac{n\tau_p}{M}\right) dt \nonumber \\
    &= \sqrt{N\tau_p} \mathbf{x}[(n-k)_{{}_{MN}}] e^{\frac{j2\pi}{MN}l(n-k)},
\end{align}
where the final expression follows from the conditions $(m+k) \equiv n \bmod{MN}$ and $t = \nicefrac{n\tau_p}{M}$ obtained from the delta functions. Substituting the value for $\mathsf{A}$ back into~\eqref{eq:radar2}:
\begin{align}
    \label{eq:radar5}
    \mathbf{y}[n] &= \sqrt{N\tau_p} \sum_{k,l \in \mathbb{Z}_{MN}} h\bigg(\frac{k\tau_p}{M},\frac{l\nu_p}{N}\bigg) \mathbf{x}[(n-k)_{{}_{MN}}] e^{\frac{j2\pi}{MN}l(n-k)} \nonumber \\
    &= \sum_{k,l \in \mathbb{Z}_{MN}} \mathbf{h}[k,l] \mathbf{x}[(n-k)_{{}_{MN}}] e^{\frac{j2\pi}{MN}l(n-k)},
\end{align}
where $\mathbf{h}[k,l] = \sqrt{N\tau_p} h\big(\nicefrac{k\tau_p}{M},\nicefrac{l\nu_p}{N}\big)$ is the scaled \& sampled version of the DD function $h(\tau,\nu)$ at multiples of the delay and Doppler resolutions. Thus, as per~\eqref{eq:radar5}, the output of a discrete radar is an $MN$-periodic sequence $\mathbf{y}$ given by the weighted sum of various delay-shifted and Doppler-modulated versions of the transmit sequence $\mathbf{x}$.

\textcolor{black}{\textit{Note:} Although our derivation above utilizes delta functions in the transmit waveform, in practice, the function $h(\tau,\nu)$ takes into account both pulse shaping and the physical scattering environment, which is referred to as the \emph{effective channel} in the literature~\cite{otfs_book,bitspaper1,bitspaper2}. Pulse shaping helps limit the time and bandwidth of the waveform to $T$ and $B$ respectively. Moreover, our framework is capable of considering paths in the physical scattering environment with \emph{fractional} delay and Doppler values as long as the effective channel is sampled at multiples of the delay and Doppler resolutions. For more details, we refer the interested reader to~\cite{otfs_book,bitspaper1,bitspaper2,Calderbank2025_isac,zakotfs_ltv,Mehrotra2025_WCLSpread}.}

\subsection{Ambiguity Functions}
\label{subsec:prelim_ambgfun}


Given the system model in~\eqref{eq:radar5}, radars generate a discrete image of the scattering environment by cross-correlating the received sequence $\mathbf{y}$ with different delay-shifted and Doppler-modulated versions of the transmit sequence $\mathbf{x}$. Formally, this is accomplished via the \emph{cross-ambiguity} function, defined below for sequences in the Hilbert space $\mathcal{H}$ of all complex-valued $MN$-periodic sequences. Note that since the sequences are $MN$-periodic, inner products can be calculated within any window of length $MN$.

\begin{definition}[\cite{benedetto_phasecoded}]
    \label{def:amb_fun}
    The \emph{cross-ambiguity function} of two unit-norm $MN$-periodic sequences $\mathbf{x},\mathbf{y} \in \mathcal{H}$ is defined as:
    \begin{align*}
        \mathbf{A}_{\mathbf{x},\mathbf{y}}[k,l] = \sum_{n=0}^{MN-1} \mathbf{x}[n] \mathbf{y}^{*}[(n-k)_{{}_{MN}}]e^{-\frac{j2\pi}{MN}l(n-k)}, 
    \end{align*}
    where $k,l \in \mathbb{Z}_{MN}$, the integers modulo $MN$. When $\mathbf{y} = \mathbf{x}$, $\mathbf{A}_{\mathbf{x},\mathbf{x}}[k, l]$ is referred to as the \emph{self-ambiguity function} of $\mathbf{x}$, and is abbreviated to $\mathbf{A}_{\mathbf{x}}[k, l]$ for brevity.
\end{definition}

Radars represent the scattering environment by calculating from~\eqref{eq:radar5}:
\begin{align}
    \label{eq:radar6}
    \widehat{\mathbf{h}}[k,l] &= \mathbf{A}_{\mathbf{y},\mathbf{x}}[k,l] \nonumber \\
    &= \sum_{n=0}^{MN-1} \mathbf{y}[n] \mathbf{x}^{*}[(n-k)_{{}_{MN}}]e^{-\frac{j2\pi}{MN}l(n-k)} \nonumber \\
    &= \sum_{k',l' \in \mathbb{Z}_{MN}} \mathbf{h}[k',l'] \sum_{n=0}^{MN-1} \mathbf{x}[(n-k')_{{}_{MN}}] \mathbf{x}^{*}[(n-k)_{{}_{MN}}] e^{\frac{j2\pi}{MN}\big[l'(n-k')-l(n-k)\big]} \nonumber \\
    &= \sum_{k',l' \in \mathbb{Z}_{MN}} \mathbf{h}[k',l'] \sum_{n'} \mathbf{x}[n'] \mathbf{x}^{*}[(n'-(k-k'))_{{}_{MN}}] e^{\frac{j2\pi}{MN}\big[(l'-l)n'+l(k-k')\big]} \nonumber \\
    &= \sum_{k',l' \in \mathbb{Z}_{MN}} \mathbf{h}[k',l'] e^{\frac{j2\pi}{MN}l'(k-k')} \mathbf{A}_{\mathbf{x}}[(k-k')_{{}_{MN}}, (l-l')_{{}_{MN}}].
\end{align}

The above expression shows that the image formed by the radar is essentially the DD function $\mathbf{h}[k,l]$ ``blurred'' by the (phase-scaled) self-ambiguity function of the transmitted sequence $\mathbf{x}$. Ideally, it would be desirable to choose a sequence $\mathbf{x}$ with a ``thumbtack-like'' self-ambiguity function. However, fundamental properties of the ambiguity function limit the amount of waveform shaping that is possible. Specifically, Moyal's Identity stated below shows that though the shape of the self-ambiguity function surface depends on the choice of waveform, the volume under the surface always remains fixed and equals the (scaled) energy of the waveform. 

\begin{identity}[Moyal's Identity~\cite{Moyal1949,Auslander1985,Miller1991,Moran2001_mathofradar,Moran2004_grouptheory_radar,Howard2006_HW}]
    \label{idty:moyal}
    The self-ambiguity functions of unit-norm $MN$-periodic sequences $\mathbf{x},\mathbf{y} \in \mathcal{H}$ satisfy:
    \begin{align*}
        \frac{1}{MN} \sum_{k = 0}^{MN-1} \sum_{l = 0}^{MN-1} \mathbf{A}_{\mathbf{x}}^{*}[k, l] \mathbf{A}_{\mathbf{y}}[k, l] = \big|\big\langle \mathbf{x}, \mathbf{y} \big\rangle\big|^{2}.
    \end{align*}

    A special case of the above relation corresponds to $\mathbf{y} = \mathbf{x}$:
    \begin{align*}
        \frac{1}{MN} \sum_{k = 0}^{MN-1} \sum_{l = 0}^{MN-1} \big|\mathbf{A}_{\mathbf{x}}[k, l]\big|^{2} = 1.
    \end{align*}
\end{identity}

A direct corollary of Moyal's Identity is the following.

\begin{corollary}
    \label{corr:moyal_setcard}
    Given a unit-norm $MN$-periodic sequence $\mathbf{x} \in \mathcal{H}$, there are at most $MN$ delay-Doppler indices $(k,l)$, $k \in \mathbb{Z}_{MN}$, $l \in \mathbb{Z}_{MN}$, for which $\big|\mathbf{A}_{\mathbf{x}}[k, l]\big| = 1$.
\end{corollary}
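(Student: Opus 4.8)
The plan is to combine the pointwise bound $\big|\mathbf{A}_{\mathbf{x}}[k,l]\big|\le 1$ with the $L^2$-normalization supplied by Moyal's Identity, and then to argue by a volume/counting estimate. First I would establish the pointwise bound: for fixed $(k,l)$, write
\begin{align*}
    \mathbf{A}_{\mathbf{x}}[k,l] = \sum_{n=0}^{MN-1} \mathbf{x}[n]\,\big(\mathbf{x}[(n-k)_{{}_{MN}}]\,e^{\frac{j2\pi}{MN}l(n-k)}\big)^{*} = \big\langle \mathbf{x}, \mathbf{x}_{k,l}\big\rangle,
\end{align*}
where $\mathbf{x}_{k,l}[n] = \mathbf{x}[(n-k)_{{}_{MN}}]\,e^{\frac{j2\pi}{MN}l(n-k)}$ is the delay-shifted, Doppler-modulated copy of $\mathbf{x}$. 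Since a delay shift modulo $MN$ permutes coordinates and the modulation has unit modulus in each coordinate, $\|\mathbf{x}_{k,l}\| = \|\mathbf{x}\| = 1$, so Cauchy--Schwarz gives $\big|\mathbf{A}_{\mathbf{x}}[k,l]\big| \le 1$, with equality only when $\mathbf{x}_{k,l}$ is a unit-modulus scalar multiple of $\mathbf{x}$ (this last observation is not needed here but clarifies which indices can be extremal).

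Next I would invoke the special case of Moyal's Identity (Identity~\ref{idty:moyal} with $\mathbf{y}=\mathbf{x}$), which asserts
\begin{align*}
    \sum_{k=0}^{MN-1}\sum_{l=0}^{MN-1} \big|\mathbf{A}_{\mathbf{x}}[k,l]\big|^{2} = MN.
\end{align*}
Let $S = \big\{(k,l)\in\mathbb{Z}_{MN}\times\mathbb{Z}_{MN} : \big|\mathbf{A}_{\mathbf{x}}[k,l]\big| = 1\big\}$ be the set of extremal indices. Since every summand is nonnegative, discarding all terms outside $S$ gives $|S| = \sum_{(k,l)\in S} 1 = \sum_{(k,l)\in S}\big|\mathbf{A}_{\mathbf{x}}[k,l]\big|^2 \le \sum_{k,l}\big|\mathbf{A}_{\mathbf{x}}[k,l]\big|^2 = MN$, which is exactly the claim.

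There is essentially no obstacle here: the statement is a direct corollary, and the only point requiring a line of justification is the pointwise inequality $\big|\mathbf{A}_{\mathbf{x}}[k,l]\big|\le 1$, which I would get from recognizing the cross-ambiguity as an inner product with a unit-norm Heisenberg--Weyl translate of $\mathbf{x}$ and applying Cauchy--Schwarz. If one prefers to avoid even that, the bound also follows directly from the triangle inequality applied to the defining sum together with $\sum_n |\mathbf{x}[n]|\,|\mathbf{x}[(n-k)_{{}_{MN}}]| \le 1$ by Cauchy--Schwarz on the two unit-norm vectors $(|\mathbf{x}[n]|)_n$ and $(|\mathbf{x}[(n-k)_{{}_{MN}}]|)_n$. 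Either way, the corollary is immediate once Moyal's Identity is in hand.
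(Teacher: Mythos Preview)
Your proof is correct and follows essentially the same approach as the paper: define the set of indices where $|\mathbf{A}_{\mathbf{x}}[k,l]|=1$, then bound its cardinality by the full sum $\sum_{k,l}|\mathbf{A}_{\mathbf{x}}[k,l]|^2 = MN$ using nonnegativity of the summands. Your additional justification of the pointwise bound $|\mathbf{A}_{\mathbf{x}}[k,l]|\le 1$ via Cauchy--Schwarz is correct but not actually needed for the counting argument (as you yourself note), and the paper omits it entirely.
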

\begin{proof}
    Let $\mathcal{S} = \big\{(k,l)\big| \big|\mathbf{A}_{\mathbf{x}}[k, l]\big| = 1 \big\}$ denote the set of delay-Doppler indices where the ambiguity function is unimodular. It follows from Identity~\ref{idty:moyal} that:
    \begin{align*}
        \frac{1}{MN} \sum_{(k,l) \in \mathcal{S}} \big|\mathbf{A}_{\mathbf{x}}[k, l]\big|^{2} = \frac{\mathsf{card}\big(\mathcal{S}\big)}{MN} \leq \frac{1}{MN} \sum_{k = 0}^{MN-1} \sum_{l = 0}^{MN-1} \big|\mathbf{A}_{\mathbf{x}}[k, l]\big|^{2} = 1.
    \end{align*}
\end{proof}

Later in Sections~\ref{subsec:eigenbasis} and~\ref{subsec:impl_radar_wvf} we demonstrate how the proposed framework for discrete radar based on Zak-OTFS enables achieving perfectly localized ambiguity functions within a subset of the DD grid $(k,l) \in \mathbb{Z}_{MN} \times \mathbb{Z}_{MN}$.

\section{A Group Theoretic Framework for Discrete Radar}
\label{sec:hw_group}

In this Section, we show how the basic discrete radar theory described in Sections~\ref{subsec:prelim_radar} and~\ref{subsec:prelim_ambgfun} can be studied under a more general group-theoretic framework. We show how the action of the scattering environment in~\eqref{eq:radar5} corresponds to the superposed actions of the discrete Heisenberg-Weyl group. We then derive various useful properties of the discrete Heisenberg-Weyl group. We explain how the theory applies to two prototypical examples, which we follow through the Section. In Section~\ref{sec:impl_radar} we explore how the theory developed in this Section enables: (i) designing radar waveforms with ideal self-ambiguity functions, (ii) computing images of the scattering environment at low complexity, and (iii) defining radar waveform libraries that can be implemented at low hardware complexity due to their small peak-to-average power ratios. 

For simplicity, we henceforth assume that the integers $M$ and $N$ introduced in Section~\ref{sec:prelim} are odd primes\footnote{\textcolor{black}{In this paper, we focus on the value of choosing a radar waveform to be an eigenvector of a maximal commutative subgroup (see Section~\ref{subsec:eigenbasis} for more details). When $M$, $N$ are allowed to be composite and to have common factors, more types of maximal commutative subgroups become possible, and the support of the self-ambiguity function becomes more varied. In the analysis, sums of $MN$th roots of unity need to be replaced by sums of $(\nicefrac{MN}{d})$th roots of unity, where $d = (M,N)$ is the greatest common divisor of $M$ and $N$. In summary, the principle does not change, but the details become more complicated.}} (unless mentioned otherwise). 

\subsection{The Discrete Heisenberg-Weyl Group}
\label{subsec:hw_group_def}

We begin by recalling the action of the scattering environment on the transmitted sequence in~\eqref{eq:radar5}, which is reproduced here for ease of reference:
\begin{align}
    \label{eq:hw_op}
    \mathbf{y}[n] &= \sum_{k,l \in \mathbb{Z}_{MN}} \mathbf{h}[k,l] \mathbf{x}[(n-k)_{{}_{MN}}] e^{\frac{j2\pi}{MN}l(n-k)}.
\end{align}

For a given $k,l \in \mathbb{Z}_{MN}$, we may represent the action of the scattering environment by a unitary operator that delay shifts and Doppler modulates any input sequence belonging to the Hilbert space $\mathcal{H}$ of $MN$-periodic sequences.



\begin{definition}
    \label{def:heis_op}
    \emph{Heisenberg operators} $\mathcal{D}_{(k,l)}$ act on sequences ${\mathbf{x} \in \mathcal{H}}$ as:
    \begin{equation*}
        \begin{gathered}
        \mathcal{D}_{(k,l)}: \mathcal{H} \rightarrow \mathcal{H}, k,l \in \mathbb{Z}_{MN}, \\
        (\mathcal{D}_{(k,l)} \mathbf{x})[n] = \mathbf{x}[(n-k)_{{}_{MN}}]e^{\frac{j2\pi}{MN}l(n-k)},
        \end{gathered}
    \end{equation*}
    for all $n \in \mathbb{Z}_{MN}$. 
\end{definition}

Heisenberg operators are unitary operators since: 
\begin{equation*}
    \Vert \mathcal{D}_{(k,l)} \mathbf{x} \Vert_{2}^{2} = \sum_{n \in \mathbb{Z}_{MN}} \big|\mathbf{x}[(n-k)_{{}_{MN}}] \big|^{2} = \Vert \mathbf{x} \Vert_{2}^{2}.
\end{equation*}

The collection of all (phase-scaled) Heisenberg operators forms a group.

\begin{theorem}
    \label{thm:hw_group}
    Let $\mathcal{H}_{MN}$ denote the collection of phase-scaled Heisenberg operators $\mathcal{D}_{(k,l)}$:
    \begin{align*}
        \mathcal{H}_{MN} &= \big\{e^{\frac{j2\pi}{MN}m} \mathcal{D}_{(k,l)} \big| k,l,m \in \mathbb{Z}_{MN}\big\}. 
    \end{align*}

    The set $\mathcal{H}_{MN}$ under operator composition $\circ$ forms the \emph{Heisenberg-Weyl group}, which can be equivalently defined in terms of $3 \times 3$ upper triangular matrices:
    \begin{align*}
        \mathcal{H}_{MN} &= \left\{\left.\begin{bmatrix}
            1 & l & m \\ 0 & 1 & k \\ 0 & 0 & 1
        \end{bmatrix} \right| k,l,m \in \mathbb{Z}_{MN} \right\},
    \end{align*}
    under matrix multiplication modulo $MN$.
\end{theorem}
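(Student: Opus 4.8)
The plan is to reduce the whole statement to one composition identity for Heisenberg operators, and then read off the group axioms and the matrix description from that identity. First I would apply Definition~\ref{def:heis_op} twice: for $\mathbf{x}\in\mathcal H$ and $k_1,l_1,k_2,l_2\in\mathbb Z_{MN}$, expanding $\big(\mathcal D_{(k_1,l_1)}(\mathcal D_{(k_2,l_2)}\mathbf x)\big)[n]$ gives $\mathbf x[(n-k_1-k_2)_{{}_{MN}}]$ times a product of two exponentials, and splitting off the factor $e^{\frac{j2\pi}{MN}l_1k_2}$ lets one recombine the rest into $e^{\frac{j2\pi}{MN}(l_1+l_2)(n-k_1-k_2)}$. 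This yields
\[
\mathcal D_{(k_1,l_1)}\circ\mathcal D_{(k_2,l_2)} \;=\; e^{\frac{j2\pi}{MN}\,l_1k_2}\,\mathcal D_{(k_1+k_2,\,l_1+l_2)},
\]
all indices modulo $MN$, and multiplying by scalar phases gives the general law
\[
\big(e^{\frac{j2\pi}{MN}m_1}\mathcal D_{(k_1,l_1)}\big)\circ\big(e^{\frac{j2\pi}{MN}m_2}\mathcal D_{(k_2,l_2)}\big) \;=\; e^{\frac{j2\pi}{MN}(m_1+m_2+l_1k_2)}\,\mathcal D_{(k_1+k_2,\,l_1+l_2)}.
\]
In particular $\mathcal H_{MN}$ is closed under $\circ$, the identity operator $\mathcal D_{(0,0)}$ lies in $\mathcal H_{MN}$, associativity is inherited from composition of maps on $\mathcal H$, and from the law the inverse of $e^{\frac{j2\pi}{MN}m}\mathcal D_{(k,l)}$ is $e^{\frac{j2\pi}{MN}(lk-m)}\mathcal D_{(-k,-l)}$; hence $\mathcal H_{MN}$ is a group.

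Next I would pin down the matrix model. Define
\[
\phi:\; e^{\frac{j2\pi}{MN}m}\mathcal D_{(k,l)}\;\longmapsto\;\begin{bmatrix}1&l&m\\0&1&k\\0&0&1\end{bmatrix},
\]
and check it is well defined, i.e.\ that the parametrization $(k,l,m)\mapsto e^{\frac{j2\pi}{MN}m}\mathcal D_{(k,l)}$ is injective. This I would do by testing on basis sequences: applying the operator to $\mathbf e_0$ produces a sequence supported only at index $k$ with value $e^{\frac{j2\pi}{MN}m}$ there, which recovers $k$ and $m$ modulo $MN$; applying it to $\mathbf e_1$ and comparing the peak value then recovers $e^{\frac{j2\pi}{MN}l}$, hence $l$ modulo $MN$. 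So $\phi$ is a bijection from $\mathcal H_{MN}$ onto the set $G$ of $3\times3$ matrices of the stated upper-unitriangular form over $\mathbb Z_{MN}$, which is readily checked to be a group under matrix multiplication modulo $MN$ (closure, the identity, and inverses within this set are immediate).

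Finally I would carry out the one matrix multiplication needed: the product of $\phi\big(e^{\frac{j2\pi}{MN}m_1}\mathcal D_{(k_1,l_1)}\big)$ and $\phi\big(e^{\frac{j2\pi}{MN}m_2}\mathcal D_{(k_2,l_2)}\big)$ modulo $MN$ has superdiagonal entries $l_1+l_2$ and $k_1+k_2$ and top-right entry $m_1+m_2+l_1k_2$, which is exactly $\phi$ applied to the right-hand side of the general composition law. Thus $\phi$ intertwines $\circ$ with matrix multiplication modulo $MN$, and being a bijective homomorphism onto the group $G$ it is an isomorphism $\mathcal H_{MN}\cong G$, which establishes the equivalent description. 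I expect the only real obstacle to be the phase bookkeeping in the very first step — getting the cocycle term $l_1k_2$ (rather than, say, $l_2k_1$) in the right place — together with the injectivity check for $\phi$; the remaining steps are formal.
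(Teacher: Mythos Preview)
Your proof is correct and follows the same skeleton as the paper's --- verify the four group axioms via the composition law with cocycle $l_1k_2$, obtaining the same identity, inverse, and product formulas. The main difference is in direction: you derive the composition law directly from Definition~\ref{def:heis_op} on the operator side and only afterwards build the bijection $\phi$ to the matrix group, checking injectivity of $(k,l,m)\mapsto e^{\frac{j2\pi}{MN}m}\mathcal D_{(k,l)}$ by probing with $\mathbf e_0,\mathbf e_1$. The paper instead works from the matrix side --- it multiplies the $3\times3$ matrices (your final step) and reads off closure and inverses from that --- effectively taking the operator/matrix correspondence as given rather than proving it. Your route is slightly longer but more self-contained, since it actually establishes the ``equivalently defined'' clause in the theorem statement rather than presupposing it; the paper's route is shorter but leaves the operator-to-matrix identification implicit.
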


\begin{proof}
    We verify that the Heisenberg-Weyl group satisfies all group axioms.

    \textbf{(Closure)} For elements $h_1 = e^{\frac{j2\pi}{MN}m_1} \mathcal{D}_{(k_1,l_1)} \in \mathcal{H}_{MN}$, $h_2 = e^{\frac{j2\pi}{MN}m_2} \mathcal{D}_{(k_2,l_2)} \in \mathcal{H}_{MN}$, we compute:
    \begin{align}
        \label{eq:hw2a}
        (h_1 \circ h_2) &= \begin{bmatrix}
            1 & l_1 & m_1 \\ 0 & 1 & k_1 \\ 0 & 0 & 1
        \end{bmatrix} \begin{bmatrix}
            1 & l_2 & m_2 \\ 0 & 1 & k_2 \\ 0 & 0 & 1
        \end{bmatrix} \nonumber \\
        &= \begin{bmatrix}
            1 & (l_1+l_2) & (m_1+m_2+l_1 k_2) \\ 0 & 1 & (k_1+k_2) \\ 0 & 0 & 1
        \end{bmatrix}.
    \end{align}
    
    Since $(h_1 \circ h_2) = e^{\frac{j2\pi}{MN}(m_1+m_2+l_1 k_2)} \mathcal{D}_{((k_1+k_2)_{MN},(l_1+l_2)_{MN})} \in \mathcal{H}_{MN}$, the group is closed.
    
    \textbf{(Associativity)} Since operator composition (equivalently, matrix multiplication) is associative, $h_1 \circ (h_2 \circ h_3) = (h_1 \circ h_2) \circ h_3$.
    
    
    \textbf{(Identity)} The identity element is given by $\mathcal{D}_{(0,0)} \in \mathcal{H}_{MN}$, which corresponds to the identity matrix in the matrix description.
    
    
    

    \textbf{(Inverse)} Equating~\eqref{eq:hw2a} to the identity matrix and solving for $h_1$ for a given $h_2$, we obtain the conditions:
    \begin{align}
        \label{eq:hw2b}
        l_1 + l_2 \equiv 0 \bmod{MN} &\implies l_1 = (-l_2)_{{}_{MN}}, \nonumber \\
        k_1 + k_2 \equiv 0 \bmod{MN} &\implies k_1 = (-k_2)_{{}_{MN}}, \nonumber \\
        m_1 + m_2 -l_2 k_2 \equiv 0 \bmod{MN} &\implies m_1 = (l_2 k_2 - m_2)_{{}_{MN}}.
    \end{align}
    
    
    Thus, $e^{\frac{j2\pi}{MN}(lk-m)} \mathcal{D}_{((-k)_{MN},(-l)_{MN})} \in \mathcal{H}_{MN}$ is the inverse of $e^{\frac{j2\pi}{MN}m} \mathcal{D}_{(k,l)} \in \mathcal{H}_{MN}$.
\end{proof}

The following Lemma determines when two elements of $\mathcal{H}_{MN}$ commute.

\begin{lemma}
    \label{lmm:hw_not_comm}
    The symplectic form $l_1 k_2 \equiv l_2 k_1 \bmod{MN}$ determines when two elements $h_1 = e^{\frac{j2\pi}{MN}m_1} \mathcal{D}_{(k_1,l_1)} \in \mathcal{H}_{MN}$ and $h_2 = e^{\frac{j2\pi}{MN}m_2} \mathcal{D}_{(k_2,l_2)} \in \mathcal{H}_{MN}$ commute. The symplectic form is not identically zero and the Heisenberg-Weyl group is \emph{not commutative} in general. 
\end{lemma}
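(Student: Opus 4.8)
The plan is to read the commutator directly off the composition law \eqref{eq:hw2a} established in the proof of Theorem~\ref{thm:hw_group}. Applying that formula once with the factors in the order $h_1 \circ h_2$ and once in the order $h_2 \circ h_1$, I get two $3\times 3$ upper-triangular matrices (equivalently, two phase-scaled Heisenberg operators) that agree in every entry except the $(1,3)$ position: the first carries $m_1 + m_2 + l_1 k_2$ there, the second $m_2 + m_1 + l_2 k_1$. Hence $h_1 \circ h_2 = h_2 \circ h_1$ if and only if $l_1 k_2 \equiv l_2 k_1 \bmod{MN}$, which is exactly the claimed symplectic condition. Equivalently, using the inverse formula \eqref{eq:hw2b} one can compute the group commutator $h_1 \circ h_2 \circ h_1^{-1} \circ h_2^{-1}$ and check that it equals $e^{\frac{j2\pi}{MN}(l_1 k_2 - l_2 k_1)}\mathcal{D}_{(0,0)}$, a central element that is the identity precisely when $l_1 k_2 \equiv l_2 k_1 \bmod MN$.

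For the second assertion I only need to exhibit one non-commuting pair. Taking $h_1 = \mathcal{D}_{(1,0)}$ and $h_2 = \mathcal{D}_{(0,1)}$ gives $l_1 k_2 - l_2 k_1 = 0 - 1 = -1 \not\equiv 0 \bmod{MN}$ since $MN > 1$; therefore these two Heisenberg operators do not commute and $\mathcal{H}_{MN}$ is non-abelian. This can also be seen directly at the level of sequences: $(\mathcal{D}_{(1,0)}\mathcal{D}_{(0,1)}\mathbf{x})[n]$ and $(\mathcal{D}_{(0,1)}\mathcal{D}_{(1,0)}\mathbf{x})[n]$ differ by the constant phase $e^{-\frac{j2\pi}{MN}}$, using Definition~\ref{def:heis_op}.

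The only point needing a bit of care — and it is hardly an obstacle — is that the $3\times 3$ matrix picture must faithfully track the global phase as well as the delay/Doppler shift, i.e.\ that $e^{\frac{j2\pi}{MN}m}\mathcal{D}_{(k,l)} \leftrightarrow \big[\begin{smallmatrix}1 & l & m \\ 0 & 1 & k \\ 0 & 0 & 1\end{smallmatrix}\big]$ is a group isomorphism, so that equality of matrices modulo $MN$ is equality of operators. This is precisely the content of the closure computation \eqref{eq:hw2a}, so it may simply be invoked. I would also remark in passing that $(h_1,h_2)\mapsto l_1 k_2 - l_2 k_1 \bmod MN$ is the alternating bilinear (symplectic) form on $\mathbb{Z}_{MN}^2$, and that its being nondegenerate is what forces the commutator structure analyzed later when we classify maximal commutative subgroups in Section~\ref{subsec:comm_subgroups}.
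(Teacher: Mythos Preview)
Your proposal is correct and follows essentially the same route as the paper: both compute $h_1\circ h_2$ and $h_2\circ h_1$ from the composition law \eqref{eq:hw2a} and read off that they differ by the central phase $e^{\frac{j2\pi}{MN}(l_1k_2-l_2k_1)}$, yielding the commutation criterion. Your explicit non-commuting pair $\mathcal{D}_{(1,0)},\mathcal{D}_{(0,1)}$ and the remark on faithfulness of the matrix model are welcome additions that the paper leaves implicit.
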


\begin{proof}
    We may express~\eqref{eq:hw2a} as:
    \begin{align}
        \label{eq:hw3a}
        (h_1 \circ h_2) &= \begin{bmatrix}
            1 & (l_1+l_2) & (m_1+m_2+l_1 k_2) \\ 0 & 1 & (k_1+k_2) \\ 0 & 0 & 1
        \end{bmatrix} \nonumber \\
        &= \begin{bmatrix}
            1 & 0 & l_1 k_2 \\ 0 & 1 & 0 \\ 0 & 0 & 1
        \end{bmatrix} \begin{bmatrix}
            1 & (l_1+l_2) & (m_1+m_2) \\ 0 & 1 & (k_1+k_2) \\ 0 & 0 & 1
        \end{bmatrix} \nonumber \\
        &= e^{\frac{j2\pi}{MN}l_1 k_2} e^{\frac{j2\pi}{MN}(m_1+m_2)} \mathcal{D}_{((k_1+k_2)_{MN},(l_1+l_2)_{MN})},
    \end{align}
    and thus:
    \begin{align}
        \label{eq:hw3b}
        (h_1 \circ h_2) = e^{\frac{j2\pi}{MN}(l_1 k_2 - l_2 k_1)} (h_2 \circ h_1).
    \end{align}

    Hence, $(h_1 \circ h_2) = (h_2 \circ h_1)$ when $l_1 k_2 \equiv l_2 k_1 \bmod{MN}$.
\end{proof}

We now place the Heisenberg-Weyl group defined above in the context of the discrete radar theory described in Sections~\ref{subsec:prelim_radar} and~\ref{subsec:prelim_ambgfun}. Note that the cross-ambiguity function defined in Definition~\ref{def:amb_fun} can be expressed in terms of Heisenberg operators $\mathcal{D}_{(k,l)}$ and Heisenberg-Weyl group elements $h = e^{\frac{j2\pi}{MN}m} \mathcal{D}_{(k,l)} \in \mathcal{H}_{MN}$ as:
\begin{align}
    \label{eq:amb_fun_hw}
    \mathbf{A}_{\mathbf{x},\mathbf{y}}[k,l] = \big\langle \mathbf{x}, \mathcal{D}_{(k,l)} \mathbf{y} \big\rangle = e^{\frac{j2\pi}{MN}m} \big\langle \mathbf{x}, h \mathbf{y} \big\rangle.
\end{align}

In other words, the cross-ambiguity function corresponds (up to a phase) to the inner product of a sequence $\mathbf{x}$ with the action of a Heisenberg-Weyl group element $h \in \mathcal{H}_{MN}$ on the sequence $\mathbf{y}$. Hence, properties of the Heisenberg-Weyl group elements directly impact the behavior of the discrete radar ambiguity function. After a brief detour in Section~\ref{subsubsec:hw_group_dd} where we define an equivalent representation of the Heisenberg-Weyl group in the delay-Doppler domain, we investigate various properties satisfied by Heisenberg-Weyl group elements in Sections~\ref{subsec:comm_subgroups},~\ref{subsec:eigenbasis} and~\ref{subsec:gdaft}.



\subsubsection{The Heisenberg-Weyl Group in the Delay-Doppler Domain}
\label{subsubsec:hw_group_dd}

In Theorem~\ref{thm:hw_group}, we defined the Heisenberg-Weyl group in terms of $MN$-periodic sequences. We now derive an equivalent representation of the Heisenberg-Weyl group in the \emph{delay-Doppler domain} by mapping $MN$-periodic sequences to $M \times N$ quasi-periodic arrays through the DZT (Definition~\ref{def:dzt}). This equivalence makes it possible to implement signal processing functions in either the time or delay-Doppler domains.

\begin{lemma}
    \label{lmm:dzt_unitary}
    The DZT in Definition~\ref{def:dzt} is a unitary transform that preserves inner products.
\end{lemma}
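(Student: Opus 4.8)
The plan is to verify the isometry directly from the defining formula and then invoke the elementary fact that an inner-product-preserving linear map between two Hilbert spaces of equal finite dimension is automatically unitary. Both the space of $MN$-periodic sequences and the space of $M\times N$ quasi-periodic arrays have dimension $MN$, so it suffices to show that $\langle\mathbf{X},\mathbf{Y}\rangle=\langle\mathbf{x},\mathbf{y}\rangle$ whenever $\mathbf{X},\mathbf{Y}$ are the DZTs of $\mathbf{x},\mathbf{y}$, where the array inner product is evaluated on the fundamental domain $k\in\mathbb{Z}_M$, $l\in\mathbb{Z}_N$.

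First I would substitute Definition~\ref{def:dzt} into $\langle\mathbf{X},\mathbf{Y}\rangle=\sum_{k=0}^{M-1}\sum_{l=0}^{N-1}\mathbf{X}[k,l]\,\mathbf{Y}^{*}[k,l]$, which yields a quadruple sum over $k$ and over two Doppler-summation indices $p,q\in\{0,\dots,N-1\}$, with the factor $\frac{1}{N}\sum_{l=0}^{N-1}e^{\frac{j2\pi}{N}(q-p)l}$ appearing inside. By Identity~\ref{idty:sumrootsofunity}, this geometric sum equals $N$ when $p\equiv q\bmod N$ and $0$ otherwise; since $p,q$ both lie in $\{0,\dots,N-1\}$, this forces $p=q$, and the expression collapses to $\sum_{k=0}^{M-1}\sum_{p=0}^{N-1}\mathbf{x}[k+pM]\,\mathbf{y}^{*}[k+pM]$. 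Re-indexing along the bijection $(k,p)\mapsto n=k+pM$ from $\{0,\dots,M-1\}\times\{0,\dots,N-1\}$ onto $\{0,\dots,MN-1\}$ turns this into $\sum_{n=0}^{MN-1}\mathbf{x}[n]\,\mathbf{y}^{*}[n]=\langle\mathbf{x},\mathbf{y}\rangle$, completing the isometry argument. As an alternative to the dimension count, one can exhibit the inverse map $\mathbf{x}[k+pM]=\frac{1}{\sqrt{N}}\sum_{l=0}^{N-1}\mathbf{X}[k,l]\,e^{\frac{j2\pi}{N}pl}$ explicitly and verify, again via Identity~\ref{idty:sumrootsofunity}, that it composes with the DZT to the identity in both orders.

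I do not expect a genuine obstacle here; the computation is routine. The only point that warrants care is the legitimacy of evaluating the array inner product on the fundamental $M\times N$ window: this relies on the DZT output being quasi-periodic, so that $|\mathbf{X}[k,l]|$ is $M\times N$-periodic and the choice of window is immaterial. I would therefore include a one-line check that the defining formula does produce a quasi-periodic array, which follows from $e^{-\frac{j2\pi}{N}p(l+N)}=e^{-\frac{j2\pi}{N}pl}$ together with the $MN$-periodicity of $\mathbf{x}$ and a shift of the summation index $p$.
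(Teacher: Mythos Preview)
Your proposal is correct, but the paper proceeds differently. Rather than substituting the DZT formula directly into $\langle\mathbf{X},\mathbf{Y}\rangle$, the paper exhibits a concrete orthonormal basis $\mathbf{v}_{r,s}$ of windowed complex exponentials on the sequence side, computes the DZT of each basis vector explicitly, and then checks that the resulting arrays $\mathbf{V}_{r,s}$ are again orthonormal; unitarity follows because a linear map sending an orthonormal basis to an orthonormal basis is unitary. Your route is more economical---one application of Identity~\ref{idty:sumrootsofunity} and a re-indexing---and avoids introducing an auxiliary basis that is not reused elsewhere in the paper. The paper's route, on the other hand, makes the DZT's action visible on an explicit family of signals and foreshadows the style of eigenbasis computations that appear later (e.g., Lemma~\ref{lmm:eigenvec_amb_ex1_pulsone}), so it has some expository value even though it is longer.
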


\begin{proof}
Consider the following basis for the Hilbert space $\mathcal{H}$ of $MN$-periodic sequences:
\begin{align}
    \label{eq:timebasis}
    \mathbf{v}_{r,s}[n] = \begin{cases}
        \frac{1}{\sqrt{M}} e^{\frac{j2\pi}{M} sn}, \quad \text{if } rM \leq n < (r+1)M \\ 
        0, \quad \text{otherwise},
    \end{cases}
\end{align}
where $\mathbf{v}_{r,s}[n]$ is repeated periodically with period $MN$, which closely resembles the discrete time version of the pulsone basis element presented in~\eqref{eq:sys_model3}. The basis is orthonormal since:
\begin{align*}
    \big\langle \mathbf{v}_{r,s}, \mathbf{v}_{r',s'} \big\rangle &= \frac{1}{M} \sum_{n = 0}^{MN-1} \mathds{1}_{\{rM \leq n < (r+1)M\}} e^{\frac{j2\pi}{M} sn} \mathds{1}_{\{r'M \leq n < (r'+1)M\}} e^{-\frac{j2\pi}{M} s'n} \nonumber \\
    &= \delta[r-r'] \frac{1}{M} \sum_{n = rM}^{(r+1)M-1} e^{\frac{j2\pi}{M} (s-s')n} \nonumber \\
    &= \delta[r-r']  \delta[s-s'],
\end{align*}
where the last expression follows from Identity~\ref{idty:sumrootsofunity}. Per Definition~\ref{def:dzt}, the DZT of~\eqref{eq:timebasis} is:
\begin{align}
    \label{eq:timebasis_dzt}
    \mathbf{V}_{r,s}[k,l] &= \frac{1}{\sqrt{N}} \sum_{p = 0}^{N-1}\mathbf{v}_{r,s}[k+pM] e^{-\frac{j2\pi}{N} pl} \nonumber \\
    &= \frac{1}{\sqrt{MN}} \sum_{p = 0}^{N-1} e^{\frac{j2\pi}{M}s(k+pM)} e^{-\frac{j2\pi}{N} pl} \mathds{1}_{\{rM \leq k+pM < (r+1)M\}} \nonumber \\
    &= \frac{1}{\sqrt{MN}} \sum_{p = 0}^{N-1} e^{\frac{j2\pi}{M}s(k+pM)} \delta\bigg[p-r+\bigg\lfloor\frac{k}{M}\bigg\rfloor\bigg] e^{-\frac{j2\pi}{N} pl} \nonumber \\
    &= \frac{1}{\sqrt{MN}} e^{\frac{j2\pi}{M}s(k+(r-\lfloor\frac{k}{M}\rfloor)M)} e^{-\frac{j2\pi}{N} (r-\lfloor\frac{k}{M}\rfloor)l} \nonumber \\ 
    &= \frac{1}{\sqrt{MN}} e^{\frac{j2\pi}{M}sk} e^{-\frac{j2\pi}{N} (r-\lfloor\frac{k}{M}\rfloor)l}.
\end{align}

The obtained basis is clearly orthonormal since:
\begin{align*}
   \big\langle \mathbf{V}_{r,s}, \mathbf{V}_{r',s'} \big\rangle &= \frac{1}{MN} \sum_{k = 0}^{M-1} \sum_{l = 0}^{N-1} e^{j\frac{2\pi}{M}(s-s')k} e^{-j\frac{2\pi}{N}(r-r')l} \nonumber \\ 
   &= \delta[r-r'] \delta[s-s'],
\end{align*}
where the final result follows from Identity~\ref{idty:sumrootsofunity}. Hence, the DZT preserves inner products since it maps the orthonormal basis $\mathbf{v}_{r,s}$ to the orthonormal basis $\mathbf{V}_{r,s}$.
\end{proof}

Let $\mathcal{H}_{\mathsf{DD}}$ denote the Hilbert space of all complex-valued $M \times N$ quasi-periodic arrays formed by the DZT of $MN$-periodic sequences in $\mathcal{H}$. We now define Heisenberg operators, similar to Definition~\ref{def:heis_op}, in the delay-Doppler (DD) domain.

\begin{definition}
    \label{def:heis_op_dd}
    DD Heisenberg operators $\mathcal{D}_{(k,l)}^{\mathsf{DD}}$ are unitary operators that act on quasi-periodic arrays $\mathbf{X} \in \mathcal{H}_{\mathsf{DD}}$ as:
    \begin{equation*}
        \begin{gathered}
        \mathcal{D}_{(k,l)}^{\mathsf{DD}}: \mathcal{H}_{\mathsf{DD}} \rightarrow \mathcal{H}_{\mathsf{DD}}, k,l \in \mathbb{Z}_{MN}, \\
        (\mathcal{D}_{(k,l)}^{\mathsf{DD}} \mathbf{X})[k',l'] = \mathbf{X}[(k'-k)_{{}_{M}},(l'-l)_{{}_{N}}] e^{\frac{j2\pi}{N}(l'-l)\big\lfloor \frac{k'-k}{M} \big\rfloor} e^{\frac{j2\pi}{MN}l(k'-k)},
        \end{gathered}
    \end{equation*}
    for all $k' \in \mathbb{Z}_{M}, l' \in \mathbb{Z}_{N}$.
\end{definition}

We now define the Heisenberg-Weyl group in the delay-Doppler domain, and show that the group representation in the delay-Doppler domain is unitarily equivalent to the Heisenberg-Weyl group representation given in Theorem~\ref{thm:hw_group}.

\begin{theorem}
    \label{thm:hw_group_dd}
    Let $\mathcal{H}_{MN}^{\mathsf{DD}}$ denote the collection of phase-scaled DD Heisenberg operators $\mathcal{D}_{(k,l)}^{\mathsf{DD}}$:
    \begin{align*}
        \mathcal{H}_{MN}^{\mathsf{DD}} &= \big\{e^{\frac{j2\pi}{MN}m} \mathcal{D}_{(k,l)}^{\mathsf{DD}} \big| k,l,m \in \mathbb{Z}_{MN} \big\}. 
    \end{align*}

    The set $\mathcal{H}_{MN}^{\mathsf{DD}}$ under operator composition $\circ$ forms the Heisenberg-Weyl group in the delay-Doppler domain, which is unitarily equivalent to the Heisenberg-Weyl group representation $\mathcal{H}_{MN}$ in Theorem~\ref{thm:hw_group} over the Hilbert space of $MN$-periodic sequences.
\end{theorem}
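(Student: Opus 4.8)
The plan is to avoid re-deriving the group axioms in the delay-Doppler domain from scratch (the operator formula in Definition~\ref{def:heis_op_dd} is too unwieldy for that) and instead transport the entire structure of Theorem~\ref{thm:hw_group} through the DZT. The single fact that does all the work is the \emph{intertwining identity}: writing $\mathcal{Z}$ for the DZT of Definition~\ref{def:dzt},
\[
\mathcal{Z}\,\mathcal{D}_{(k,l)} \;=\; \mathcal{D}_{(k,l)}^{\mathsf{DD}}\,\mathcal{Z} \qquad \text{for all } k,l \in \mathbb{Z}_{MN}.
\]
I would first observe that this identity suffices. Since $\mathcal{Z}$ is unitary (Lemma~\ref{lmm:dzt_unitary}) it is invertible, so the identity reads $\mathcal{D}_{(k,l)}^{\mathsf{DD}} = \mathcal{Z}\,\mathcal{D}_{(k,l)}\,\mathcal{Z}^{-1}$; in particular each $\mathcal{D}_{(k,l)}^{\mathsf{DD}}$ maps $\mathcal{H}_{\mathsf{DD}} = \mathrm{ran}(\mathcal{Z})$ to itself and is unitary. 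Because scalar phases commute with $\mathcal{Z}$, the map $\Phi\big(e^{\frac{j2\pi}{MN}m}\mathcal{D}_{(k,l)}\big) := e^{\frac{j2\pi}{MN}m}\mathcal{D}_{(k,l)}^{\mathsf{DD}}$ is just conjugation $h \mapsto \mathcal{Z}h\mathcal{Z}^{-1}$ restricted to $\mathcal{H}_{MN}$. Conjugation by a fixed invertible operator is an injective homomorphism for operator composition, so its restriction to the group $\mathcal{H}_{MN}$ (Theorem~\ref{thm:hw_group}) is a group isomorphism onto its image $\mathcal{Z}\mathcal{H}_{MN}\mathcal{Z}^{-1} = \mathcal{H}_{MN}^{\mathsf{DD}}$; this shows simultaneously that $\mathcal{H}_{MN}^{\mathsf{DD}}$ is a group, that $\Phi$ is a group isomorphism, and, since $\mathcal{Z}$ is unitary, that the two representations are unitarily equivalent.

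It then remains only to verify the intertwining identity, which is the one genuinely computational step. I would fix $\mathbf{x} \in \mathcal{H}$, put $\mathbf{z} = \mathcal{D}_{(k,l)}\mathbf{x}$ so that $\mathbf{z}[n] = \mathbf{x}[(n-k)_{{}_{MN}}]e^{\frac{j2\pi}{MN}l(n-k)}$, and evaluate $(\mathcal{Z}\mathbf{z})[k',l']$ from Definition~\ref{def:dzt} at $k' \in \mathbb{Z}_M$, $l' \in \mathbb{Z}_N$. Setting $r = (k'-k)_{{}_{M}}$ and $q = \lfloor (k'-k)/M \rfloor$, so that $k'+pM-k = r+(p+q)M$, the three manipulations are: (i) write $p+q = (p+q)_{{}_{N}} + sN$ and use $0 \le r + (p+q)_{{}_{N}}M < MN$ to replace $(k'+pM-k)_{{}_{MN}}$ by $r+(p+q)_{{}_{N}}M$, the extra term $sNM$ dropping out of the phase $e^{\frac{j2\pi}{MN}l(\cdot)}$ because $ls \in \mathbb{Z}$; (ii) split $e^{\frac{j2\pi}{MN}l(r+(p+q)_{{}_{N}}M)} = e^{\frac{j2\pi}{MN}lr}\,e^{\frac{j2\pi}{N}l(p+q)_{{}_{N}}}$; and (iii) reindex the sum by $p \mapsto (p+q)_{{}_{N}}$, which is a cyclic permutation of $\{0,\dots,N-1\}$ and introduces the compensating factor $e^{\frac{j2\pi}{N}ql'}$ from $e^{-\frac{j2\pi}{N}pl'}$. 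What survives is $(\mathcal{Z}\mathbf{x})[r,(l'-l)_{{}_{N}}]$ times the $p$-independent phase $e^{\frac{j2\pi}{MN}lr}\,e^{\frac{j2\pi}{N}ql'}$, and the elementary identity $e^{\frac{j2\pi}{MN}lr}\,e^{\frac{j2\pi}{N}ql'} = e^{\frac{j2\pi}{N}(l'-l)q}\,e^{\frac{j2\pi}{MN}l(k'-k)}$ (using $k'-k = qM+r$) turns this into exactly the expression for $(\mathcal{D}_{(k,l)}^{\mathsf{DD}}\mathcal{Z}\mathbf{x})[k',l']$ in Definition~\ref{def:heis_op_dd}.

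The main obstacle — and the only place where a hasty argument goes wrong — is the bookkeeping in steps~(i)--(iii): the shift $k$ lives in $\mathbb{Z}_{MN}$ while the array coordinate $k'$ lives in $\mathbb{Z}_M$, so one cannot assume $0 \le k'-k < M$, and one must keep $(k'-k)_{{}_{M}}$ and $\lfloor (k'-k)/M \rfloor$ honest, track the induced cyclic shift of the Doppler summation index together with its compensating phase, and keep straight which exponents are reduced modulo $M$, modulo $N$, and modulo $MN$. Once the intertwining identity is in hand, everything else follows from the abstract conjugation argument in the first paragraph.
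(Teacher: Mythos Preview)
Your proposal is correct and follows the same overall strategy as the paper: both reduce the theorem to the single intertwining identity $\mathcal{Z}\,\mathcal{D}_{(k,l)}\,\mathcal{Z}^{-1} = \mathcal{D}_{(k,l)}^{\mathsf{DD}}$ and then invoke the abstract fact that conjugation by a unitary carries the group structure over wholesale.

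The one genuine difference is in how the intertwining identity is verified. The paper works on the array side: it starts from $\mathbf{X}\in\mathcal{H}_{\mathsf{DD}}$, applies $\mathcal{Z}^{-1}$, then $\mathcal{D}_{(k,l)}$, then $\mathcal{Z}$, producing a \emph{double} sum over $p$ and $q$ which is collapsed by the orthogonality relation of Identity~\ref{idty:sumrootsofunity}. You work on the sequence side: you compute $\mathcal{Z}\circ\mathcal{D}_{(k,l)}$ applied to $\mathbf{x}\in\mathcal{H}$, which yields a \emph{single} sum over $p$, and you dispose of it by the cyclic reindexing $p\mapsto(p+q)_{{}_{N}}$ rather than by any character-sum identity. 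Your route is slightly leaner --- one fewer sum to manage and no appeal to Identity~\ref{idty:sumrootsofunity} --- at the cost of the careful $\bmod\ M$ / $\bmod\ N$ / $\bmod\ MN$ bookkeeping you flag in steps~(i)--(iii), which the paper's version sidesteps by letting the orthogonality relation do that work implicitly. Both computations land on exactly the same phase factors, and the final conclusion is identical.
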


\begin{proof}
    It suffices to show that $\mathcal{H}_{MN}^{\mathsf{DD}}$ is unitarily equivalent to $\mathcal{H}_{MN}$ from Theorem~\ref{thm:hw_group}, in which case $\mathcal{H}_{MN}^{\mathsf{DD}}$ inherits all the properties satisfied by $\mathcal{H}_{MN}$. Letting $\mathcal{Z}$ denote the operator representation of the DZT, we must therefore show:
    \begin{align}
        \label{eq:hw_dd1}
        \mathcal{Z} \circ \mathcal{D}_{(k,l)} \circ \mathcal{Z}^{-1} = \mathcal{D}_{(k,l)}^{\mathsf{DD}}.
    \end{align}

    Consider an $M \times N$ quasi-periodic array $\mathbf{X}$. The inverse DZT per Definition~\ref{def:dzt} is given by:
    \begin{align}
        \label{eq:hw_dd2}
        \big(\mathcal{Z}^{-1}\mathbf{X}\big)[n] &= \frac{1}{\sqrt{N}} \sum_{q = 0}^{N-1}\mathbf{X}[(n)_{{}_{M}},q]e^{\frac{j2\pi}{N} q\big\lfloor \frac{n}{M} \big\rfloor}.
    \end{align}

    Substituting~\eqref{eq:hw_dd2} in Definition~\ref{def:heis_op}, we obtain:
    \begin{align}
        \label{eq:hw_dd3}
        \big(\mathcal{D}_{(k,l)} \circ \mathcal{Z}^{-1}\mathbf{X}\big)[n] &= \frac{1}{\sqrt{N}} \sum_{q = 0}^{N-1}\mathbf{X}[((n-k)_{{}_{MN}})_{{}_{M}},q]e^{\frac{j2\pi}{N} q\big\lfloor \frac{(n-k)_{{}_{MN}}}{M} \big\rfloor} e^{\frac{j2\pi}{MN}l(n-k)} \nonumber \\
        &= \frac{1}{\sqrt{N}} \sum_{q = 0}^{N-1}\mathbf{X}[(n-k)_{{}_{M}},q]e^{\frac{j2\pi}{N} q\big\lfloor \frac{n-k}{M} \big\rfloor} e^{\frac{j2\pi}{MN}l(n-k)},
    \end{align}
    where the final expression follows from elementary modular arithmetic identities.

    Applying the DZT per Definition~\ref{def:dzt}:
    \begin{align}
        \label{eq:hw_dd4}
        \big(\mathcal{Z} \circ \mathcal{D}_{(k,l)} \circ \mathcal{Z}^{-1}\mathbf{X}\big)[k',l'] &= \frac{1}{\sqrt{N}} \sum_{p = 0}^{N-1}\big(\mathcal{D}_{(k,l)} \circ \mathcal{Z}^{-1}\mathbf{X}\big)[k'+pM]e^{-\frac{j2\pi}{N} pl'} \nonumber \\
        &= \frac{1}{N} \sum_{p = 0}^{N-1} \sum_{q = 0}^{N-1} \mathbf{X}[(k'+pM-k)_{{}_{M}},q]e^{\frac{j2\pi}{N} q\big\lfloor \frac{k'+pM-k}{M} \big\rfloor} \nonumber \\ &~~~~~~~~~~~~~~~~~~\times e^{\frac{j2\pi}{MN}l(k'+pM-k)} e^{-\frac{j2\pi}{N} pl'} \nonumber \\
        &= \sum_{q = 0}^{N-1} \mathbf{X}[(k'-k)_{{}_{M}},q] e^{\frac{j2\pi}{N} q\big\lfloor \frac{k'-k}{M} \big\rfloor} e^{\frac{j2\pi}{MN}l(k'-k)} \nonumber \\ &~~~~~~~\times \bigg(\frac{1}{N} \sum_{p = 0}^{N-1} e^{\frac{j2\pi}{N} (q+l-l') p}\bigg).
    \end{align}

    From Identity~\ref{idty:sumrootsofunity}, the inner summation over $p$ vanishes unless $q \equiv (l'-l) \bmod{N}$. Thus:
    \begin{align}
        \label{eq:hw_dd5}
        \big(\mathcal{Z} \circ \mathcal{D}_{(k,l)} \circ \mathcal{Z}^{-1}\mathbf{X}\big)[k',l'] &= \mathbf{X}[(k'-k)_{{}_{M}},(l'-l)_{{}_{N}}] e^{\frac{j2\pi}{N}(l'-l)\big\lfloor \frac{k'-k}{M} \big\rfloor} e^{\frac{j2\pi}{MN}l(k'-k)} \nonumber \\
        &= \big(\mathcal{D}_{(k,l)}^{\mathsf{DD}} \mathbf{X}\big)[k',l'].
    \end{align}

    Hence, $\mathcal{H}_{MN}^{\mathsf{DD}}$ is unitarily equivalent to $\mathcal{H}_{MN}$ from Theorem~\ref{thm:hw_group}.
\end{proof}

Given the unitary equivalence of $\mathcal{H}_{MN}^{\mathsf{DD}}$ and $\mathcal{H}_{MN}$, in the remainder of this paper we restrict attention to the latter, i.e., the Heisenberg-Weyl group defined over the Hilbert space $\mathcal{H}$ of $MN$-periodic sequences. An equivalent development of all our subsequent results is possible in the delay-Doppler domain based on the results presented above.

\subsection{Commutative Subgroups}
\label{subsec:comm_subgroups}

We now investigate subgroups of the Heisenberg-Weyl group $\mathcal{H}_{MN}$ that are \emph{commutative}, i.e., satisfy the symplectic form in Lemma~\ref{lmm:hw_not_comm}. Commutative subgroups, and especially their specialized forms called \emph{maximal} commutative subgroups that are introduced in Section~\ref{subsec:eigenbasis}, play a special role in the theory of discrete radar since they act on input waveforms simply by a phase multiplication; hence lend desirable properties to ambiguity functions.


\begin{definition}
    \label{def:comm_subgrp}
    \emph{Commutative subgroups} $\mathcal{T}_{MN} \subset \mathcal{H}_{MN}$ of the Heisenberg-Weyl group satisfy the symplectic form in Lemma~\ref{lmm:hw_not_comm}, $l_1 k_2 \equiv l_2 k_1 \bmod{MN}$, for any $t_1 = e^{\frac{j2\pi}{MN}m_1} \mathcal{D}_{(k_1,l_1)} \in \mathcal{T}_{MN}$ and $t_2 = e^{\frac{j2\pi}{MN}m_2} \mathcal{D}_{(k_2,l_2)} \in \mathcal{T}_{MN}$.
\end{definition}


Every commutative subgroup contains the \emph{center} of the group, given by phase-scaled versions of the identity operator:
\begin{align*}
    \mathcal{T}_{MN} = \big\{e^{\frac{j2\pi}{MN}m} \mathcal{D}_{(0,0)} \big| m \in \mathbb{Z}_{MN}\big\} \subset \mathcal{H}_{MN},
\end{align*}
which is easily seen to commute with the whole Heisenberg-Weyl group.

A non-trivial example of a commutative subgroup of the Heisenberg-Weyl group corresponds to a \emph{line} modulo $MN$.


\begin{example}
    \label{ex:comm_subgrp_exgen}
    Consider $\mathcal{T}_{MN} = \big\{e^{\frac{j2\pi}{MN}m} \mathcal{D}_{((x c)_{{}_{MN}},(x d)_{{}_{MN}})} \big| x, m, c, d \in \mathbb{Z}_{MN}, (c,d) = 1\big\}$, which corresponds to a \emph{primitive line} in the delay-Doppler grid generated by the vector $\begin{bmatrix}
        c & d
    \end{bmatrix}^{\top}$.
\end{example}

\begin{specialcase}
    \label{ex:comm_subgrp_ex1}
    Consider $c = M$, $d = N$ in Example~\ref{ex:comm_subgrp_exgen}. Note that for every $x \in \mathbb{Z}_{MN}$, there exist $a \in \mathbb{Z}_{N}$ and $b \in \mathbb{Z}_{M}$ such that $x = (M^{-1}_{{}_{N}}a)M + (N^{-1}_{{}_{M}}b)N$. Substituting the values of $x, c, d$ in Example~\ref{ex:comm_subgrp_exgen} results in:
    \begin{align*}
        \mathcal{T}_{MN} = \big\{e^{\frac{j2\pi}{MN}m} \mathcal{D}_{(aM,bN)} \big| m \in \mathbb{Z}_{MN}, a \in \mathbb{Z}_{{}_{N}}, b \in \mathbb{Z}_{{}_{M}}\big\},
    \end{align*}
    which corresponds to an \emph{$M \times N$ rectangular grid} in the delay-Doppler grid.
\end{specialcase}

\begin{specialcase}
    \label{ex:comm_subgrp_ex2}
    Consider $c, d \in \mathbb{Z}_{MN}$ with $(c,MN) = 1$ and $(d,MN) = 1$ in Example~\ref{ex:comm_subgrp_exgen}. Note that for $(c,MN) = 1$ and $(d,MN) = 1$, we have $k = (x c)_{{}_{MN}}$ and $l = (x d)_{{}_{MN}} = d c^{-1}_{{}_{MN}} k$. Let $d c^{-1}_{{}_{MN}} = 2 \alpha$ for some $\alpha \in \mathbf{Z}_{MN}$. Substituting in Example~\ref{ex:comm_subgrp_exgen} results in:
    \begin{align*}
        \mathcal{T}_{MN} = \big\{e^{\frac{j2\pi}{MN}m} \mathcal{D}_{(k,l)} \big| k,l,m \in \mathbb{Z}_{MN}, 2\alpha k - l \equiv 0 \bmod{MN}, (\alpha,MN) = 1\big\},
    \end{align*}
    which corresponds to a \emph{line with slope $2 \alpha$} in the delay-Doppler grid.
\end{specialcase}

\begin{lemma}
    \label{lmm:comm_subgrp_exgen}
    The set $\mathcal{T}_{MN} \subset \mathcal{H}_{MN}$ defined in Example~\ref{ex:comm_subgrp_exgen} under operator composition (equivalently, matrix multiplication) forms a commutative subgroup of $\mathcal{H}_{MN}$.
\end{lemma}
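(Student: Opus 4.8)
The plan is to observe that, since $\mathcal{T}_{MN}$ is a subset of the group $\mathcal{H}_{MN}$ established in Theorem~\ref{thm:hw_group}, it is enough to check that $\mathcal{T}_{MN}$ is nonempty, closed under composition, and closed under inversion; commutativity then follows immediately from the symplectic criterion of Lemma~\ref{lmm:hw_not_comm}. Throughout I would parameterize a generic element of $\mathcal{T}_{MN}$ as $t(x,m) := e^{\frac{j2\pi}{MN}m}\,\mathcal{D}_{((xc)_{{}_{MN}},(xd)_{{}_{MN}})}$ with $x,m \in \mathbb{Z}_{MN}$ (the vector $(c,d)^{\top}$ with $(c,d)=1$ being fixed), and use the explicit group law \eqref{eq:hw2a} and the inversion formula \eqref{eq:hw2b}.

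First, nonemptiness and the identity: taking $x=0,\ m=0$ gives $t(0,0)=\mathcal{D}_{(0,0)}$, the identity of $\mathcal{H}_{MN}$, so $\mathcal{T}_{MN}\neq\emptyset$. Next, closure: for $t_1=t(x_1,m_1)$ and $t_2=t(x_2,m_2)$, \eqref{eq:hw2a} gives a delay shift $(x_1c+x_2c)_{{}_{MN}}=\big((x_1+x_2)_{{}_{MN}}\,c\big)_{{}_{MN}}$, a Doppler shift $\big((x_1+x_2)_{{}_{MN}}\,d\big)_{{}_{MN}}$, and a phase exponent $m_1+m_2+(x_1d)(x_2c)=m_1+m_2+x_1x_2cd \bmod MN$; hence $t_1\circ t_2=t\big((x_1+x_2)_{{}_{MN}},\,(m_1+m_2+x_1x_2cd)_{{}_{MN}}\big)\in\mathcal{T}_{MN}$. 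Inverses: by \eqref{eq:hw2b} the inverse of $t(x,m)$ is $e^{\frac{j2\pi}{MN}(x^2cd-m)}\mathcal{D}_{((-xc)_{{}_{MN}},(-xd)_{{}_{MN}})}=t\big((-x)_{{}_{MN}},\,(x^2cd-m)_{{}_{MN}}\big)\in\mathcal{T}_{MN}$. Finally, commutativity: for $t_1,t_2$ as above the symplectic form of Lemma~\ref{lmm:hw_not_comm} evaluates to $l_1k_2-l_2k_1=(x_1d)(x_2c)-(x_2d)(x_1c)=0 \bmod MN$, so by \eqref{eq:hw3b} the two elements commute, and $\mathcal{T}_{MN}$ is a commutative subgroup.

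I do not expect a genuine obstacle here: every step reduces to the explicit $3\times 3$ matrix arithmetic of Theorem~\ref{thm:hw_group}. The one point worth stating carefully is why the parameters never leave the admissible ranges — the Heisenberg cocycle term $x_1x_2cd$ (and the term $x^2cd$ in the inverse) affects only the phase exponent $m$, and since $m$ is allowed to be \emph{any} residue in $\mathbb{Z}_{MN}$, the product and the inverse automatically stay in $\mathcal{T}_{MN}$; this is exactly why $\mathcal{T}_{MN}$, which contains the whole center of $\mathcal{H}_{MN}$, is closed. It is also worth noting that the coprimality hypothesis $(c,d)=1$ is not used in establishing that $\mathcal{T}_{MN}$ is a commutative subgroup — it only guarantees that the map $x\mapsto\big((xc)_{{}_{MN}},(xd)_{{}_{MN}}\big)$ is injective, i.e., that the subgroup genuinely traces out a \emph{primitive} line, which is what will matter when the size of $\mathcal{T}_{MN}$ is computed later.
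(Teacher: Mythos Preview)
Your proof is correct and follows essentially the same approach as the paper: both verify the subgroup criteria via the explicit group law \eqref{eq:hw2a}--\eqref{eq:hw2b} and establish commutativity by checking that the symplectic form $l_1k_2-l_2k_1$ vanishes identically on the line. The only minor difference is that the paper skips the explicit inverse check (appealing to finiteness, so closure plus identity suffices), whereas you verify it directly; your added remark that the hypothesis $(c,d)=1$ plays no role in the subgroup property itself is accurate and worth keeping.
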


\begin{proof}

    We show that $\mathcal{T}_{MN}$ is a commutative subgroup of $\mathcal{H}_{MN}$ per Definition~\ref{def:comm_subgrp}. Note that $\mathcal{T}_{MN}$ contains the identity element $\mathcal{D}_{(0,0)} \in \mathcal{H}_{MN}$. Hence, it suffices to show commutativity and closure, and the other group properties (associativity and inverse) hold by default.

    \textbf{(Commutativity)} Consider elements $t_1 = e^{\frac{j2\pi}{MN}m_1} \mathcal{D}_{((x_1 c)_{{}_{MN}},(x_1 d)_{{}_{MN}})} \in \mathcal{T}_{MN}$ and $t_2 = e^{\frac{j2\pi}{MN}m_2} \mathcal{D}_{((x_2 c)_{{}_{MN}},(x_2 d)_{{}_{MN}})} \in \mathcal{T}_{MN}$. The symplectic form in Lemma~\ref{lmm:hw_not_comm} and Definition~\ref{def:comm_subgrp} is satisfied by default since:
    \begin{align}
        \label{eq:exgen1}
        l_1k_2 = x_1 d x_2 c \bmod{MN} \equiv l_2k_1 = x_2 d x_1 c \bmod{MN}.
    \end{align}

    
    

    \textbf{(Closure)} The subgroup is closed since on substituting~\eqref{eq:exgen1} in~\eqref{eq:hw3a} we obtain:
    \begin{equation*}
        (t_1 \circ t_2) = (t_2 \circ t_1) = e^{\frac{j2\pi}{MN}(m_1+m_2+x_1 x_2 cd)} \mathcal{D}_{([(x_1+x_2) c]_{{}_{MN}},[(x_1+x_2) d]_{{}_{MN}})} \in \mathcal{T}_{MN}.
    \end{equation*}
\end{proof}

\subsection{Maximal Commutative Subgroups}
\label{subsec:eigenbasis}

We now define the notion of \emph{maximal commutative subgroups} of $\mathcal{H}_{MN}$, which are special commutative subgroups with the property that no group element outside of the subgroup commutes with the subgroup.

\begin{definition}
    \label{def:max_comm_subgrp}
    A commutative subgroup $\mathcal{T}_{MN} \subset \mathcal{H}_{MN}$ is \emph{maximal} if and only if no $h \in \mathcal{H}_{MN}, h \not\in \mathcal{T}_{MN}$ exists such that $h$ commutes with $\mathcal{T}_{MN}$.
\end{definition}


Maximal commutative subgroups of $\mathcal{H}_{MN}$ admit an orthonormal eigenbasis decomposition with unimodular eigenvalues, as described in the following Lemma.

\begin{lemma}
    \label{lmm:eigenbasis}
    For each maximal commutative subgroup $\mathcal{T}_{MN} \subset \mathcal{H}_{MN}$, there exists a simultaneous orthonormal eigenbasis $\big\{\mathbf{v}_{i}\big\}_{i=1}^{MN}$ for all elements $t \in \mathcal{T}_{MN}$, with group elements $h \not\in \mathcal{T}_{MN}$ not in the subgroup acting orthogonally on the eigenvectors: 
    \begin{gather*}
        t \mathbf{v}_{i} = \lambda_{t,i} \mathbf{v}_{i},~t \in \mathcal{T}_{MN}, |\lambda_{t,i}| = 1, \\
        \big\langle \mathbf{v}_{i},h \mathbf{v}_{i} \big\rangle = 0,~h \not\in \mathcal{T}_{MN}, h \in \mathcal{H}_{MN}.
    \end{gather*}
\end{lemma}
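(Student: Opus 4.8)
The plan is to split the statement into two independent claims: (i) $\mathcal{T}_{MN}$ admits a common orthonormal eigenbasis with unimodular eigenvalues, and (ii) every $h\in\mathcal{H}_{MN}$ outside $\mathcal{T}_{MN}$ acts orthogonally on each such eigenvector.

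\textbf{Simultaneous diagonalization.} Every element of $\mathcal{T}_{MN}$ is a unitary — hence normal — operator on the finite-dimensional Hilbert space $\mathcal{H}$, $\dim\mathcal{H}=MN$, and by Definition~\ref{def:comm_subgrp} (through Lemma~\ref{lmm:hw_not_comm}) the elements of $\mathcal{T}_{MN}$ pairwise commute. Modulo its center (the phase-scaled identities, which act as scalars and therefore fix every subspace), $\mathcal{T}_{MN}$ is generated by finitely many commuting unitaries; applying the spectral theorem iteratively — diagonalize the first generator, restrict the next to each of its eigenspaces and diagonalize there, and so on — produces an orthonormal basis $\big\{\mathbf{v}_i\big\}_{i=1}^{MN}$ of $\mathcal{H}$ on which every $t\in\mathcal{T}_{MN}$ acts diagonally, $t\mathbf{v}_i=\lambda_{t,i}\mathbf{v}_i$. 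Unitarity gives $|\lambda_{t,i}|=\Vert t\mathbf{v}_i\Vert=\Vert\mathbf{v}_i\Vert=1$.

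\textbf{Orthogonal action of outside elements.} Fix $i$ and any $h=e^{\frac{j2\pi}{MN}m}\mathcal{D}_{(k,l)}\in\mathcal{H}_{MN}$ with $h\notin\mathcal{T}_{MN}$. By maximality (Definition~\ref{def:max_comm_subgrp}) there is some $t=e^{\frac{j2\pi}{MN}m'}\mathcal{D}_{(k',l')}\in\mathcal{T}_{MN}$ that does not commute with $h$. The commutation identity~\eqref{eq:hw3b} applied to the pair $(t,h)$ reads $t\circ h=e^{\frac{j2\pi}{MN}c}\,(h\circ t)$ with $c=(l'k-lk')_{{}_{MN}}$, and $t h\neq h t$ forces $c\not\equiv 0\bmod MN$, i.e. $e^{\frac{j2\pi}{MN}c}\neq 1$. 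Acting on the eigenvector gives
\[
t(h\mathbf{v}_i)=e^{\frac{j2\pi}{MN}c}\,h(t\mathbf{v}_i)=e^{\frac{j2\pi}{MN}c}\lambda_{t,i}\,(h\mathbf{v}_i),
\]
and since $h$ is invertible, $h\mathbf{v}_i\neq 0$. Thus $h\mathbf{v}_i$ is an eigenvector of the normal operator $t$ with eigenvalue $e^{\frac{j2\pi}{MN}c}\lambda_{t,i}\neq\lambda_{t,i}$, while $\mathbf{v}_i$ has $t$-eigenvalue $\lambda_{t,i}$; eigenvectors of a normal operator for distinct eigenvalues are orthogonal, so $\big\langle\mathbf{v}_i,h\mathbf{v}_i\big\rangle=0$.

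\textbf{Anticipated obstacle.} The substantive step is the first one — obtaining a genuinely \emph{simultaneous} orthonormal eigenbasis for the whole (a priori infinite) family $\mathcal{T}_{MN}$ rather than for a single operator — which I handle by reducing to a finite generating set modulo the center and iterating the spectral theorem. The second step is then essentially forced; the only thing to watch is the bookkeeping of the commutator phase in~\eqref{eq:hw3b} (the sign/direction of $c$ is irrelevant, only that $c\not\equiv 0\bmod MN$).
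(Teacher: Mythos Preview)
Your proof is correct and uses the same two ingredients as the paper --- simultaneous diagonalization of commuting unitaries, followed by maximality to handle outside elements --- but your execution of the second step is cleaner. The paper argues by contradiction that no $h\notin\mathcal{T}_{MN}$ can commute with all of $\mathcal{T}_{MN}$ (which is really just the definition of maximality restated), and from this asserts that $h\mathbf{v}_i\propto\mathbf{v}_j$ for some $j\neq i$, a step that tacitly uses the permutation action of $\mathcal{H}_{MN}$ on the eigenbasis. You instead pick a single witness $t\in\mathcal{T}_{MN}$ not commuting with $h$, invoke the explicit commutator phase from~\eqref{eq:hw3b} to place $h\mathbf{v}_i$ in the $e^{\frac{j2\pi}{MN}c}\lambda_{t,i}$-eigenspace of $t$, and conclude orthogonality directly from the spectral theorem for that one normal operator. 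Your route avoids any implicit claim about one-dimensionality of joint eigenspaces or about $h\mathbf{v}_i$ being a basis vector, at the cost of appealing to the concrete Heisenberg--Weyl commutation law rather than pure group abstraction.
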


\begin{proof}
    Since the elements of $\mathcal{T}_{MN}$ are unitary and commuting, its follows from linear algebra theory that they must all be simultaneously diagonalizable. Hence, there must exist an orthonormal basis $\big\{\mathbf{v}_{i}\big\}_{i=1}^{MN}$ such that for all $t \in \mathcal{T}_{MN}$, $t \mathbf{v}_{i} = \lambda_{t,i} \mathbf{v}_{i}$. Since the elements of $\mathcal{T}_{MN}$ are unitary, the corresponding eigenvalues must have unit magnitude, $|\lambda_{t,i}| = 1$. 

    We prove the final condition by contradiction. Assume there exists some $h \in \mathcal{H}_{MN}, h \not\in \mathcal{T}_{MN}$ such that $h \circ t = t \circ h$ for $t \in \mathcal{T}_{MN}$. Hence, for an eigenvector $\mathbf{v}_{i}$, we must have:
    \begin{align*}
        &h\big(t\mathbf{v}_{i}\big) = t\big(h\mathbf{v}_{i}\big) \nonumber \\
        \implies& \lambda_{t,i} \big(h\mathbf{v}_{i}\big) = t\big(h\mathbf{v}_{i}\big),
    \end{align*}
    i.e., $h\mathbf{v}_{i} \propto \mathbf{v}_{i} \implies h \in \mathcal{T}_{MN}$. However, this is a contradiction since we assumed $h \not\in \mathcal{T}_{MN}$.
    
    Therefore, for all group elements $h \in \mathcal{H}_{MN}, h \not\in \mathcal{T}_{MN}$, we must have $h\mathbf{v}_{i} \propto \mathbf{v}_{j}$ for some $j \neq i$ such that $\lambda_{t,i} = \lambda_{t,j}$. Let $h\mathbf{v}_{i} = \alpha_{i,j} \mathbf{v}_{j}$ for some $\alpha_{i,j} \in \mathbb{C}$. Thus, we have:
    \begin{align*}
        \big\langle \mathbf{v}_{i},h \mathbf{v}_{i} \big\rangle &= \alpha_{i,j}^{*} \big\langle \mathbf{v}_{i}, \mathbf{v}_{j} \big\rangle \nonumber \\
        &= 0,
    \end{align*}
    since the eigenbasis is orthonormal and $j \neq i$.
\end{proof}

Given an eigenvector $\mathbf{v}$ of a maximal commutative subgroup, its self-ambiguity function $\mathbf{A}_{\mathbf{v}}[k,l]$ depends on the orbit of $\mathbf{v}$ under $\mathcal{H}_{MN}$~\cite{Howard2006_HW}. The \emph{isotropy group} $\mathcal{I}_{\mathbf{v}}$ of $\mathbf{v}$ comprises the elements of $\mathcal{H}_{MN}$ that act as multiplication by a complex phase. Isotropy groups are commutative, hence $\mathbf{v}$ is a waveform with the largest possible isotropy group. Corollary~\ref{corr:max_comm_ambg} shows that the self-ambiguity function $\mathbf{A}_{\mathbf{v}}[k,l]$ is constant on cosets of $\mathcal{I}_{\mathbf{v}}$.



\begin{corollary}
    \label{corr:max_comm_ambg}
    The self-ambiguity functions of eigenvectors $\big\{\mathbf{v}_{i}\big\}_{i=1}^{MN}$ of a maximal commutative subgroup $\mathcal{T}_{MN} \subset \mathcal{H}_{MN}$ satisfy:
    \begin{align*}
        \big|\mathbf{A}_{\mathbf{v}_{i}}[k, l]\big| = \begin{cases}
            1, & t = e^{\frac{j2\pi}{MN}m} \mathcal{D}_{(k,l)} \in \mathcal{T}_{MN}, \\ 
            0, & h = e^{\frac{j2\pi}{MN}m} \mathcal{D}_{(k,l)} \not\in \mathcal{T}_{MN}, h \in \mathcal{H}_{MN}.
        \end{cases}
    \end{align*}

    Moreover, the self-ambiguity function is unimodular over exactly $MN$ delay-Doppler indices $(k,l)$, $k \in \mathbb{Z}_{MN}$, $l \in \mathbb{Z}_{MN}$.

\end{corollary}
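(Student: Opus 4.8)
The plan is to reduce the self-ambiguity function to an inner product involving a single Heisenberg operator and then read off both cases directly from Lemma~\ref{lmm:eigenbasis}. Specializing the first equality in~\eqref{eq:amb_fun_hw} to $\mathbf{x}=\mathbf{y}=\mathbf{v}_i$ gives $\mathbf{A}_{\mathbf{v}_i}[k,l] = \langle \mathbf{v}_i, \mathcal{D}_{(k,l)}\mathbf{v}_i\rangle$, so the entire statement becomes a question about how $\mathcal{D}_{(k,l)} \in \mathcal{H}_{MN}$ acts on the eigenvector $\mathbf{v}_i$.

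Before splitting into cases I would pin down that the dichotomy in the statement is genuinely exhaustive and independent both of the unimodular prefactor and of the index $i$. Since every commutative subgroup of $\mathcal{H}_{MN}$—in particular $\mathcal{T}_{MN}$—contains the center $\{e^{\frac{j2\pi}{MN}m}\mathcal{D}_{(0,0)} : m \in \mathbb{Z}_{MN}\}$, composing with a central element shows that $e^{\frac{j2\pi}{MN}m}\mathcal{D}_{(k,l)} \in \mathcal{T}_{MN}$ holds either for all $m$ or for no $m$; equivalently, membership is decided solely by whether $\mathcal{D}_{(k,l)} \in \mathcal{T}_{MN}$. Hence each index $(k,l) \in \mathbb{Z}_{MN}\times\mathbb{Z}_{MN}$ lies in exactly one of the two cases, with the same answer for every $i$. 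In the case $\mathcal{D}_{(k,l)} \in \mathcal{T}_{MN}$, Lemma~\ref{lmm:eigenbasis} gives $\mathcal{D}_{(k,l)}\mathbf{v}_i = \lambda\mathbf{v}_i$ with $|\lambda|=1$, whence $\mathbf{A}_{\mathbf{v}_i}[k,l] = \langle \mathbf{v}_i, \lambda\mathbf{v}_i\rangle = \lambda^{*}\Vert\mathbf{v}_i\Vert_2^2 = \lambda^{*}$ (the eigenbasis being orthonormal, hence unit-norm), so $|\mathbf{A}_{\mathbf{v}_i}[k,l]| = 1$. In the complementary case $\mathcal{D}_{(k,l)} \notin \mathcal{T}_{MN}$, the second conclusion of Lemma~\ref{lmm:eigenbasis}, applied with $h = \mathcal{D}_{(k,l)} \in \mathcal{H}_{MN}$, gives $\langle \mathbf{v}_i, \mathcal{D}_{(k,l)}\mathbf{v}_i\rangle = 0$, i.e.\ $\mathbf{A}_{\mathbf{v}_i}[k,l] = 0$. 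This establishes the displayed formula.

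For the final count I would avoid arguing the order of $\mathcal{T}_{MN}$ directly and instead combine the dichotomy with Moyal's Identity. Setting $\mathcal{S} = \{(k,l) : \mathcal{D}_{(k,l)} \in \mathcal{T}_{MN}\}$, what has just been shown is $|\mathbf{A}_{\mathbf{v}_i}[k,l]|^{2} = \mathds{1}\{(k,l) \in \mathcal{S}\}$ for every $(k,l)$. Substituting into the special case of Identity~\ref{idty:moyal} for the unit-norm vector $\mathbf{v}_i$, namely $\frac{1}{MN}\sum_{k,l} |\mathbf{A}_{\mathbf{v}_i}[k,l]|^{2} = 1$, forces $\mathsf{card}(\mathcal{S}) = MN$, consistent with the upper bound of Corollary~\ref{corr:moyal_setcard}. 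The only step that requires a moment's care is the exhaustiveness/independence-of-$m$ observation via center-containment in the middle paragraph; everything else is a short chain of substitutions, so I do not anticipate a genuine obstacle beyond bookkeeping with the phase conventions of~\eqref{eq:amb_fun_hw}.
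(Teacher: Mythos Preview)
Your proposal is correct and follows essentially the same approach as the paper: substitute Lemma~\ref{lmm:eigenbasis} into~\eqref{eq:amb_fun_hw} for the displayed dichotomy, then invoke Moyal's Identity for the count. Your write-up is in fact more careful than the paper's one-line proof—you make explicit the center-containment argument that the case split depends only on $(k,l)$, and you use Identity~\ref{idty:moyal} itself (rather than merely the upper bound of Corollary~\ref{corr:moyal_setcard}) to pin down that the cardinality is \emph{exactly} $MN$.
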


\begin{proof}
    Substituting Lemma~\ref{lmm:eigenbasis} in~\eqref{eq:amb_fun_hw} yields the desired result. A direct application of Moyal's Identity (Corollary~\ref{corr:moyal_setcard}) results in the final condition. 
\end{proof}


    

Lemma~\ref{lmm:eigenbasis} and Corollary~\ref{corr:max_comm_ambg} show that the self-ambiguity functions of eigenvectors of maximal commutative subgroups are unimodular only over $MN$ delay-Doppler locations, outside of which they are zero. Around each such location, the ambiguity function has a local ``thumbtack-like'' property, making it desirable to utilize eigenvectors of maximal commutative subgroups as radar waveforms.

We now show that Example~\ref{ex:comm_subgrp_exgen} corresponds to a maximal commutative subgroup.

\begin{lemma}
    \label{lmm:max_comm_subgrp_exgen}
    The commutative subgroup $\mathcal{T}_{MN}$ in Example~\ref{ex:comm_subgrp_exgen} is maximal. Moreover, every maximal commutative subgroup of $\mathcal{H}_{MN}$ is a line of the form given by $\mathcal{T}_{MN}$. 
\end{lemma}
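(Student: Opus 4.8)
The plan is to pass to the quotient by the center and reduce everything to a statement about subgroups of $\mathbb{Z}_{MN}^2$ under the symplectic form. Let $\pi : \mathcal{H}_{MN} \to \mathbb{Z}_{MN}^2$ be the homomorphism $e^{\frac{j2\pi}{MN}m}\mathcal{D}_{(k,l)} \mapsto (k,l)$, whose kernel is the center $\mathcal{Z} = \big\{e^{\frac{j2\pi}{MN}m}\mathcal{D}_{(0,0)}\big\}$ (order $MN$). By Lemma~\ref{lmm:hw_not_comm}, two elements commute iff their $\pi$-images are orthogonal under $\omega\big((k_1,l_1),(k_2,l_2)\big) = l_1 k_2 - l_2 k_1$ on $\mathbb{Z}_{MN}^2$, and this form is nondegenerate (pairing $(k,l)$ against $(1,0)$ and $(0,1)$ returns $l$ and $-k$). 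Hence for any subgroup $\mathcal{S}$ containing $\mathcal{Z}$ (equivalently $\mathcal{S}=\pi^{-1}(\pi(\mathcal{S}))$): $\mathcal{S}$ is commutative iff $\pi(\mathcal{S})$ is isotropic, and the centralizer of $\mathcal{S}$ is $\pi^{-1}\big(\pi(\mathcal{S})^{\perp}\big)$. Since $\mathcal{Z}$ is central, every maximal commutative subgroup contains $\mathcal{Z}$, and a commutative subgroup is maximal iff it equals its own centralizer (if $\mathcal{T}\subsetneq C(\mathcal{T})$, adjoining any $h\in C(\mathcal{T})\setminus\mathcal{T}$ enlarges $\mathcal{T}$ to a still-commutative subgroup). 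Therefore maximal commutative subgroups are exactly the $\mathcal{T}_{MN}=\pi^{-1}(L)$ with $L\subseteq\mathbb{Z}_{MN}^2$ self-dual, $L = L^{\perp}$.

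To identify the self-dual $L$, I would first note that nondegeneracy makes the pairing $\mathbb{Z}_{MN}^2\times\mathbb{Z}_{MN}^2\to\mathbb{Z}_{MN}$ perfect, so by the standard duality for finite abelian groups $|L|\,|L^{\perp}| = (MN)^2$; thus $L=L^{\perp}$ forces $|L| = MN$. Now since $MN$ is squarefree (distinct primes $M\neq N$), the invariant-factor form $L\cong\mathbb{Z}_{d_1}\times\mathbb{Z}_{d_2}$ with $d_1\mid d_2\mid MN$ and $d_1 d_2 = MN$ forces $d_1 = 1$: any prime dividing $d_1$ would divide $d_2$, contradicting squarefreeness of $d_1 d_2$. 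Hence $L$ is cyclic of order $MN$, i.e. $L = \langle(c,d)\rangle$ for a vector of order $MN$ in $\mathbb{Z}_{MN}^2$, which means $c,d$ are not simultaneously divisible by $M$, nor by $N$ (and one may then pick integer representatives with $\gcd(c,d)=1$). Applying $\pi^{-1}$ gives $\mathcal{T}_{MN} = \big\{e^{\frac{j2\pi}{MN}m}\mathcal{D}_{((xc)_{MN},(xd)_{MN})} : x,m\in\mathbb{Z}_{MN}\big\}$, which is precisely the subgroup of Example~\ref{ex:comm_subgrp_exgen}; this establishes the second assertion.

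For the first assertion I would run the same bookkeeping forward on the subgroup $\mathcal{T}_{MN}$ of Example~\ref{ex:comm_subgrp_exgen}. Its image $L=\pi(\mathcal{T}_{MN}) = \{(xc,xd):x\in\mathbb{Z}_{MN}\}$ is isotropic by the computation $l_1 k_2 - l_2 k_1 = x_1 x_2 cd - x_2 x_1 cd = 0$ already recorded in the proof of Lemma~\ref{lmm:comm_subgrp_exgen}, so $L\subseteq L^{\perp}$; and because $(c,d)=1$, the map $x\mapsto(xc,xd)$ is injective modulo $MN$ (if $M\mid xc$ and $M\mid xd$ then $M\mid x$, since $M$ is prime and does not divide both $c$ and $d$; likewise for $N$; hence $MN\mid x$), so $|L|=MN$. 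Then $|L^{\perp}| = (MN)^2/|L| = MN = |L|$, forcing $L=L^{\perp}$; equivalently the centralizer of $\mathcal{T}_{MN}$ is $\pi^{-1}(L^{\perp}) = \pi^{-1}(L) = \mathcal{T}_{MN}$, so nothing outside $\mathcal{T}_{MN}$ commutes with it and it is maximal.

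The group-axiom checks and the elementary modular arithmetic are routine (the former already done for this $\mathcal{T}_{MN}$ in Lemma~\ref{lmm:comm_subgrp_exgen}). The only genuinely load-bearing inputs are the perfect-pairing count $|L|\,|L^{\perp}| = (MN)^2$ — this is where nondegeneracy of $\omega$ enters — and the extraction of ``$L$ cyclic of order $MN$'' from squarefreeness of $MN$; I expect these two facts to be the crux, with everything else a formal consequence of the commutation relation in Lemma~\ref{lmm:hw_not_comm} and the fact that a maximal commutative subgroup is self-centralizing.
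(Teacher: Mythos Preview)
Your proof is correct. Both you and the paper reduce to the quotient by the center and use squarefreeness of $MN$ to conclude that an order-$MN$ subgroup of $\mathbb{Z}_{MN}^2$ is cyclic, so the second assertion is handled the same way in both. Where you diverge is in establishing maximality of the line from Example~\ref{ex:comm_subgrp_exgen}: the paper counts $|\mathcal{S}_{\mathcal{T}_{MN}}|=MN$ and then invokes Corollary~\ref{corr:max_comm_ambg} (which ultimately rests on Moyal's identity and the existence of a common eigenvector) to conclude that no larger commutative subgroup can exist, whereas you argue directly from nondegeneracy of the symplectic form that $|L|\,|L^{\perp}|=(MN)^2$, so $|L|=MN$ forces $L=L^{\perp}$ and hence self-centralizing. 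Your route is more self-contained---it stays purely in the group/linear-algebra world and avoids the somewhat circular-looking appeal to Corollary~\ref{corr:max_comm_ambg} (which is stated as a consequence of maximality rather than a criterion for it)---while the paper's route ties the size bound back to the ambiguity-function picture that motivates the whole development. Both are valid; yours is cleaner as a standalone argument, the paper's is more thematically integrated.
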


\begin{proof}
    Let $\mathcal{S}_{\mathcal{T}_{MN}} = \big\{(k,l) \big| e^{\frac{j2\pi}{MN}m} \mathcal{D}_{(k,l)} \in \mathcal{T}_{MN} \big\}$. A simple counting argument suggests that $\mathsf{card}\big(\mathcal{S}_{\mathcal{T}_{MN}}\big) = MN$ since the delay and Doppler indices $k$ and $l$ are uniquely determined by $x \in \mathbb{Z}_{MN}$ in Example~\ref{ex:comm_subgrp_exgen}. Hence, the condition in Corollary~\ref{corr:max_comm_ambg} is satisfied, i.e., $\mathcal{T}_{MN}$ is a maximal commutative subgroup.

    To show the final condition, consider the quotient group $\mathcal{Q}_{MN} = \mathcal{H}_{MN} / \mathcal{C}_{MN}$, where $\mathcal{C}_{MN} = \big\{e^{\frac{j2\pi}{MN}m} \mathcal{D}_{(0,0)} \big| m \in \mathbb{Z}_{MN}\big\}$ denotes the center of the group. From Lemma~\ref{lmm:hw_not_comm}, it is easy to see that the quotient group is commutative. A maximal commutative subgroup of $\mathcal{H}_{MN}$ corresponds to a (phase-scaled) subgroup of $\mathcal{Q}_{MN}$ with order $MN$. Since $M$ and $N$ are distinct primes, such a subgroup of $\mathcal{Q}_{MN}$ should be cyclic of order $MN$; hence, must be of the form given by $\mathcal{T}_{MN}$.
\end{proof}


We now derive the eigenbases corresponding to the two special cases of Example~\ref{ex:comm_subgrp_exgen}.

\subsubsection{Special Case~\ref{ex:comm_subgrp_ex1}} 

To understand the structure of the eigenbasis, we substitute $\mathcal{T}_{MN}$ from Special Case~\ref{ex:comm_subgrp_ex1} in Lemma~\ref{lmm:eigenbasis}:
\begin{align}
    \label{eq:eigenvec_amb_ex1_1}
    (t \mathbf{v}_{i})[n] &= e^{\frac{j2\pi}{MN}m} \big(\mathcal{D}_{(aM,bN)} \mathbf{v}_{i}\big)[n] \nonumber \\ 
    &= e^{\frac{j2\pi}{MN}m} e^{\frac{j2\pi}{M}bn} \mathbf{v}_{i}[n-aM] = \lambda_{t,i} \mathbf{v}_{i}[n].
\end{align}

For the above condition to be satisfied, we seek an orthonormal basis that is \emph{quasi-periodic} with period $M$ along delay.

\begin{lemma}
    \label{lmm:eigenvec_amb_ex1_pulsone}
    The \emph{pulsone basis} satisfies the condition in~\eqref{eq:eigenvec_amb_ex1_1}, i.e., is the eigenbasis corresponding to the maximal commutative subgroup $\mathcal{T}_{MN}$ in Special Case~\ref{ex:comm_subgrp_ex1}:
    \begin{align*}
        \mathbf{v}_{i}[n] = \frac{1}{\sqrt{N}} \sum_{d \in \mathbb{Z}} e^{\frac{j2\pi}{N} d l_0} \delta[n-k_0-dM],~i=k_0+l_0M,
    \end{align*}
    with corresponding eigenvalue $\lambda_{t,i} = e^{\frac{j2\pi}{MN}m} e^{\frac{j2\pi}{M}bk_0} e^{-\frac{j2\pi}{N} al_0}$.
\end{lemma}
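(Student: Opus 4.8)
The plan is to verify directly that the proposed $\mathbf{v}_i$ form an orthonormal basis of $\mathcal{H}$ and that each one is a simultaneous eigenvector of every $t \in \mathcal{T}_{MN}$ from Special Case~\ref{ex:comm_subgrp_ex1}, reading off $\lambda_{t,i}$ along the way. Uniqueness of the eigenbasis up to per-vector phases then follows from Lemma~\ref{lmm:eigenbasis} together with Lemma~\ref{lmm:max_comm_subgrp_exgen}, since for a maximal commutative subgroup the joint eigenspaces are one-dimensional.

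First I would record two elementary facts about the pulsone $\mathbf{v}_i$ with $i = k_0 + l_0 M$, where $k_0 \in \mathbb{Z}_M$, $l_0 \in \mathbb{Z}_N$. (i) It is $MN$-periodic: replacing $n$ by $n + MN$ relabels the summation index $d \mapsto d - N$ and introduces a factor $e^{j 2\pi l_0} = 1$. (ii) On the window $\{0,\dots,MN-1\}$ its support is the arithmetic progression $\{k_0 + dM : 0 \le d \le N-1\}$, which lies entirely in the residue class $k_0 \bmod M$. Orthonormality is then immediate: for $k_0 \neq k_0'$ the supports of $\mathbf{v}_i$ and $\mathbf{v}_{i'}$ are disjoint, while for $k_0 = k_0'$ the inner product reduces to $\frac{1}{N}\sum_{d=0}^{N-1} e^{\frac{j2\pi}{N}d(l_0-l_0')}$, which equals $\delta[l_0-l_0']$ by Identity~\ref{idty:sumrootsofunity}. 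Since $(k_0,l_0)\mapsto k_0 + l_0 M$ is a bijection $\mathbb{Z}_M\times\mathbb{Z}_N \to \mathbb{Z}_{MN}$, this produces $MN$ orthonormal vectors, hence an orthonormal basis of $\mathcal{H}$.

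Next I would compute $t\mathbf{v}_i$ for $t = e^{\frac{j2\pi}{MN}m}\mathcal{D}_{(aM,bN)} \in \mathcal{T}_{MN}$. By Definition~\ref{def:heis_op} the Doppler-modulation phase is $e^{\frac{j2\pi}{MN}(bN)(n-aM)} = e^{\frac{j2\pi}{M}b(n-aM)} = e^{\frac{j2\pi}{M}bn}$, since $ab \in \mathbb{Z}$, and the argument $(n-aM)_{{}_{MN}}$ may be replaced by $n-aM$ by periodicity (i). Relabelling $d \mapsto d - a$ in $\mathbf{v}_i[n-aM]$ yields the constant factor $e^{-\frac{j2\pi}{N}al_0}$, so $\mathbf{v}_i[n-aM] = e^{-\frac{j2\pi}{N}al_0}\mathbf{v}_i[n]$. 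The one point needing a little care is the surviving linear phase $e^{\frac{j2\pi}{M}bn}$: because $\mathbf{v}_i$ is supported on $k_0 + M\mathbb{Z}$, on that support $e^{\frac{j2\pi}{M}bn} = e^{\frac{j2\pi}{M}bk_0}$, whence $e^{\frac{j2\pi}{M}bn}\mathbf{v}_i[n] = e^{\frac{j2\pi}{M}bk_0}\mathbf{v}_i[n]$ as functions of $n$. Collecting the three constants gives $(t\mathbf{v}_i)[n] = e^{\frac{j2\pi}{MN}m}\, e^{\frac{j2\pi}{M}bk_0}\, e^{-\frac{j2\pi}{N}al_0}\,\mathbf{v}_i[n]$, which is precisely the condition in~\eqref{eq:eigenvec_amb_ex1_1} with $\lambda_{t,i} = e^{\frac{j2\pi}{MN}m}\, e^{\frac{j2\pi}{M}bk_0}\, e^{-\frac{j2\pi}{N}al_0}$, as claimed.

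Finally, having an orthonormal family of $MN$ simultaneous eigenvectors of all $t \in \mathcal{T}_{MN}$, I would invoke Lemma~\ref{lmm:eigenbasis} to conclude that $\{\mathbf{v}_i\}$ is the eigenbasis of $\mathcal{T}_{MN}$, unique up to individual phases. I do not expect a real obstacle here: the entire content is the modular-arithmetic bookkeeping of the two a priori $n$-dependent quantities, the phase $e^{\frac{j2\pi}{M}b(n-aM)}$ and the shift $n \mapsto n - aM$, both of which collapse to constants once one uses the quasi-periodicity of $\mathbf{v}_i$ and the lattice structure of its support. I expect that collapse — recognising that the linear phase is constant on the progression $k_0 + M\mathbb{Z}$ — to be the only step worth spelling out in detail.
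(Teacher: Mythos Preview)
Your argument is correct. You verify the eigenvector condition directly for $t\in\mathcal{T}_{MN}$ and supply an explicit orthonormality proof, then appeal to Lemma~\ref{lmm:eigenbasis} and a dimension count to conclude that $\{\mathbf{v}_i\}$ is \emph{the} eigenbasis.

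The paper proceeds somewhat differently: it computes the action of an \emph{arbitrary} $h=e^{\frac{j2\pi}{MN}m}\mathcal{D}_{(k,l)}\in\mathcal{H}_{MN}$ on $\mathbf{v}_i$, obtaining $h\mathbf{v}_i=(\text{phase})\,\mathbf{v}_j$ with $j=(k+k_0)_{{}_M}+(l+l_0)_{{}_N}M$, and then specialises. For $k=aM$, $l=bN$ this gives $j=i$ and the stated eigenvalue; for $h\notin\mathcal{T}_{MN}$ it gives $j\neq i$ and hence $\langle\mathbf{v}_i,h\mathbf{v}_i\rangle=0$, directly verifying both clauses of Lemma~\ref{lmm:eigenbasis}. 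The paper's route buys the extra structural fact that the full Heisenberg--Weyl group permutes the pulsone basis up to phase, which feeds into Corollary~\ref{corr:max_comm_ambg}; your route is more self-contained because you actually prove the orthonormality the paper merely invokes. Either way the computational core is the same quasi-periodicity bookkeeping you identify.
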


\begin{proof}
    To prove the Lemma, we first evaluate $(h \mathbf{v}_{i})[n]$ for a Heisenberg-Weyl group element $h = e^{\frac{j2\pi}{MN}m} \mathcal{D}_{(k,l)} \in \mathcal{H}_{MN}$:
    \begin{align*}
        (h \mathbf{v}_{i})[n] &= e^{\frac{j2\pi}{MN}m} \big(\mathcal{D}_{(k,l)} \mathbf{v}_{i}\big)[n] \nonumber \\
        &= e^{\frac{j2\pi}{MN}m} e^{\frac{j2\pi}{MN}l(n-k)} \frac{1}{\sqrt{N}} \sum_{d \in \mathbb{Z}} e^{\frac{j2\pi}{N} d l_0} \delta[n-k-k_0-dM],
    \end{align*}
    where the final expression follows on substituting the pulsone basis element defined in the Lemma statement in Definition~\ref{def:heis_op}. On substituting $k + k_0 = (k + k_0)_{{}_{M}} + \big\lfloor\frac{k + k_0}{M}\big\rfloor M$:
    \begin{align*}
        (h \mathbf{v}_{i})[n] &= e^{\frac{j2\pi}{MN}m} e^{\frac{j2\pi}{MN}l(n-k)} \frac{1}{\sqrt{N}} \sum_{d \in \mathbb{Z}} e^{\frac{j2\pi}{N} d l_0} \delta\bigg[n-(k+k_0)_{{}_{M}}-\bigg(d+\bigg\lfloor\frac{k + k_0}{M}\bigg\rfloor\bigg)M\bigg] \nonumber \\
        &= e^{\frac{j2\pi}{MN}m} \frac{1}{\sqrt{N}} \sum_{d \in \mathbb{Z}} e^{\frac{j2\pi}{MN}l\big[(k+k_0)_{{}_{M}}+\big(d+\big\lfloor\frac{k + k_0}{M}\big\rfloor\big)M-k\big]} e^{\frac{j2\pi}{N} d l_0} \nonumber \\ &~~~\times \delta\bigg[n-(k+k_0)_{{}_{M}}-\bigg(d+\bigg\lfloor\frac{k + k_0}{M}\bigg\rfloor\bigg)M\bigg],
    \end{align*}
    where the final expression follows from the property of the Kronecker delta function that $f[n] \delta[n-m] = f[m] \delta[n-m]$. Let $d' = d+\big\lfloor\frac{k + k_0}{M}\big\rfloor$. We have:
    \begin{align}
        \label{eq:eigenvec_amb_ex1_pulsone3}
        (h \mathbf{v}_{i})[n] &= e^{\frac{j2\pi}{MN}m} e^{-\frac{j2\pi}{N} \big\lfloor\frac{k + k_0}{M}\big\rfloor l_0} \frac{1}{\sqrt{N}} \sum_{d' \in \mathbb{Z}} e^{\frac{j2\pi}{MN}l[(k+k_0)_{{}_{M}}+d'M-k]} e^{\frac{j2\pi}{N} d' l_0} \nonumber \\ &~~~\times \delta[n-(k+k_0)_{{}_{M}}-d'M] \nonumber \\
        &= e^{\frac{j2\pi}{MN}m} e^{\frac{j2\pi}{MN}l[(k+k_0)_{{}_{M}}-k]} e^{-\frac{j2\pi}{N} \big\lfloor\frac{k + k_0}{M}\big\rfloor l_0} \frac{1}{\sqrt{N}} \sum_{d' \in \mathbb{Z}} e^{\frac{j2\pi}{N} d' (l+l_0)} \nonumber \\ &~~~\times \delta[n-(k+k_0)_{{}_{M}}-d'M] \nonumber \\
        &= e^{\frac{j2\pi}{MN}m} e^{\frac{j2\pi}{MN}l[(k+k_0)_{{}_{M}}-k]} e^{-\frac{j2\pi}{N} \big\lfloor\frac{k + k_0}{M}\big\rfloor l_0} \mathbf{v}_{j}[n],~j = (k+k_0)_{{}_{M}} + (l+l_0)_{{}_{N}}M.
    \end{align}

    
    When $h = t \in \mathcal{T}_{MN}$, i.e., $k = aM$ and $l = bN$, we have $j = i = k_0 + l_0 M$; hence:
    \begin{align}
        \label{eq:eigenvec_amb_ex1_pulsone5}
        t \mathbf{v}_{i} &= e^{\frac{j2\pi}{MN}m} e^{\frac{j2\pi}{MN}l[(k+k_0)_{{}_{M}}-k]} e^{-\frac{j2\pi}{N} \big\lfloor\frac{k + k_0}{M}\big\rfloor l_0} \mathbf{v}_{i} \nonumber \\
        &= e^{\frac{j2\pi}{MN}m} e^{\frac{j2\pi}{M}b[(aM+k_0)_{{}_{M}}-aM]} e^{-\frac{j2\pi}{N} \big\lfloor\frac{aM + k_0}{M}\big\rfloor l_0} \mathbf{v}_{i} \nonumber \\
        &= \underbrace{e^{\frac{j2\pi}{MN}m} e^{\frac{j2\pi}{M}bk_0} e^{-\frac{j2\pi}{N} al_0}}_{\lambda_{t,i}} \mathbf{v}_{i}.
    \end{align}

    When $h \not\in \mathcal{T}_{MN}$, i.e., $k \not\equiv 0 \bmod{M}$ and/or $l \not\equiv 0 \bmod{N}$,~\eqref{eq:eigenvec_amb_ex1_pulsone3} implies $j \neq i$; hence:
    \begin{align}
        \label{eq:eigenvec_amb_ex1_pulsone4}
        \big\langle \mathbf{v}_{i}, h \mathbf{v}_{i} \big\rangle &= e^{-\frac{j2\pi}{MN}m} e^{-\frac{j2\pi}{MN}l[(k+k_0)_{{}_{M}}-k]} e^{\frac{j2\pi}{N} \big\lfloor\frac{k + k_0}{M}\big\rfloor l_0} \big\langle \mathbf{v}_{i}, \mathbf{v}_{j} \big\rangle \nonumber \\
        &= 0,
    \end{align}
    since the pulsone basis is orthonormal. Together,~\eqref{eq:eigenvec_amb_ex1_pulsone4} and~\eqref{eq:eigenvec_amb_ex1_pulsone5} satisfy the conditions in Definition~\ref{def:max_comm_subgrp} and Corollary~\ref{corr:max_comm_ambg}. 
\end{proof}

\subsubsection{Special Case~\ref{ex:comm_subgrp_ex2}} 

To understand the structure of the eigenbasis, we substitute $\mathcal{T}_{MN}$ from Special Case~\ref{ex:comm_subgrp_ex2} in Lemma~\ref{lmm:eigenbasis}:
\begin{align}
    \label{eq:eigenvec_amb_ex2_1}
    (t \mathbf{v}_{i})[n] &= e^{\frac{j2\pi}{MN}m} \big(\mathcal{D}_{(k,2\alpha k)} \mathbf{v}_{i}\big)[n] \nonumber \\ 
    &= e^{\frac{j2\pi}{MN}m} e^{\frac{j2\pi}{MN}2\alpha k(n-k)} \mathbf{v}_{i}[n-k] = \lambda_{t,i} \mathbf{v}_{i}[n].
\end{align}

For the above condition to be satisfied, we seek an orthonormal basis with \emph{quadratic phase terms}.

\begin{lemma}
    \label{lmm:eigenvec_amb_ex2_tdcazac}
    The \emph{discrete chirp basis} satisfies the conditions in~\eqref{eq:eigenvec_amb_ex2_1}, i.e., is the eigenbasis corresponding to the maximal commutative subgroup $\mathcal{T}_{MN}$ in Special Case~\ref{ex:comm_subgrp_ex2}:
    \begin{align*}
        \mathbf{v}_{i}[n] = \frac{1}{\sqrt{MN}} e^{\frac{j2\pi}{MN} [\alpha n^2 + \beta n + \gamma]},~i = f(\alpha,\beta,\gamma),
    \end{align*}
    for an affine function $f$, with corresponding eigenvalue $\lambda_{t,i} = e^{\frac{j2\pi}{MN}m} e^{-\frac{j2\pi}{MN}(\alpha k^2 + \beta k)}$. 
\end{lemma}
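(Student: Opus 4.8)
The plan is to follow the template of the proof of Lemma~\ref{lmm:eigenvec_amb_ex1_pulsone}: first check that the proposed family is orthonormal, then evaluate $(h\mathbf{v}_i)[n]$ for an arbitrary Heisenberg--Weyl element $h = e^{\frac{j2\pi}{MN}m}\mathcal{D}_{(k,l)}$, identify which basis element $\mathbf{v}_i$ gets mapped to, and read off the eigenvalue when $h\in\mathcal{T}_{MN}$ and orthogonality when $h\notin\mathcal{T}_{MN}$. First I would record two elementary facts. Since $(n+MN)^2 \equiv n^2 \bmod MN$, the sequence $\mathbf{v}_i[n] = \frac{1}{\sqrt{MN}}e^{\frac{j2\pi}{MN}[\alpha n^2 + \beta n + \gamma]}$ is $MN$-periodic and hence lies in $\mathcal{H}$; and for the fixed $\alpha$ that labels the subgroup (with $\gamma$ fixed), the $MN$ sequences indexed by $\beta \in \mathbb{Z}_{MN}$ satisfy $\langle \mathbf{v}_{\beta}, \mathbf{v}_{\beta'}\rangle = \frac{1}{MN}\sum_{n=0}^{MN-1} e^{\frac{j2\pi}{MN}(\beta-\beta')n} = \delta[\beta-\beta']$ by Identity~\ref{idty:sumrootsofunity}, because the quadratic terms $\alpha n^2$ cancel in the inner product. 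So the discrete chirp family is an orthonormal basis of $\mathcal{H}$ whose genuine index is $\beta$, the constant $\gamma$ contributing only a global phase; this matches the claimed affine labeling $i = f(\alpha,\beta,\gamma)$.

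Next I would compute, for $h = e^{\frac{j2\pi}{MN}m}\mathcal{D}_{(k,l)} \in \mathcal{H}_{MN}$, using Definition~\ref{def:heis_op} and $MN$-periodicity, $(h\mathbf{v}_i)[n] = e^{\frac{j2\pi}{MN}m}\,\mathbf{v}_i[n-k]\,e^{\frac{j2\pi}{MN}l(n-k)}$, and then expand the exponent $\alpha(n-k)^2 + \beta(n-k) + \gamma + l(n-k)$ and collect powers of $n$: it equals $\alpha n^2 + \beta' n + \gamma'$ with $\beta' = \beta + (l - 2\alpha k)$ and $\gamma' = \gamma + \alpha k^2 - (\beta + l)k$. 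Hence $(h\mathbf{v}_i)[n] = e^{\frac{j2\pi}{MN}(m + \alpha k^2 - (\beta+l)k)}\,\mathbf{v}_j[n]$, where $\mathbf{v}_j$ is the chirp with parameters $(\alpha, \beta', \gamma)$, i.e.\ $j$ is the index associated with $\beta' = \beta + (l - 2\alpha k)$.

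Finally I would specialize. If $h = t\in\mathcal{T}_{MN}$ then, by Special Case~\ref{ex:comm_subgrp_ex2}, $l \equiv 2\alpha k \bmod MN$, so $\beta' \equiv \beta$, $j = i$, and the prefactor reduces to $e^{\frac{j2\pi}{MN}(m + \alpha k^2 - (\beta + 2\alpha k)k)} = e^{\frac{j2\pi}{MN}m}\,e^{-\frac{j2\pi}{MN}(\alpha k^2 + \beta k)} =: \lambda_{t,i}$, which is exactly the eigenvalue in the statement; this gives $t\mathbf{v}_i = \lambda_{t,i}\mathbf{v}_i$ with $|\lambda_{t,i}| = 1$. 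If instead $h\notin\mathcal{T}_{MN}$, then $l \not\equiv 2\alpha k \bmod MN$, so $\beta' \not\equiv \beta \bmod MN$, hence $j \neq i$ and $\langle \mathbf{v}_i, h\mathbf{v}_i\rangle = e^{\frac{j2\pi}{MN}(m + \alpha k^2 - (\beta+l)k)}\langle \mathbf{v}_i, \mathbf{v}_j\rangle = 0$ by the orthonormality established above. Together these verify the conditions of Definition~\ref{def:max_comm_subgrp} and Corollary~\ref{corr:max_comm_ambg}, so the discrete chirp basis is the eigenbasis of $\mathcal{T}_{MN}$ in Special Case~\ref{ex:comm_subgrp_ex2}.

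I do not expect a genuine obstacle here; the one place requiring care is the bookkeeping of the quadratic cross-term, making sure that the coefficient of the linear term $n$ shifts by precisely $l - 2\alpha k$ so that $j = i$ holds \emph{exactly} when $h\in\mathcal{T}_{MN}$, and checking that every exponent is well defined modulo $MN$ because $a\equiv b \bmod MN$ implies $a^2 \equiv b^2 \bmod MN$. This last point is also where oddness of $M$ and $N$ (invertibility of $2$, which makes ``slope $2\alpha$'' no loss of generality) enters the wider picture.
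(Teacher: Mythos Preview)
Your proposal is correct and follows essentially the same route as the paper: compute $(h\mathbf{v}_i)[n]$ for a general $h=e^{\frac{j2\pi}{MN}m}\mathcal{D}_{(k,l)}$, expand the quadratic exponent to see that $h$ maps $\mathbf{v}_i$ to a scalar times the chirp with linear coefficient $\beta+l-2\alpha k$, then specialize to $l\equiv 2\alpha k$ for the eigenvalue and to $l\not\equiv 2\alpha k$ for orthogonality. The only cosmetic difference is that you establish orthonormality of the chirp family up front and then invoke it, whereas the paper computes $\langle\mathbf{v}_i,h\mathbf{v}_i\rangle$ directly and applies Identity~\ref{idty:sumrootsofunity} at that stage; the underlying sum is identical.
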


\begin{proof}
    To prove the Lemma, we first evaluate $(h \mathbf{v}_{i})[n]$ for a Heisenberg-Weyl group element $h = e^{\frac{j2\pi}{MN}m} \mathcal{D}_{(k,l)} \in \mathcal{H}_{MN}$:
    \begin{align}
        \label{eq:eigenvec_amb_ex2_tdcazac1}
        (h \mathbf{v}_{i})[n] &= e^{\frac{j2\pi}{MN}m} \big(\mathcal{D}_{(k,l)} \mathbf{v}_{i}\big)[n] \nonumber \\
        &= e^{\frac{j2\pi}{MN}m} e^{\frac{j2\pi}{MN}l(n-k)} \frac{1}{\sqrt{MN}} e^{\frac{j2\pi}{MN} [\alpha (n-k)^2 + \beta (n-k) + \gamma]} \nonumber \\
        &= e^{\frac{j2\pi}{MN}m} e^{\frac{j2\pi}{MN}[\alpha k^2 - \beta k - lk]} \frac{1}{\sqrt{MN}} e^{\frac{j2\pi}{MN} [\alpha n^2 + (\beta + l - 2\alpha k) n + \gamma]} \nonumber \\
        &= e^{\frac{j2\pi}{MN}m} e^{\frac{j2\pi}{MN}[\alpha k^2 - \beta k - lk]} \mathbf{v}_{j}[n],~j = f(\alpha,\beta + l - 2\alpha k,\gamma).
    \end{align}

    When $h = t \in \mathcal{T}_{MN}$, i.e., $2\alpha k - l \equiv 0 \bmod{MN}$, we have $j = i = f(\alpha,\beta,\gamma)$; hence:
    \begin{align}
        \label{eq:eigenvec_amb_ex2_tdcazac2}
        t \mathbf{v}_{i} &= e^{\frac{j2\pi}{MN}m} e^{\frac{j2\pi}{MN}[\alpha k^2 - \beta k - lk]} \mathds{1}\big\{2\alpha k - l \equiv 0 \bmod{MN}\big\} \mathbf{v}_{i} \nonumber \\
        &= \underbrace{e^{\frac{j2\pi}{MN}m} e^{-\frac{j2\pi}{MN}(\alpha k^2 + \beta k)}}_{\lambda_{t,i}} \mathbf{v}_{i}. 
    \end{align}

    Following~\eqref{eq:eigenvec_amb_ex2_tdcazac1}, we may evaluate the inner product with $\mathbf{v}_{i}$ as:
    \begin{align}
        \label{eq:eigenvec_amb_ex2_tdcazac3}
        \big\langle \mathbf{v}_{i}, h \mathbf{v}_{i} \big\rangle &= e^{-\frac{j2\pi}{MN}m} e^{-\frac{j2\pi}{MN}[\alpha k^2 - \beta k - lk]} \big\langle \mathbf{v}_{i}, \mathbf{v}_{j} \big\rangle \nonumber \\
        &= e^{-\frac{j2\pi}{MN}m} e^{-\frac{j2\pi}{MN}[\alpha k^2 - \beta k - lk]} \frac{1}{MN} \sum_{n=0}^{MN-1} e^{\frac{j2\pi}{MN} (2\alpha k-l) n} \nonumber \\
        &= e^{-\frac{j2\pi}{MN}m} e^{-\frac{j2\pi}{MN}[\alpha k^2 - \beta k - lk]} \mathds{1}\big\{2\alpha k - l \equiv 0 \bmod{MN}\big\},
    \end{align}
    where the final expression follows from Identity~\ref{idty:sumrootsofunity}. Thus,~\eqref{eq:eigenvec_amb_ex2_tdcazac3} satisfies the conditions in Definition~\ref{def:max_comm_subgrp} and Corollary~\ref{corr:max_comm_ambg}. 
\end{proof}

\subsection{Symplectic Transformations}
\label{subsec:gdaft}

We now define a special class of unitary operators called \emph{symplectic transformations} that act by conjugation on $\mathcal{H}_{MN}$, defining automorphisms that preserve the group structure and hence map commutative subgroups to commutative subgroups.


\begin{definition}
    \label{def:sympl_transform}
    A \emph{symplectic transformation} of the Heisenberg-Weyl group $\mathcal{H}_{MN}$ is a unitary operator $\mathcal{W}(g): \mathcal{H} \rightarrow \mathcal{H}$ associated with an element $g \in \mathcal{G}$ of the special linear group of order $2$ over $\mathbb{Z}_{MN}$:
    \begin{align}
        \label{eq:weil1}
        \mathcal{G} &= SL_{2}(\mathbb{Z}_{MN}) = \bigg\{\begin{bmatrix}
            a & b \\ c & d
        \end{bmatrix} \bigg| a, b, c, d \in \mathbb{Z}_{MN}, ad - bc = 1 \bigg\},
    \end{align}
    that acts on the Hilbert space $\mathcal{H}$ of complex-valued $MN$-periodic sequences as:
    \begin{gather}
        \label{eq:weil3a}
        \mathcal{W}(g \cdot h) = \mathcal{W}(g) \circ \mathcal{W}(h), \forall g,h \in \mathcal{G}, \\
        \label{eq:weil3b}
        \mathcal{W}(g) \circ \big(e^{\frac{j2\pi}{MN}m}\mathcal{D}_{(k,l)}\big) \circ \mathcal{W}^{-1}(g) = e^{\frac{j2\pi}{MN}m'}\mathcal{D}_{g \cdot (k,l)}, \forall g \in \mathcal{G}, 
    \end{gather}
    where $e^{\frac{j2\pi}{MN}m}\mathcal{D}_{(k,l)}$ and $e^{\frac{j2\pi}{MN}m'}\mathcal{D}_{g \cdot (k,l)}$ belong to the Heisenberg-Weyl group $\mathcal{H}_{MN}$.
\end{definition}

We may interpret~\eqref{eq:weil3b} as the statement that $\mathcal{W}(g)$ normalizes $\mathcal{H}_{MN}$. Conjugation by $\mathcal{W}^{-1}(g)$ defines an automorphism of $\mathcal{H}_{MN}$ that maps commutative subgroups to commutative subgroups.

\begin{lemma}
    \label{lmm:weil_prop1}
    Symplectic transformations map commutative subgroups of $\mathcal{H}_{MN}$ to commutative subgroups.
\end{lemma}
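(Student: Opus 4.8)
The plan is to read off the statement directly from the two defining properties \eqref{eq:weil3a}--\eqref{eq:weil3b} of a symplectic transformation, using the commutativity criterion of Lemma~\ref{lmm:hw_not_comm}. First I would observe that conjugation by $\mathcal{W}(g)$, the map $c_g : h \mapsto \mathcal{W}(g)\,h\,\mathcal{W}^{-1}(g)$, sends each Heisenberg--Weyl element $h = e^{\frac{j2\pi}{MN}m}\mathcal{D}_{(k,l)}$ to $e^{\frac{j2\pi}{MN}m'}\mathcal{D}_{g\cdot(k,l)}$, which by \eqref{eq:weil3b} again lies in $\mathcal{H}_{MN}$; moreover $c_g$ has inverse $c_{g^{-1}}$ (by \eqref{eq:weil3a} and $\mathcal{W}^{-1}(g)=\mathcal{W}(g^{-1})$), so $c_g$ is a group automorphism of $\mathcal{H}_{MN}$. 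Consequently, for any commutative subgroup $\mathcal{T}_{MN}\subset\mathcal{H}_{MN}$, the image $c_g(\mathcal{T}_{MN})$ is automatically a subgroup of $\mathcal{H}_{MN}$, being the homomorphic image of a subgroup.

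For commutativity, the quickest route is purely formal: if $t_1 t_2 = t_2 t_1$ then $c_g(t_1)c_g(t_2) = c_g(t_1 t_2) = c_g(t_2 t_1) = c_g(t_2)c_g(t_1)$, so $c_g(\mathcal{T}_{MN})$ is commutative. I would, however, also spell out the geometric content through Lemma~\ref{lmm:hw_not_comm}: writing $t_i = e^{\frac{j2\pi}{MN}m_i}\mathcal{D}_{(k_i,l_i)}$, commutativity of $t_1,t_2$ is precisely the vanishing of the symplectic form $l_1 k_2 \equiv l_2 k_1 \bmod MN$; under $c_g$ the underlying delay--Doppler vectors $(k_i,l_i)$ are transformed linearly by $g$, and since $\det g = 1$ the symplectic form is preserved, so the images $c_g(t_1),c_g(t_2)$ again satisfy the vanishing condition and hence commute.

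I expect the only bookkeeping point --- the ``obstacle,'' such as it is --- to be confirming that the action $v \mapsto g\cdot v$ on delay--Doppler vectors scales the skew form $(v_1,v_2)\mapsto l_1 k_2 - l_2 k_1$ by $\det g$; this is the elementary identity $\det[g v_1 \mid g v_2] = (\det g)\det[v_1 \mid v_2]$ once one fixes the row/column convention for $g\cdot(k,l)$ underlying \eqref{eq:weil3b}. Everything else is immediate from the group axioms and the normalizer property \eqref{eq:weil3b}. As a byproduct, $c_g$ being a bijection gives $\bigl|c_g(\mathcal{T}_{MN})\bigr| = \bigl|\mathcal{T}_{MN}\bigr|$, which is exactly what is needed afterwards to conclude that symplectic transformations also map \emph{maximal} commutative subgroups to maximal ones.
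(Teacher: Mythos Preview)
Your proposal is correct and aligns with the paper's treatment: the paper notes just before the lemma that conjugation by $\mathcal{W}(g)$ is an automorphism of $\mathcal{H}_{MN}$ (your ``purely formal'' route), and then, in the passage immediately following the lemma, carries out exactly the explicit verification you sketch --- writing the symplectic form as $(k_1,l_1)\begin{bmatrix}0&-1\\1&0\end{bmatrix}(k_2,l_2)^\top$ and checking that $g^{-\top}\begin{bmatrix}0&-1\\1&0\end{bmatrix}g^{-1}=\begin{bmatrix}0&-1\\1&0\end{bmatrix}$ because $\det g=1$. Your identification of the only bookkeeping point is on the mark.
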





We now make Lemma~\ref{lmm:weil_prop1} explicit. Consider two elements $t_1 = e^{\frac{j2\pi}{MN}m_1}\mathcal{D}_{(k_1,l_1)} \in \mathcal{T}_{MN}$ and $t_2 = e^{\frac{j2\pi}{MN}m_2}\mathcal{D}_{(k_2,l_2)} \in \mathcal{T}_{MN}$. As per~\eqref{eq:weil3b}:
\begin{align}
    \label{eq:weil4}
    \mathcal{W}(g) \circ \big(e^{\frac{j2\pi}{MN}m_i}\mathcal{D}_{(k_i,l_i)}\big) \circ \mathcal{W}^{-1}(g) = e^{\frac{j2\pi}{MN}m_i'}\mathcal{D}_{g \cdot (k_i,l_i)},\forall i \in \{1,2\}.
\end{align}

Recall from Definition~\ref{def:comm_subgrp} that $t_1,t_2 \in \mathcal{T}_{MN}$ satisfy the symplectic form:
\begin{align}
    \label{eq:weil5}
    l_1 k_2 \equiv l_2 k_1 \bmod{MN} \implies \begin{bmatrix}
        k_1 & l_1
    \end{bmatrix} \begin{bmatrix}
        0 & -1 \\ 1 & 0
    \end{bmatrix} \begin{bmatrix}
        k_2 \\ l_2
    \end{bmatrix} \equiv 0 \bmod{MN}.
\end{align}

Now, for $i \in \{1,2\}$, let $(k'_i,l'_i) = g \cdot (k_i,l_i)$; thus, $(k_i,l_i) = g^{-1} \cdot (k'_i,l'_i)$. Then,~\eqref{eq:weil5} can be rewritten as:
\begin{align}
    \label{eq:weil6}
    \begin{bmatrix}
        k'_1 & l'_1
    \end{bmatrix} \big(g^{-1}\big)^{\top} \begin{bmatrix}
        0 & -1 \\ 1 & 0
    \end{bmatrix} g^{-1} \begin{bmatrix}
        k'_2 \\ l'_2
    \end{bmatrix} \equiv 0 \bmod{MN}.
\end{align}

Substituting the definition of the group element $g \in \mathcal{G}$ as per~\eqref{eq:weil1}, we obtain:
\begin{align}
    \label{eq:weil7}
    \begin{bmatrix}
        k'_1 & l'_1
    \end{bmatrix} \big(g^{-1}\big)^{\top} \begin{bmatrix}
        0 & -1 \\ 1 & 0
    \end{bmatrix} g^{-1} \begin{bmatrix}
        k'_2 \\ l'_2
    \end{bmatrix} &= \begin{bmatrix}
        k'_1 & l'_1
    \end{bmatrix} \bigg(\begin{bmatrix}
        a & b \\ c & d
    \end{bmatrix}^{-1}\bigg)^{\top} \begin{bmatrix}
        0 & -1 \\ 1 & 0
    \end{bmatrix} \begin{bmatrix}
        a & b \\ c & d
    \end{bmatrix}^{-1} \begin{bmatrix}
        k'_2 \\ l'_2
    \end{bmatrix} \nonumber \\
    &= \begin{bmatrix}
        k'_1 & l'_1
    \end{bmatrix} \begin{bmatrix}
        d & -c \\ -b & a
    \end{bmatrix} \begin{bmatrix}
        0 & -1 \\ 1 & 0
    \end{bmatrix} \begin{bmatrix}
        d & -b \\ -c & a
    \end{bmatrix} \begin{bmatrix}
        k'_2 \\ l'_2
    \end{bmatrix} \nonumber \\
    &= \begin{bmatrix}
        k'_1 & l'_1
    \end{bmatrix} \begin{bmatrix}
        d & -c \\ -b & a
    \end{bmatrix} \begin{bmatrix}
        c & -a \\ d & -b
    \end{bmatrix} \begin{bmatrix}
        k'_2 \\ l'_2
    \end{bmatrix} \nonumber \\
    &= \begin{bmatrix}
        k'_1 & l'_1
    \end{bmatrix} \begin{bmatrix}
        (dc-dc) & -(ad-bc) \\ (ad-bc) & (ab-ab)
    \end{bmatrix} \begin{bmatrix}
        k'_2 \\ l'_2
    \end{bmatrix} \nonumber \\
    &= \begin{bmatrix}
        k'_1 & l'_1
    \end{bmatrix} \begin{bmatrix}
        0 & -1 \\ 1 & 0
    \end{bmatrix} \begin{bmatrix}
        k'_2 \\ l'_2
    \end{bmatrix} \equiv 0 \bmod{MN}.
\end{align}
Thus, the transformation $(k'_i,l'_i) = g \cdot (k_i,l_i)$ due to the symplectic transformation $\mathcal{W}(g)$ preserves the symplectic form from Definition~\ref{def:comm_subgrp}, and hence also represents a commutative subgroup of $\mathcal{H}_{MN}$.

\begin{lemma}
    \label{lmm:weil_prop2}
    Symplectic transformations \emph{rotate} the ambiguity functions of waveforms. Let $\mathcal{W}(g)$ denote a symplectic transformation corresponding to a group element $g \in \mathcal{G}$. Then: 
    \begin{align*}
        \mathbf{A}_{\mathbf{x},\mathbf{y}}[k,l] &= \big\langle \mathbf{x}, \mathcal{D}_{(k,l)} \mathbf{y} \big\rangle \nonumber \\
        &= e^{-\frac{j2\pi}{MN}m'} \mathbf{A}_{\mathcal{W}(g) \mathbf{x},\mathcal{W}(g) \mathbf{y}}[g \cdot (k,l)].
    \end{align*}
\end{lemma}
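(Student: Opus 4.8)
The plan is to exploit the representation of the cross-ambiguity function as an inner product against a Heisenberg operator, namely $\mathbf{A}_{\mathbf{x},\mathbf{y}}[k,l] = \langle \mathbf{x}, \mathcal{D}_{(k,l)} \mathbf{y} \rangle$ from~\eqref{eq:amb_fun_hw}, and then to push the symplectic operator $\mathcal{W}(g)$ past the Heisenberg operator using the normalizing relation~\eqref{eq:weil3b}. Since everything in sight is either unitary or a bounded operator on $\mathcal{H}$, no convergence issues arise and the argument is a short chain of equalities.

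First I would use the unitarity of $\mathcal{W}(g)$ (which is part of Definition~\ref{def:sympl_transform}) to write $\langle \mathbf{x}, \mathcal{D}_{(k,l)} \mathbf{y} \rangle = \langle \mathcal{W}(g)\mathbf{x}, \mathcal{W}(g)\mathcal{D}_{(k,l)} \mathbf{y} \rangle$. Next I would insert $\mathcal{W}^{-1}(g)\circ\mathcal{W}(g)$ between $\mathcal{D}_{(k,l)}$ and $\mathbf{y}$, obtaining $\mathcal{W}(g)\mathcal{D}_{(k,l)}\mathbf{y} = \big(\mathcal{W}(g)\circ\mathcal{D}_{(k,l)}\circ\mathcal{W}^{-1}(g)\big)\,\mathcal{W}(g)\mathbf{y}$, and then invoke~\eqref{eq:weil3b} with $m = 0$, so that $\mathcal{W}(g)\circ\mathcal{D}_{(k,l)}\circ\mathcal{W}^{-1}(g) = e^{\frac{j2\pi}{MN}m'}\mathcal{D}_{g\cdot(k,l)}$ for the phase exponent $m' = m'(g,k,l)$ determined by the symplectic transformation. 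This is the only nontrivial input, and it has already been granted.

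Then I would pull the scalar phase out of the inner product. Because the inner-product convention in the Notation section is conjugate-linear in the second argument, $\langle \mathcal{W}(g)\mathbf{x}, e^{\frac{j2\pi}{MN}m'}\mathcal{D}_{g\cdot(k,l)}\mathcal{W}(g)\mathbf{y} \rangle = e^{-\frac{j2\pi}{MN}m'}\langle \mathcal{W}(g)\mathbf{x}, \mathcal{D}_{g\cdot(k,l)}\mathcal{W}(g)\mathbf{y} \rangle$. Finally I would recognize the remaining inner product, by~\eqref{eq:amb_fun_hw} applied with the index $g\cdot(k,l)$ and the vectors $\mathcal{W}(g)\mathbf{x}$, $\mathcal{W}(g)\mathbf{y}$, as $\mathbf{A}_{\mathcal{W}(g)\mathbf{x},\mathcal{W}(g)\mathbf{y}}[g\cdot(k,l)]$, which yields precisely the claimed identity $\mathbf{A}_{\mathbf{x},\mathbf{y}}[k,l] = e^{-\frac{j2\pi}{MN}m'}\mathbf{A}_{\mathcal{W}(g)\mathbf{x},\mathcal{W}(g)\mathbf{y}}[g\cdot(k,l)]$.

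There is no real obstacle here: the content is bookkeeping, and the substantive fact (that $\mathcal{W}(g)$ normalizes $\mathcal{H}_{MN}$ with the linear index action $(k,l)\mapsto g\cdot(k,l)$) is assumed in Definition~\ref{def:sympl_transform}. The only points requiring care are tracking which argument of the inner product carries the conjugation — hence the minus sign in $e^{-\frac{j2\pi}{MN}m'}$ — and observing that $m'$ in the statement is exactly the phase exponent that~\eqref{eq:weil3b} attaches to the pair $(g,(k,l))$, so no explicit evaluation of $m'$ is needed for this lemma. Taking magnitudes then immediately recovers the stated consequence that $|\mathbf{A}_{\mathcal{W}\mathbf{v}}[k,l]| = |\mathbf{A}_{\mathbf{v}}[g\cdot(k,l)]|$ advertised in the introduction.
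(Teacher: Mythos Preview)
Your proposal is correct and follows essentially the same route as the paper's own proof: both use the unitarity of $\mathcal{W}(g)$ to pass to $\langle \mathcal{W}(g)\mathbf{x}, \mathcal{W}(g)\mathcal{D}_{(k,l)}\mathbf{y}\rangle$, insert $\mathcal{W}^{-1}(g)\mathcal{W}(g)$ to conjugate the Heisenberg operator, invoke~\eqref{eq:weil3b}, and then extract the phase with the correct sign coming from conjugate-linearity in the second slot. The only cosmetic difference is that the paper writes the unitarity step as inserting $\mathcal{W}^{-1}(g)\mathcal{W}(g)$ on both arguments and then cancelling $\mathcal{W}^{-1}(g)$, whereas you invoke $\langle u,v\rangle = \langle \mathcal{W}u,\mathcal{W}v\rangle$ directly.
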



\begin{proof}
    Since $\mathcal{W}(g)$ is unitary, we may express the ambiguity function as:
    \begin{align}
        \label{eq:amb2}
        \mathbf{A}_{\mathbf{x},\mathbf{y}}[k,l] &= \big\langle \mathbf{x}, \mathcal{D}_{(k,l)} \mathbf{y} \big\rangle \nonumber \\
        &= \big\langle \mathcal{W}^{-1}(g) \circ \mathcal{W}(g) \mathbf{x}, \mathcal{W}^{-1}(g) \circ \mathcal{W}(g) \circ \mathcal{D}_{(k,l)} \circ \mathcal{W}^{-1}(g) \circ \mathcal{W}(g) \mathbf{y} \big\rangle \nonumber \\
        &= \big\langle \mathcal{W}^{-1}(g) \circ \mathcal{W}(g) \mathbf{x}, \mathcal{W}^{-1}(g) \circ \big(e^{\frac{j2\pi}{MN}m'}\mathcal{D}_{g \cdot (k,l)}\big) \circ \mathcal{W}(g) \mathbf{y} \big\rangle \nonumber \\
        &= \big\langle \mathcal{W}(g) \mathbf{x}, \big(e^{\frac{j2\pi}{MN}m'}\mathcal{D}_{g \cdot (k,l)}\big) \circ \mathcal{W}(g) \mathbf{y} \big\rangle \nonumber \\
        &= e^{-\frac{j2\pi}{MN}m'} \mathbf{A}_{\mathcal{W}(g) \mathbf{x},\mathcal{W}(g) \mathbf{y}}[g \cdot (k,l)].
    \end{align}
\end{proof}

Fig.~\ref{fig:weil_ambg_fun} below illustrates Lemma~\ref{lmm:weil_prop2}.

\begin{figure}[h!]
  \centering



  \begin{tikzpicture}[>=Latex]
    \node (eqA) at (0, 2) {$\mathbf{A}_{\mathbf{x}}[k,l]$};
    \node (eqB) at (6, 2) {$e^{-\frac{j2\pi}{MN}m'} \mathbf{A}_{\mathcal{W}(g) \mathbf{x}}[g \cdot (k,l)]$};
    \node (eqC) at (0, 0) {$\mathbf{x}[n]$};
    \node (eqD) at (5, 0) {$\big(\mathcal{W}(g) \mathbf{x}\big)[n]$};

    \coordinate (Aleft) at ($(eqA.east)+(0.05,0)$);
    \coordinate (Bleft) at ($(eqB.west)-(0.2,0)$);
    \coordinate (Cleft) at ($(eqC.east)+(0.32,0)$);
    \coordinate (Dleft) at ($(eqD.west)-(0.22,0)$);

    \draw[->] (Aleft) -- (Bleft);
    \draw[->] (Cleft) -- (Dleft);

    \draw[-] ($(eqA.south)+(0,0)$) -- ($(eqC.north)+(0,0)$);
    \draw[-] ($(eqB.south)-(1,0)$) -- ($(eqD.north)-(0,0)$);
  \end{tikzpicture}
  \caption{The action of symplectic transformations on waveforms \& ambiguity functions.}
  \label{fig:weil_ambg_fun}
\end{figure}

An example of a symplectic transformation is the DZT introduced in Definition~\ref{def:dzt}. We showed in Theorem~\ref{thm:hw_group_dd} that the DZT \emph{centralizes} $\mathcal{H}_{MN}$, i.e., the DZT is a symplectic transformation $\mathcal{W}(g)$ corresponding to the identity group element $g = \begin{bmatrix}
        1 & 0 \\ 0 & 1
    \end{bmatrix} \in \mathcal{G}$ with:
\begin{align}
    \label{eq:weil_dzt1}
    \mathbf{A}_{\mathbf{x},\mathbf{y}}[k,l] &= \mathbf{A}_{\mathcal{W}(g) \mathbf{x},\mathcal{W}(g) \mathbf{y}}[k,l],
\end{align}
where the above expression follows from~\eqref{eq:hw_dd1} in the proof of Theorem~\ref{thm:hw_group_dd}.

We now present two examples of symplectic transformations corresponding to non-identity elements $g \in \mathcal{G}$. The presented examples have practical implications for radar signal processing, especially in the design of \emph{waveform libraries} for adaptive radars.



\begin{example}
    \label{ex:sympl_tx_ex1}
    Consider the symplectic transformation:
    \begin{align*}
        (\mathcal{W}(g) \mathbf{x})[n] = e^{\frac{j2\pi}{MN}An^2} \mathbf{x}[n],~(A,MN) = 1,
    \end{align*}
    which is defined for the group element $g = \begin{bmatrix}
        1 & 0 \\ 2A & 1
    \end{bmatrix} \in \mathcal{G}$, and corresponds to a \emph{linear frequency modulation}  (LFM) of the input waveform $\mathbf{x}$ at rate $A$.
\end{example}

Example~\ref{ex:sympl_tx_ex1} is the discrete form of the continuous linear frequency modulation transform presented in~\cite{Moran2004_wvf_lib,Moran2009_wvf_lib_survey}. We now derive the group element $g \in \mathcal{G}$ corresponding to the symplectic transformation. To that end, let $\tilde{\mathbf{y}}[n] = (\mathcal{W}(g) \mathbf{y})[n]$. We calculate:
\begin{align}
    \label{eq:weilex1b}
    (\mathcal{W}(g) \circ \mathcal{D}_{(k,l)} \circ \mathcal{W}^{-1}(g) \tilde{\mathbf{y}})[n] &= e^{\frac{j2\pi}{MN}An^2}\big(\mathcal{D}_{(k,l)} (\mathcal{W}^{-1}(g) \tilde{\mathbf{y}})\big)[n] \nonumber \\
    &= e^{\frac{j2\pi}{MN}An^2} e^{\frac{j2\pi}{MN}l(n-k)} e^{-\frac{j2\pi}{MN}A(n-k)^2} \tilde{\mathbf{y}}[(n-k)_{MN}] \nonumber \\
    &= e^{\frac{j2\pi}{MN}[Ak(2n-k)+l(n-k)]} \tilde{\mathbf{y}}[(n-k)_{MN}] \nonumber \\
    &= e^{\frac{j2\pi}{MN}[(2Ak+l)n-(Ak^2+lk)]} \tilde{\mathbf{y}}[(n-k)_{MN}] \nonumber \\
    &= e^{\frac{j2\pi}{MN}Ak^2} e^{\frac{j2\pi}{MN}(2Ak+l)(n-k)} \tilde{\mathbf{y}}[(n-k)_{MN}] \nonumber \\
    &= e^{\frac{j2\pi}{MN}Ak^2} (\mathcal{D}_{(k,(2Ak+l))} \tilde{\mathbf{y}})[n].
\end{align}

On comparing with~\eqref{eq:amb2}:
\begin{equation}
    \label{eq:weilex1c}
    g \cdot (k,l) = \big(k,(2Ak+l)\big) \implies g = \begin{bmatrix}
        1 & 0 \\ 2A & 1
    \end{bmatrix} \in \mathcal{G},
\end{equation}
and:
\begin{align}
    \label{eq:weilex1d}
    \mathbf{A}_{\mathbf{x},\mathbf{y}}[k,l] &= e^{-\frac{j2\pi}{MN}Ak^2} \mathbf{A}_{\mathcal{W}(g) \mathbf{x},\mathcal{W}(g) \mathbf{y}}[k,(2Ak+l)].
\end{align}

\begin{example}
    \label{ex:sympl_tx_ex2}
    Consider the symplectic transformation:
    \begin{align*}
        (\mathcal{W}(g) \mathbf{x})[n] = \frac{1}{\sqrt{MN}} \sum_{n_1 = 0}^{MN-1} e^{j\frac{\pi}{MN} b^{-1}_{{}_{MN}} (d n^2 - 2 nn_1 + a n_1^2)} \mathbf{x}[n_1],~(b,MN) = 1,
    \end{align*}
    which is defined for any group element $g \in \mathcal{G}$ with $(b,MN) = 1$, and corresponds to a discrete generalization of the \emph{affine Fourier transform}, referred to as GDAFT in short.
\end{example}


Example~\ref{ex:sympl_tx_ex2} is the discrete form of the continuous affine Fourier transform presented in~\cite{Moran2004_wvf_lib,Moran2009_wvf_lib_survey}. We now specialize the derivation in~\eqref{eq:amb2} to this transformation. Let $\tilde{\mathbf{y}}[n] = (\mathcal{W}(g) \mathbf{x})[n]$ as before. We calculate:
\begin{align}
    \label{eq:weilex2c}
    (\mathcal{W}(g) \circ \mathcal{D}_{(k,l)} \circ \mathcal{W}^{-1}(g) \tilde{\mathbf{y}})[n] &= \frac{1}{MN} \sum_{n_2 = 0}^{MN-1} e^{j\frac{\pi}{MN} b^{-1}_{{}_{MN}} (d n^2 - 2 nn_2 + a n_2^2)} e^{\frac{j2\pi}{MN}l(n_2-k)} \nonumber \\ &~~~~~~~\times\sum_{n_1 = 0}^{MN-1} e^{-j\frac{\pi}{MN} b^{-1}_{{}_{MN}} (d n_1^2 - 2 n_1(n_2-k) + a (n_2-k)^2)} \tilde{\mathbf{y}}[n_1] \nonumber \\
    &= e^{j\frac{\pi}{MN} [b^{-1}_{{}_{MN}} (d n^2-ak^2)]} \sum_{n_1 = 0}^{MN-1} \tilde{\mathbf{y}}[n_1] e^{-j\frac{\pi}{MN} b^{-1}_{{}_{MN}} (d n_1^2 + 2 n_1k)} \nonumber \\ &~\times e^{-j\frac{2\pi}{MN}lk}\frac{1}{MN} \sum_{n_2 = 0}^{MN-1} e^{j\frac{2\pi}{MN}[l+b^{-1}_{{}_{MN}}(n_1-n+ak)]n_2}.
\end{align}

The inner summation over $n_2$ vanishes unless $l+b^{-1}_{{}_{MN}}(n_1-n+ak) \equiv 0 \bmod{MN} \implies n_1 = (n-(ak+bl))_{{}_{MN}}$, when it takes the value $MN$. Thus:
\begin{align}
    \label{eq:weilex2d}
    (\mathcal{W}(g) \circ \mathcal{D}_{(k,l)} \circ \mathcal{W}^{-1}(g) \tilde{\mathbf{y}})[n] &= e^{j\frac{\pi}{MN} [b^{-1}_{{}_{MN}} (d n^2-ak^2) - 2lk]} \tilde{\mathbf{y}}[(n-(ak+bl))_{{}_{MN}}] \nonumber \\ &~~~\times e^{-j\frac{\pi}{MN} b^{-1}_{{}_{MN}} (d (n-(ak+bl))^2 + 2 (n-(ak+bl))k)}  \nonumber \\
    &= e^{j\frac{\pi}{MN} b^{-1}_{{}_{MN}} [d(ak + bl)^2 - ak^2 - 2blk]} \tilde{\mathbf{y}}[(n-(ak+bl))_{{}_{MN}}] \nonumber \\ &~~~\times e^{j\frac{2\pi}{MN} b^{-1}_{{}_{MN}} [d(ak+bl)-k] (n-(ak+bl))} \nonumber \\
    &= e^{j\frac{\pi}{MN} [ack^2 + bdl^2 + 2bclk]} \nonumber \\ &~~~\times e^{j\frac{2\pi}{MN} [ck+dl] (n-(ak+bl))} \tilde{\mathbf{y}}[(n-(ak+bl))_{{}_{MN}}] \nonumber \\
    &= e^{j\frac{\pi}{MN} [ack^2 + bdl^2 + 2bclk]} (\mathcal{D}_{g\cdot(k,l)} \tilde{\mathbf{y}})[n],
\end{align}
where $g = \begin{bmatrix}
        a & b \\ c & d
    \end{bmatrix} \in \mathcal{G}$. On comparing with~\eqref{eq:amb2}:
\begin{align}
    \label{eq:weilex2e}
    \mathbf{A}_{\mathbf{x},\mathbf{y}}[k,l] &= e^{-j\frac{\pi}{MN} [ack^2 + bdl^2 + 2bclk]} \mathbf{A}_{\mathcal{W}(g) \mathbf{x},\mathcal{W}(g) \mathbf{y}}[g\cdot(k,l)].
\end{align}

\section{A New Framework for Discrete Radar}
\label{sec:impl_radar}


In this Section, we describe a new architecture for discrete radar that implements the Heisenberg-Weyl group theory described in Section~\ref{sec:hw_group}. Figure~\ref{fig:block_diag} illustrates three different radar architectures -- continuous radar, discrete radar with separate optimization of sequences and carrier waveforms, and our proposed approach.

\begin{figure}[!ht]
\centering
\begin{subfigure}{0.95\columnwidth}
    \includegraphics[width=\textwidth]{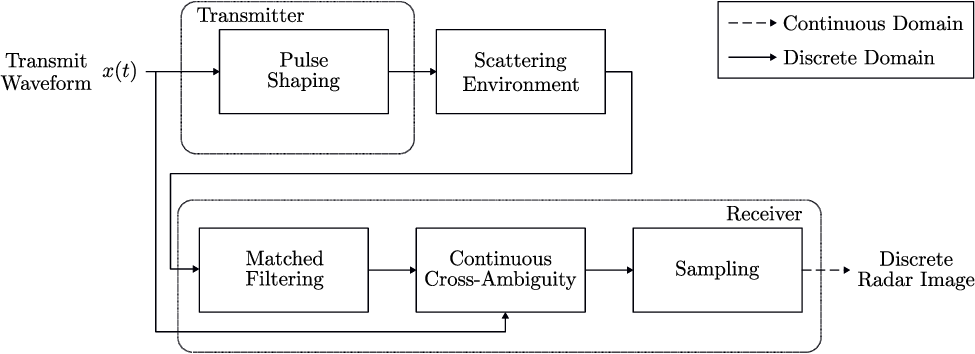}
\caption{Continuous radar architecture.}
    \label{fig:cont_radar}
\end{subfigure}
\begin{subfigure}{0.95\columnwidth}
    \includegraphics[width=\textwidth]{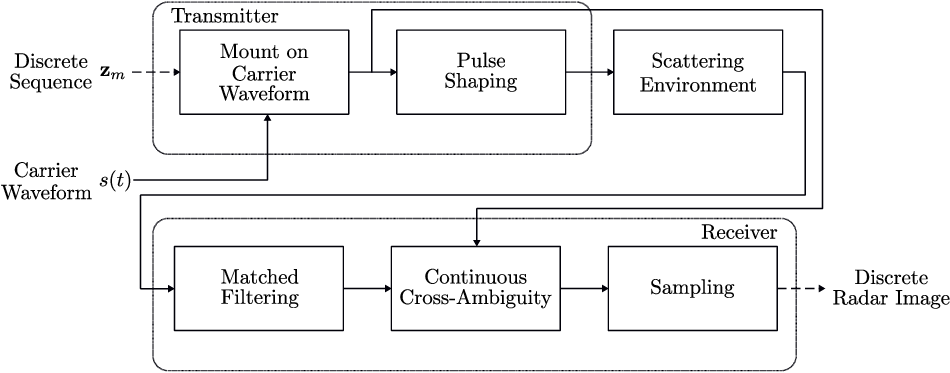}
\caption{Discrete radar architecture with separate optimization of sequences and carriers.}
    \label{fig:disc_phase_coded_radar}
\end{subfigure}
\begin{subfigure}{0.95\columnwidth}
    \includegraphics[width=\textwidth]{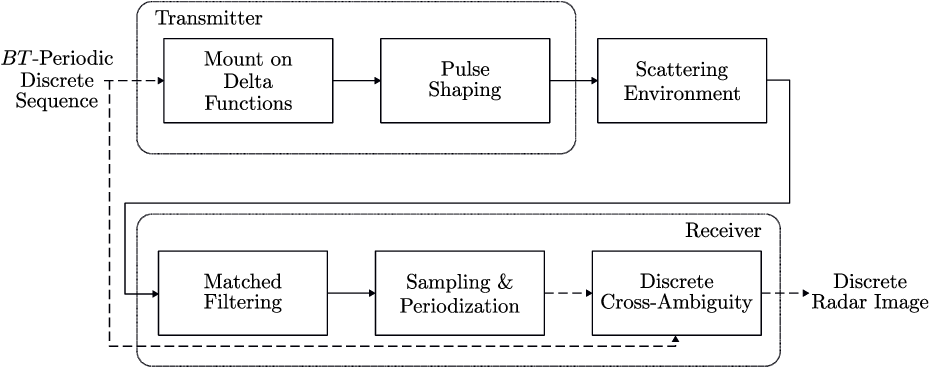}
\caption{Proposed discrete radar architecture.}
    \label{fig:disc_prop_radar}
\end{subfigure}
\caption{Block diagrams showing different radar architectures.}
\vspace{-5mm}
    \label{fig:block_diag}
\end{figure}

Continuous radars~\cite{Auslander1985,Miller1991,Moran2001_mathofradar,Moran2004_grouptheory_radar,Calderbank2015_ltv,Jankiraman2018_fmcw} (Fig.~\ref{fig:block_diag}(\subref{fig:cont_radar})) transmit a continuous TD waveform $x(t)$ after pulse shaping to limit time and bandwidth to $T$ and $B$ respectively. The receiver performs matched filtering to remove the impact of pulse shaping, following which a continuous domain cross-ambiguity function is computed between the output of the matched filter and the transmit waveform. The radar forms a discrete radar image by sampling the continuous cross-ambiguity function at multiples of the delay and Doppler resolution.

Conventional discrete radars~\cite{benedetto_phasecoded,Pezeshki2008,Dang2020,Tang2022,Costas1984,Bell2003,Vehmas2021,Pezeshki2009,Fei2024_phasecoded_compl} (see Fig.~\ref{fig:block_diag}(\subref{fig:disc_phase_coded_radar})) mount a discrete sequence $\mathbf{z}_m$ onto a continuous carrier waveform (e.g., a rectangular pulse train), where the sequence is optimized separately from the carrier waveform. Possible strategies for sequence optimization include coding either the \emph{phase}~\cite{benedetto_phasecoded,Pezeshki2008,Dang2020,Tang2022,Fei2024_phasecoded_compl}, \emph{frequency}~\cite{Costas1984,Bell2003,Vehmas2021} or \emph{amplitude}~\cite{Pezeshki2009} of the carrier waveform. After pulse shaping at the transmitter and matched filtering at the receiver, a continuous cross-ambiguity function is computed, which is sampled to obtain the discrete radar image. 

We propose a different architecture for discrete radar, which is illustrated in Fig.~\ref{fig:block_diag}(\subref{fig:disc_prop_radar}). We mount $BT$-periodic discrete sequences onto delta functions, which are pulse shaped and matched filtered at the received. The output of the matched filter is first sampled and periodized to period $BT$. Then, a discrete cross-ambiguity function is computed between the discrete transmitted sequence and the sampled \& periodized matched filter output to obtain the discrete radar image. The key difference with conventional discrete radars (Fig.~\ref{fig:block_diag}(\subref{fig:disc_phase_coded_radar})) is that our approach jointly optimizes the sequence and carrier waveform, which significantly reduces sidelobes of the ambiguity function and improves target detection, as we show in Section~\ref{subsec:impl_radar_wvf}. The Heisenberg-Weyl group theory described in Section~\ref{sec:hw_group} also enables developing radar waveform libraries with low peak-to-average power ratio (PAPR), which we describe in Section~\ref{subsec:papr_radar}. Moreover, sampling prior to the cross-ambiguity computation results in significant complexity gains over both continuous and discrete radars (Figs.~\ref{fig:block_diag}(\subref{fig:cont_radar})-(\subref{fig:disc_phase_coded_radar})), as we show in Section~\ref{subsec:low_compl_radar}.

Fig.~\ref{fig:heatmaps} compares the discrete radar images obtained using the three radar architectures from Fig.~\ref{fig:block_diag} for a four target scattering environment. In Fig.~\ref{fig:heatmaps}(\subref{fig:heatmaps_fmcw}) we consider a continuous radar with an up-slope and down-slope LFM transmit waveform~\cite{Calderbank2015_ltv}. The resulting radar image suffers from extremely high sidelobes due to the ambiguity function characteristics of the LFM waveform. In Fig.~\ref{fig:heatmaps}(\subref{fig:heatmaps_phasecoded}) we consider a discrete radar with a Zadoff-Chu (ZC) phase-coded rectangular waveform. Despite improved target discrimination capabilities, there are significant sidelobes due to the non-zero sidelobes of the waveform (cf. Fig.~\ref{fig:sa}(\subref{fig:sa_benedetto})). In contrast, our proposed approach generates a radar image with perfectly localized targets, as shown in Fig.~\ref{fig:heatmaps}(\subref{fig:heatmaps_zak}). 

\begin{figure}[!ht]
\centering
\begin{subfigure}{0.32\columnwidth}
    \includegraphics[width=\textwidth]{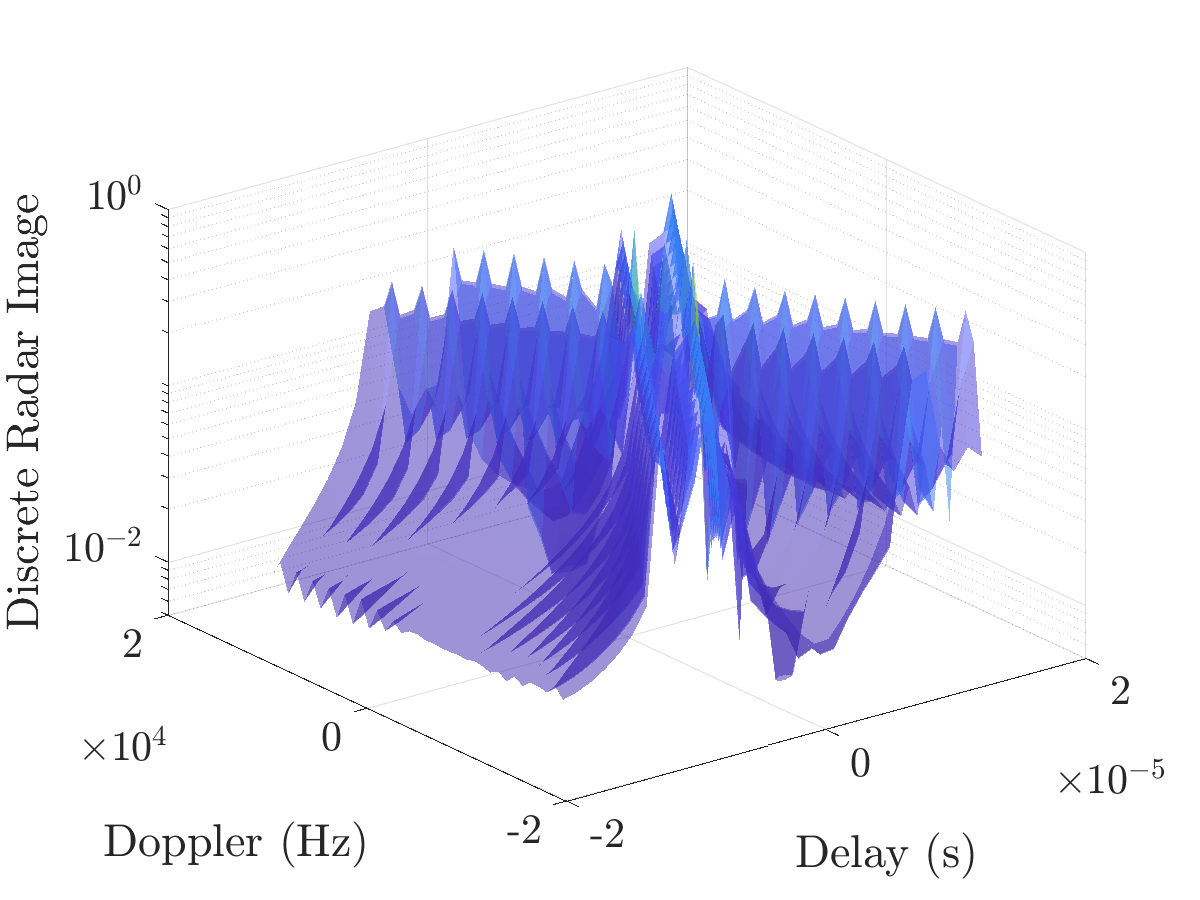}
\caption{Continuous (Fig.~\ref{fig:block_diag}(\subref{fig:cont_radar}))}
    \label{fig:heatmaps_fmcw}
\end{subfigure}
\begin{subfigure}{0.32\columnwidth}
    \includegraphics[width=\textwidth]{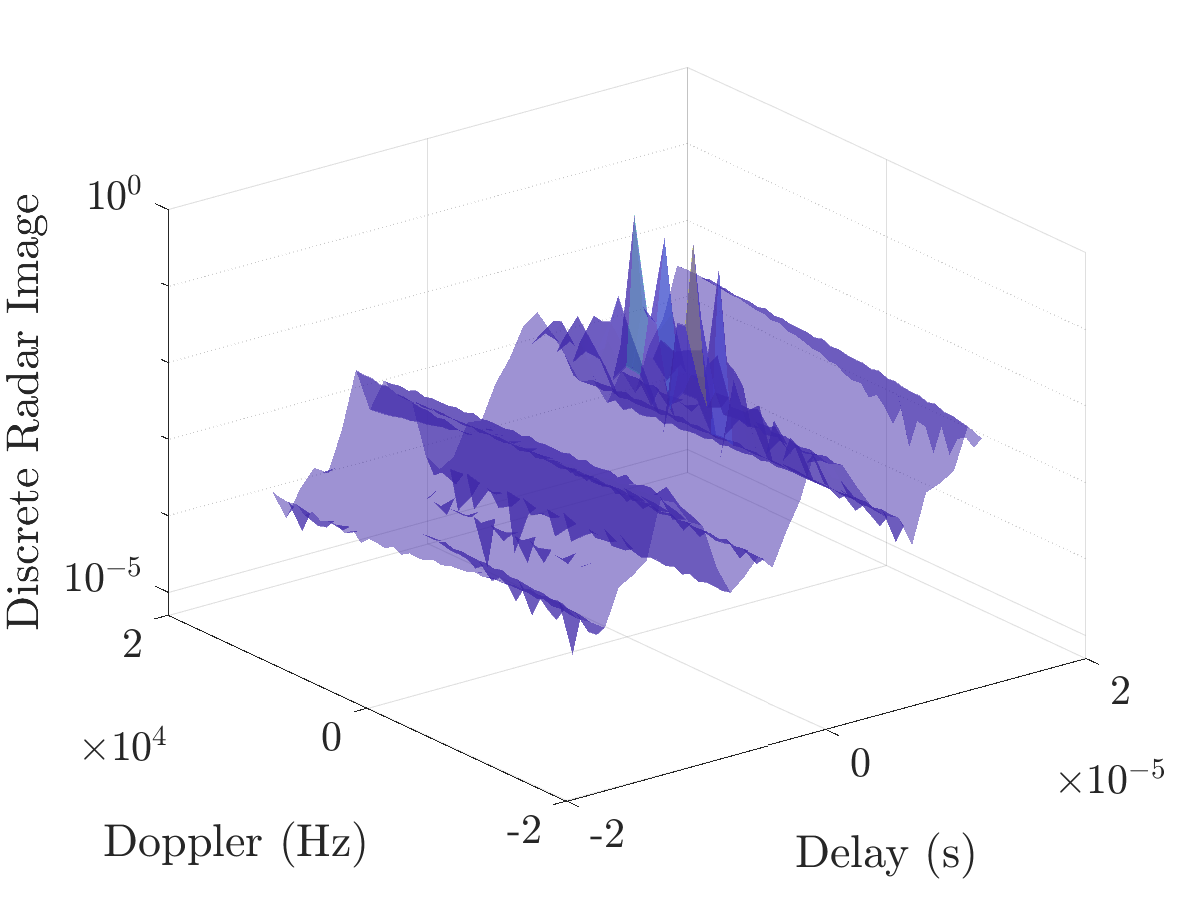}
\caption{Discrete (Fig.~\ref{fig:block_diag}(\subref{fig:disc_phase_coded_radar})).}
    \label{fig:heatmaps_phasecoded}
\end{subfigure}
\begin{subfigure}{0.32\columnwidth}
    \includegraphics[width=\textwidth]{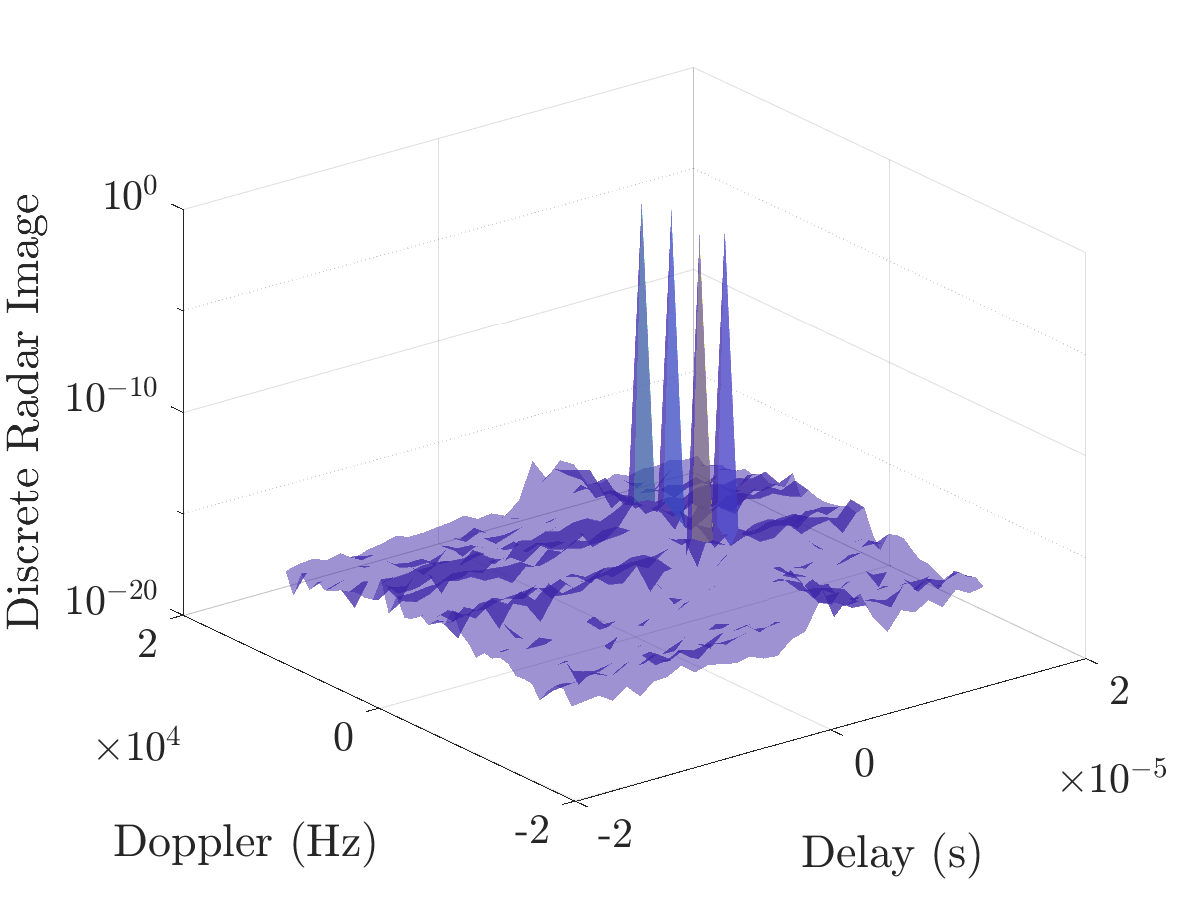}
\caption{Proposed (Fig.~\ref{fig:block_diag}(\subref{fig:disc_prop_radar})).}
    \label{fig:heatmaps_zak}
\end{subfigure}
\caption{The continuous and discrete radar architectures generate radar images with significant sidelobes around the ground truth target locations due to the choice of the carrier waveforms. Our proposed approach generates a perfectly localized radar image.}
\vspace{-5mm}
    \label{fig:heatmaps}
\end{figure}

\begin{remark}
    \label{rmk:radar1}
    Radars ask questions of scattering environments in order to control higher-layer functions such as target tracking. Images of scatterers are a means to an end rather than an end in itself. There may be small differences in the images of a scattering environment produced by continuous and discrete radars (cf. Fig.~\ref{fig:heatmaps}), but both images have the information necessary to control higher layer functions. We refer the reader to the work of Bell~\cite{Bell2002} for information theoretic measures of effectiveness, to Kershaw and Evans~\cite{Evans2002} for information theoretic criteria for waveform scheduling to support tracking. For more details, see~\cite{Moran2004_wvf_lib,Moran2009_wvf_lib_survey,Evans2002,Moran2006}. 
\end{remark}

\subsection{Benefit 1: ``Bed-of-Nails'' Ambiguity Functions}
\label{subsec:impl_radar_wvf}

One benefit of our proposed discrete radar architecture is that it enables perfectly localized \emph{``bed-of-nails''} ambiguity functions. Recall from Section~\ref{subsec:prelim_ambgfun} that an accurate estimate of the DD scattering environment is achieved when the self-ambiguity function of the transmit waveform is as ``thumbtack-like'' as possible, $\mathbf{A}_{\mathbf{x}}[k,l] \approx \delta[k] \delta[l]$. However, Moyal's Identity (Identity~\ref{idty:moyal} and Corollary~\ref{corr:moyal_setcard}) limits what can be achieved.

The standard approach taken by discrete radars (see Fig.~\ref{fig:block_diag}(\subref{fig:disc_phase_coded_radar})) is to regard the waveform as a signal modulated onto a carrier and to separate the carrier modulation and demodulation processes from the analysis of ambiguity. However, their performance is limited by the ambiguity function characteristics of the chosen carrier waveform. As an illustration, Fig.~\ref{fig:sa}(\subref{fig:sa_benedetto}) plots the magnitude of the self-ambiguity function for a Zadoff-Chu phase-coded rectangular waveform as defined in~\cite{benedetto_phasecoded}, with the waveform designed to have self-ambiguity function magnitude close to $1$ on the line $2\alpha k - l \equiv 0 \bmod{MN}$. The ambiguity characteristics of the rectangular carrier waveform results in significant sidelobes outside the locations given by the line.



\begin{figure}[!ht]
\centering
\begin{subfigure}{0.45\columnwidth}
    \includegraphics[width=\textwidth]{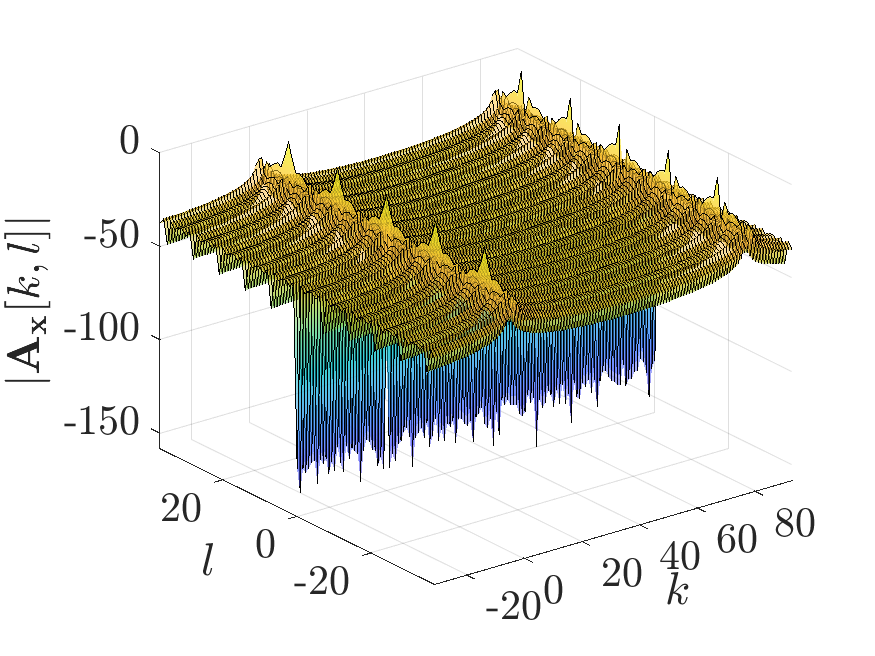}
\caption{ZC phase-coded waveform~\cite{benedetto_phasecoded}.}
    \label{fig:sa_benedetto}
\end{subfigure}
\begin{subfigure}{0.45\columnwidth}
    \includegraphics[width=\textwidth]{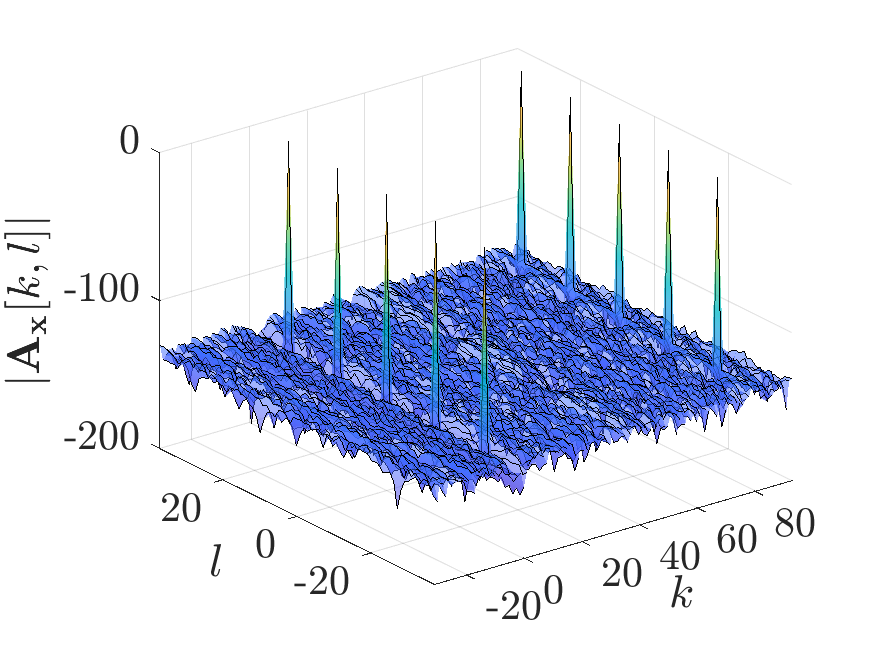}
\caption{Eigenvector in Lemma~\ref{lmm:eigenvec_amb_ex2_tdcazac}.}
    \label{fig:sa_proposed}
\end{subfigure}
\caption{The proposed discrete radar architecture in Fig.~\ref{fig:block_diag}(\subref{fig:disc_prop_radar}) achieves perfectly localized ``bed-of-nails'' ambiguity functions. (a) The standard approach of phase coding a rectangular waveform~\cite{benedetto_phasecoded} in discrete radars (see Fig.~\ref{fig:block_diag}(\subref{fig:disc_phase_coded_radar})) has high sidelobes due to poor ambiguity characteristics of the chosen carrier waveform. (b) Choosing the transmitted waveform as an eigenvector of a maximal commutative subgroup results in ``bed-of-nails'' ambiguity functions with minimal sidelobes.}
\vspace{-5mm}
    \label{fig:sa}
\end{figure}

In contrast, our proposed discrete radar architecture in Fig.~\ref{fig:block_diag}(\subref{fig:disc_prop_radar}) achieves \emph{perfectly localized} ``bed-of-nails'' ambiguity functions with minimal sidelobes. Specifically, we choose the radar waveform as an eigenvector of a maximal commutative subgroup. Recall from Corollary~\ref{corr:max_comm_ambg} that for a maximal commutative subgroup $\mathcal{T}_{MN}$ with eigenbasis $\big\{\mathbf{v}_{i}\big\}_{i=1}^{MN}$, the self-ambiguity function of any eigenvector $\mathbf{v}_{i}$ vanishes for all $e^{\frac{j2\pi}{MN}m} \mathcal{D}_{(k,l)} \not\in \mathcal{T}_{MN}$ and is unimodular only at $e^{\frac{j2\pi}{MN}m} \mathcal{D}_{(k,l)} \in \mathcal{T}_{MN}$. Fig.~\ref{fig:sa}(\subref{fig:sa_proposed}) illustrates the advantage of our approach. We consider the discrete chirp eigenvector from Lemma~\ref{lmm:eigenvec_amb_ex2_tdcazac} corresponding to the Heisenberg-Weyl maximal commutative subgroup example in Special Case~\ref{ex:comm_subgrp_ex2}, and observe essentially no sidelobes in the self-ambiguity function magnitude outside the self-ambiguity function support $2\alpha k - l \equiv 0 \bmod{MN}$.



\subsubsection{How to Select a Radar Waveform}
\label{subsubsec:radar_wvf_cryst}

How should one choose an appropriate maximal commutative subgroup $\mathcal{T}_{MN}$ in our approach? Moyal's Identity (Identity~\ref{idty:moyal} and Corollary~\ref{corr:moyal_setcard}) states that a perfect ``thumbtack-like'' ambiguity function is unachievable. However, it is feasible to design a waveform whose self-ambiguity function is zero at all locations excluding the origin in a connected region $\mathcal{C}$ corresponding to the maximum possible support of the scattering environment\footnote{e.g., $\mathcal{C} = [k_{\min},k_{\max}] \times [l_{\min},l_{\max}]$ based on prior knowledge of the minimum/maximum delay and Doppler spreads of the scattering environment.}, $(k,l) \in \mathcal{C}$, $(k,l) \neq (0,0)$. Recall from Corollary~\ref{corr:max_comm_ambg} that our approach of transmitting an eigenvector of a maximal commutative subgroup $\mathcal{T}_{MN}$ yields self-ambiguity functions that are unimodular only at delay-Doppler indices in the set $\mathcal{S}_{\mathcal{T}_{MN}} = \big\{(k,l) \big| e^{\frac{j2\pi}{MN}m} \mathcal{D}_{(k,l)} \in \mathcal{T}_{MN} \big\}$, akin to a ``bed-of-nails''. 

Therefore, we propose to choose $\mathcal{T}_{MN}$ such that translates of the maximum scattering environment support $\mathcal{C}$ by the elements of $\mathcal{S}_{\mathcal{T}_{MN}}$ do not overlap:
\begin{equation}
    \label{eq:no_aliasing}
    \bigg(\bigcup_{(k,l) \in \mathcal{S}_{\mathcal{T}_{MN}}} \big(\mathcal{C} + (k,l)\big) \bigg) \cap \bigg(\bigcup_{(k',l') \in \mathcal{S}_{\mathcal{T}_{MN}}} \big(\mathcal{C} + (k',l')\big) \bigg) = \emptyset,~(k,l) \neq (k',l').
\end{equation}

This is the \emph{crystallization condition} described in~\cite{bitspaper1,bitspaper2,otfs_book} and it implies that an image of the scattering environment can be read off from the response to the radar waveform at delay-Doppler locations within the scattering environment support $\mathcal{S}$. Fig.~\ref{fig:no_aliasing_ex} illustrates how to choose $\mathcal{T}_{MN}$ for a given support $\mathcal{C}$. 

\begin{figure}[!ht]
\centering
\begin{subfigure}{0.49\columnwidth}
    \includegraphics[width=\textwidth]{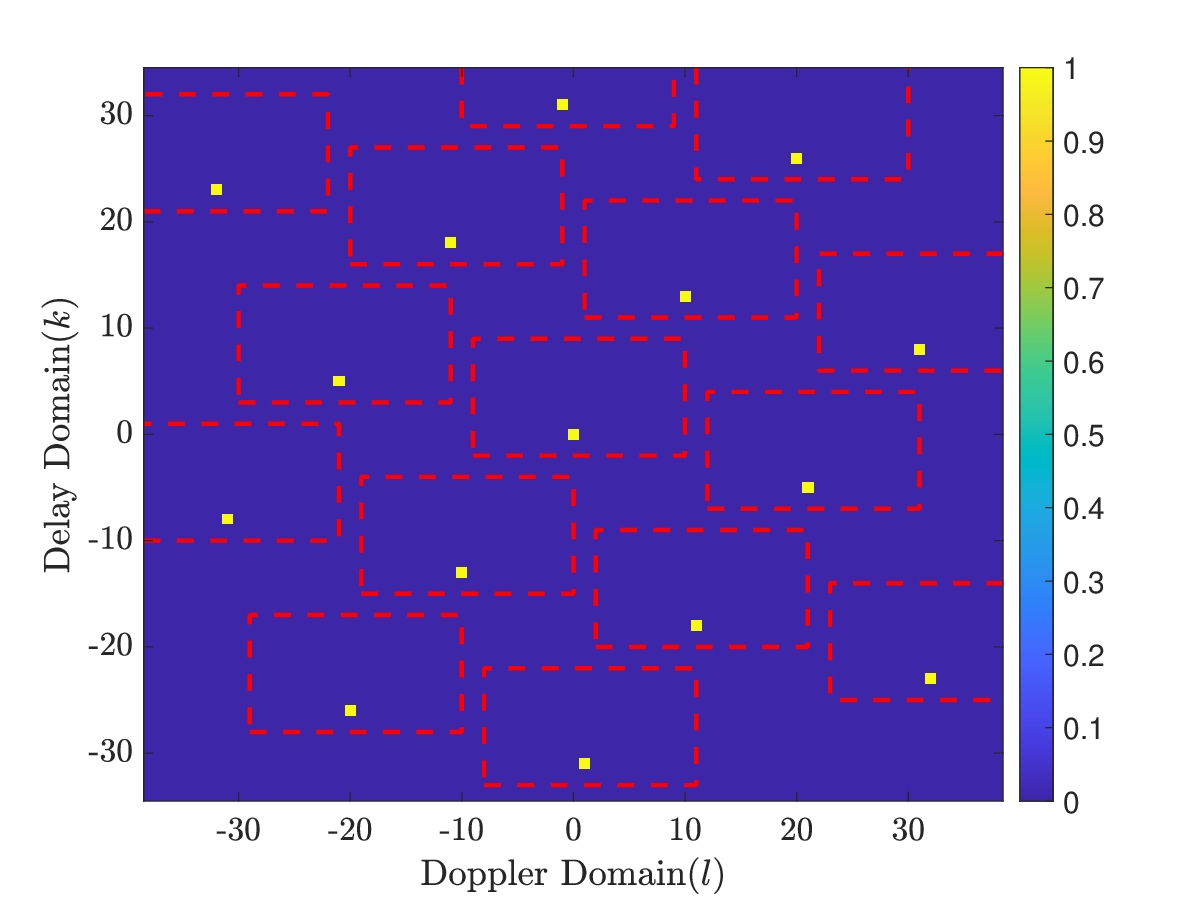}
\caption{Choice of $\mathcal{S}_{\mathcal{T}_{MN}}$ satisfying~\eqref{eq:no_aliasing}.}
    \label{fig:no_aliasing_ex1}
\end{subfigure}
\begin{subfigure}{0.49\columnwidth}
    \includegraphics[width=\textwidth]{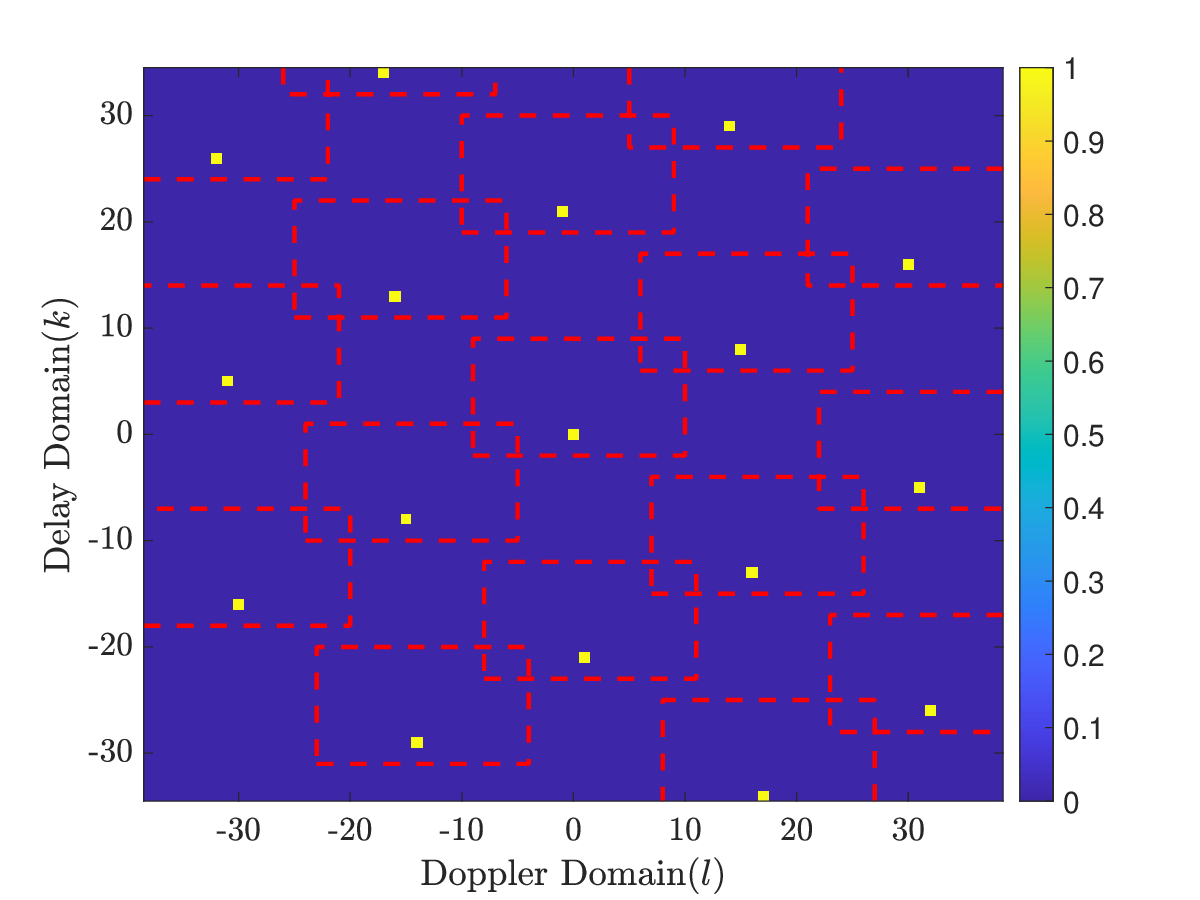}
\caption{Choice of $\mathcal{S}_{\mathcal{T}_{MN}}$ not satisfying~\eqref{eq:no_aliasing}.}
    \label{fig:no_aliasing_ex2}
\end{subfigure}
\caption{Example illustrating the choice of an appropriate maximal commutative subgroup $\mathcal{T}_{MN}$ whose delay-Doppler index set $\mathcal{S}_{\mathcal{T}_{MN}}$ satisfies the condition in~\eqref{eq:no_aliasing} for a given maximum scattering environment support $\mathcal{C}$. In this example, $\mathcal{C} = [k_{\min},k_{\max}] \times [l_{\min},l_{\max}]$ with $k_{\min} = -2$, $k_{\max} = 8$, $l_{\min} = -9$, $l_{\max} = 9$. Figure adapted from~\cite{Mehrotra2025_WCLSpread}.}
\vspace{-5mm}
    \label{fig:no_aliasing_ex}
\end{figure}

\subsection{Benefit 2: Waveform Libraries with Small PAPR}
\label{subsec:papr_radar}

Another benefit of our proposed discrete radar architecture is that it enables defining waveform libraries with low PAPR. Adaptive radars define libraries of waveforms from which an appropriate waveform is chosen in real-time based on the operational requirement~\cite{Moran2004_wvf_lib,Moran2009_wvf_lib_survey,Evans2002,Moran2006}. For instance, it was shown in~\cite{Moran2004_wvf_lib,Moran2009_wvf_lib_survey} that waveform libraries that rotate and chirp a template waveform maximize the mutual information for tracking applications. The ability to rotate a template waveform has also shown to be useful in sensing scattering environments where the product of the maximum delay and Doppler spreads exceeds the time-bandwidth product $BT$~\cite{Calderbank2025_isac}. At the same time, it is essential to ensure that each waveform in the library can be transmitted at low hardware cost. A standard metric to quantify the hardware cost of a waveform transmission is the PAPR~\cite{Jiang2008_papr}, which measures the ratio of the instantaneous power to the average power of a waveform. Waveforms with a large PAPR are undesirable from a hardware perspective since they require high dynamic range components and degrade the efficiency of power amplifiers.

The theory of symplectic transformations presented in Section~\ref{subsec:gdaft} enables meeting both objectives, i.e., defines waveform libraries with low PAPR. Recall from Lemma~\ref{lmm:weil_prop2} that symplectic transformations rotate the ambiguity functions of their inputs. Moreover, we presented examples of symplectic transformations in Section~\ref{subsec:gdaft}; the transformation in Example~\ref{ex:sympl_tx_ex1} linearly modulates the frequency of the input waveform, and the GDAFT in Example~\ref{ex:sympl_tx_ex2} rotates the self-ambiguity of the input waveform. The combination of the two transformations can be used to define a discrete radar waveform library following the principles presented in~\cite{Moran2004_wvf_lib,Moran2009_wvf_lib_survey}. Fig.~\ref{fig:wvf_lib} illustrates how the GDAFT rotates the self-ambiguity function of a pulsone basis element (Lemma~\ref{lmm:eigenvec_amb_ex1_pulsone}). 

\begin{figure}[t]
\centering
\begin{subfigure}{0.49\columnwidth}
    \includegraphics[width=\textwidth]{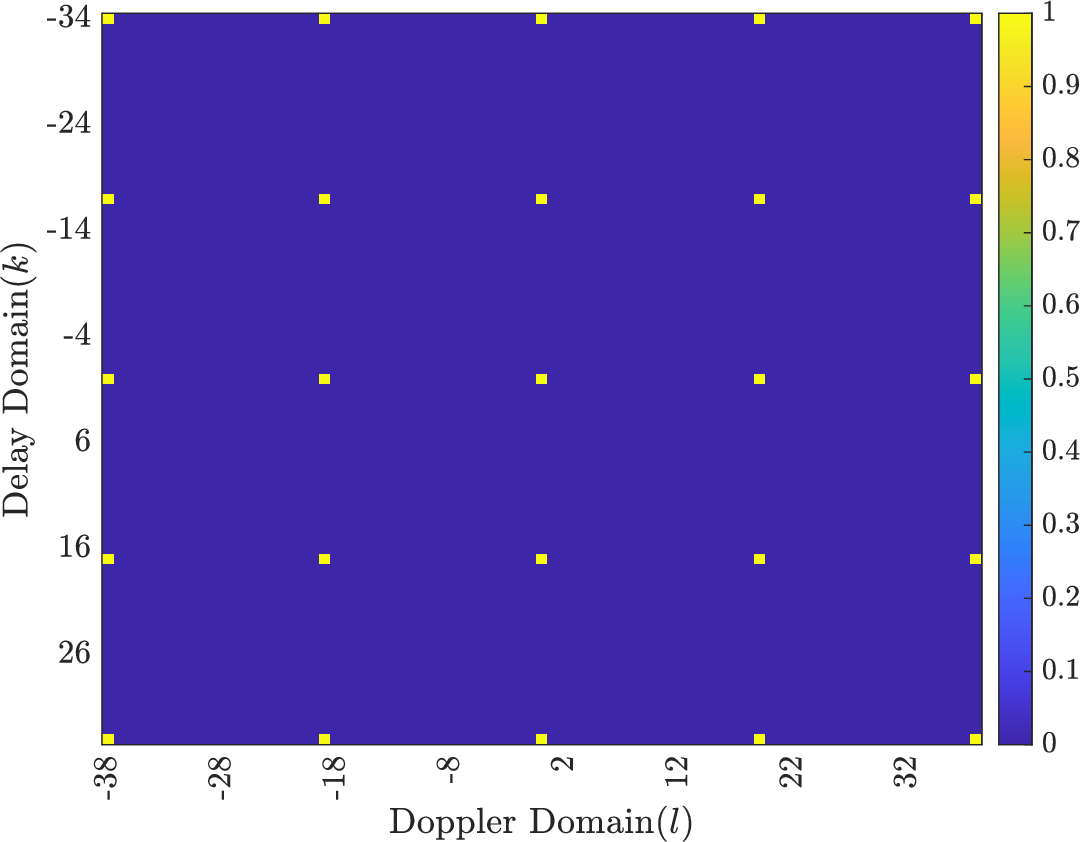}
\caption{Pulsone basis element.}
    \label{fig:wvf_lib1}
\end{subfigure}
\begin{subfigure}{0.49\columnwidth}
    \includegraphics[width=\textwidth]{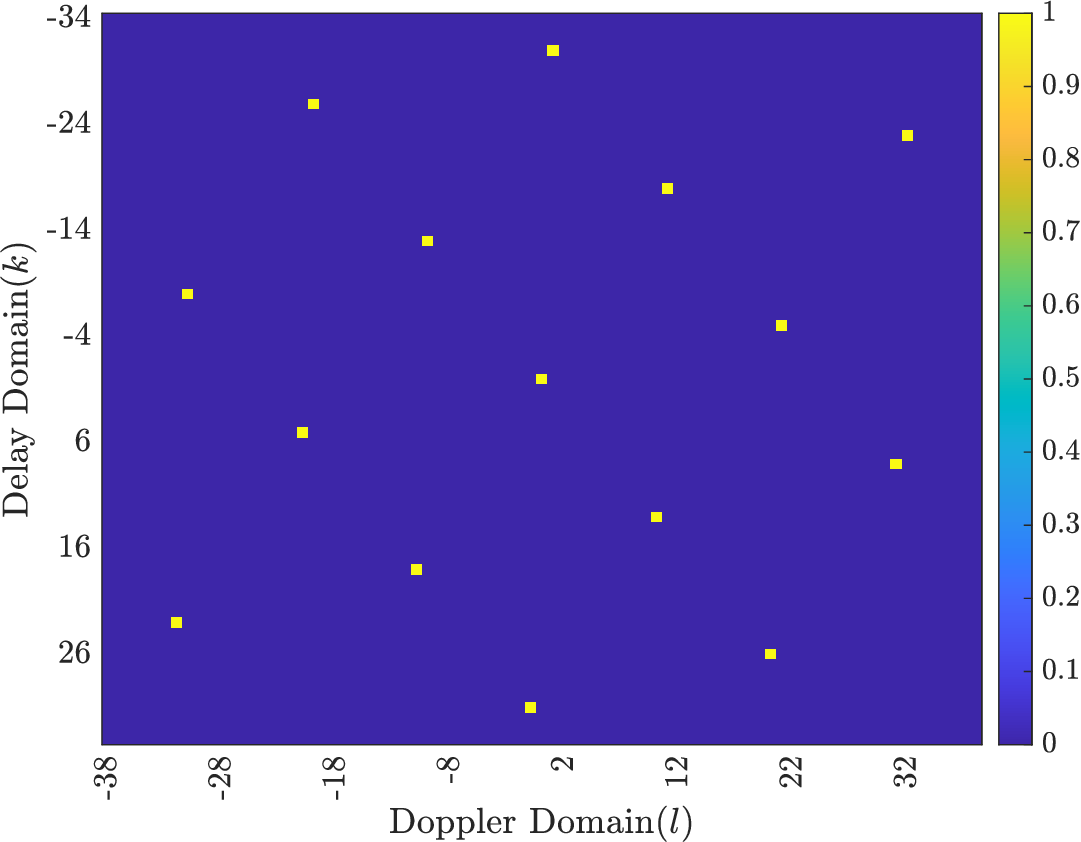}
\caption{GDAFT of pulsone.}
    \label{fig:wvf_lib2}
\end{subfigure}
\caption{The GDAFT (Example~\ref{ex:sympl_tx_ex2}) defines waveform libraries with rotated ambiguity functions of a template waveform, shown for a pulsone basis element (Lemma~\ref{lmm:eigenvec_amb_ex1_pulsone}).}
\vspace{-5mm}
    \label{fig:wvf_lib}
\end{figure}

A useful property of the GDAFT is that it maps pulsones to low PAPR waveforms, which have been previously used by the authors for spread carrier communication in~\cite{Mehrotra2025_WCLSpread}. Fig.~\ref{fig:papr} illustrates how the GDAFT maps the ``peaky'' pulsone waveform (Fig.~\ref{fig:papr}(\subref{fig:tdwvf_pulsone})) to a constant amplitude waveform (Fig.~\ref{fig:papr}(\subref{fig:tdwvf_cazac})). Fig.~\ref{fig:papr}(\subref{fig:tdwvf_papr}) plots the complementary cumulative distribution function (CCDF) of the PAPR (in dB) of both waveforms, showing a $5.6$ dB reduction in the PAPR on applying the GDAFT.

\begin{figure}[t]
\centering
\begin{subfigure}{0.32\columnwidth}
    \includegraphics[width=\textwidth]{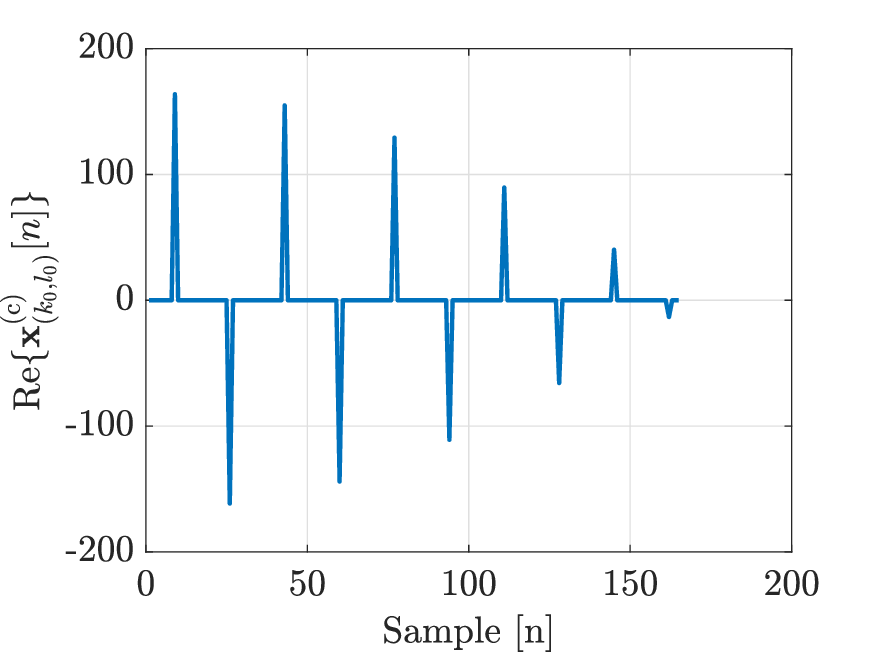}
\caption{Pulsone basis element.}
    \label{fig:tdwvf_pulsone}
\end{subfigure}
\begin{subfigure}{0.32\columnwidth}
    \includegraphics[width=\textwidth]{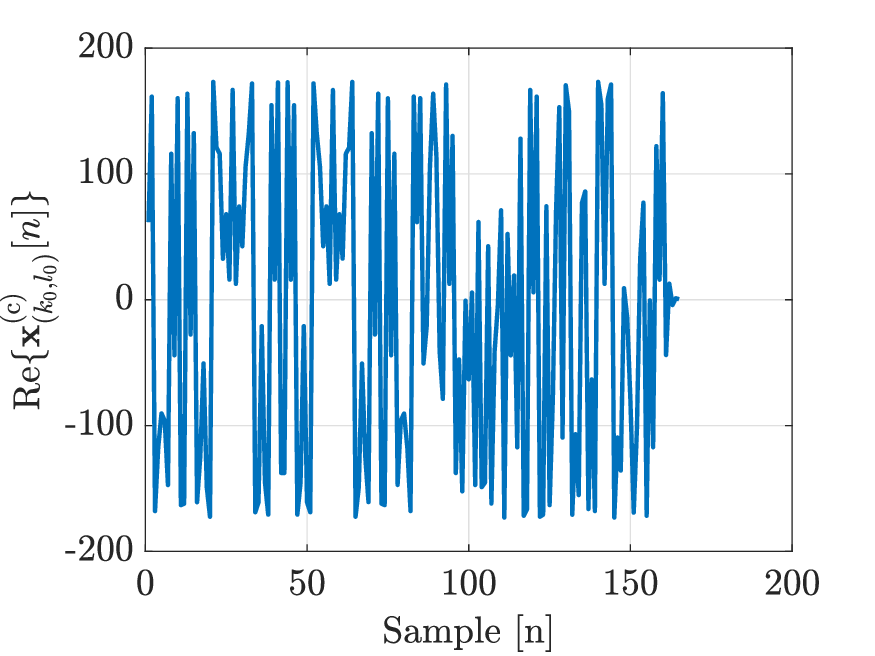}
\caption{GDAFT of pulsone.}
    \label{fig:tdwvf_cazac}
\end{subfigure}
\begin{subfigure}{0.32\columnwidth}
    \includegraphics[width=\textwidth]{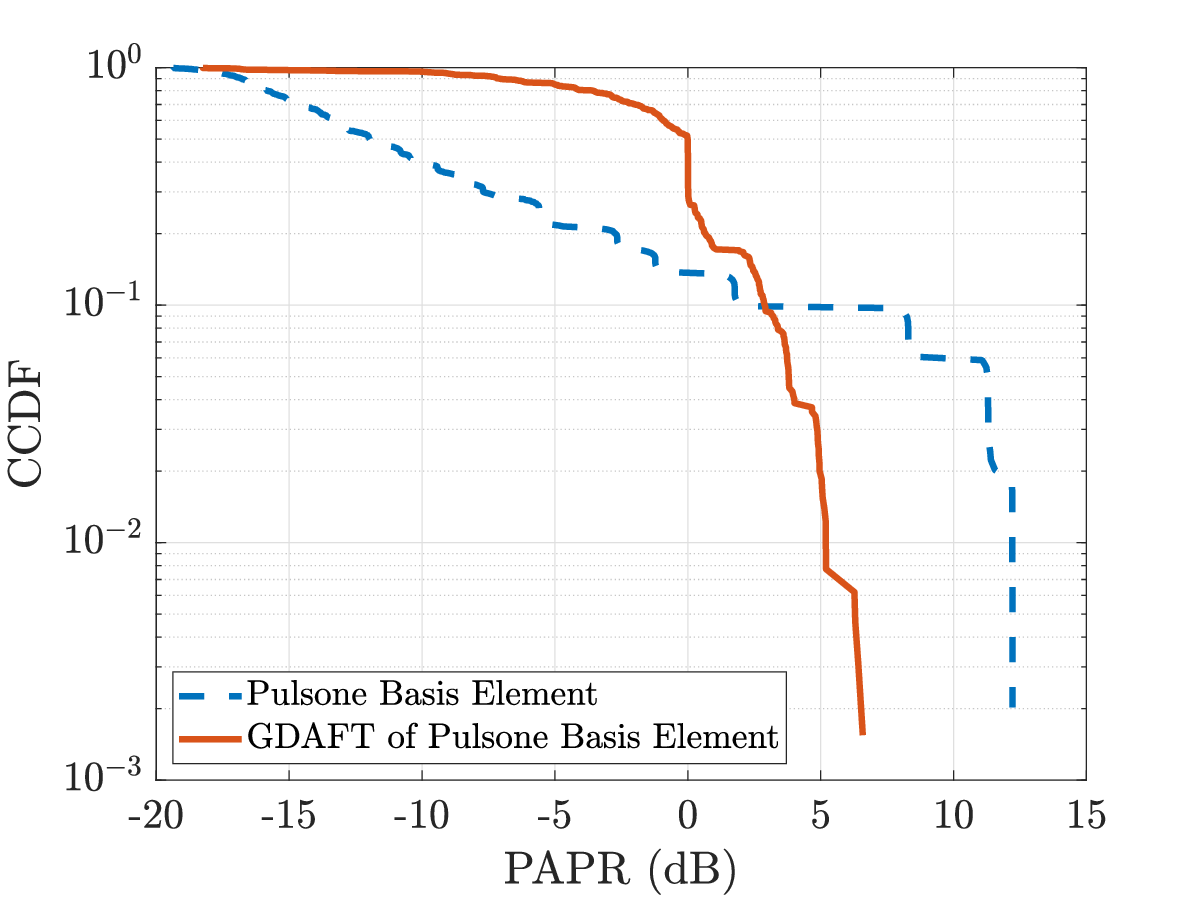}
\caption{PAPR comparison.}
    \label{fig:tdwvf_papr}
\end{subfigure}
\caption{The GDAFT (Example~\ref{ex:sympl_tx_ex2}) reduces the PAPR of the pulsone basis element (Lemma~\ref{lmm:eigenvec_amb_ex1_pulsone}) by about $5.6$ dB. Figure adapted from~\cite{Mehrotra2025_WCLSpread}.}
\vspace{-5mm}
    \label{fig:papr}
\end{figure}

\subsection{Benefit 3: Low-Complexity Cross-Ambiguity Computation}
\label{subsec:low_compl_radar}


The final benefit of our proposed discrete radar architecture is the complexity gain in the cross-ambiguity computation over the conventional continuous and discrete radar architectures. As described in Section~\ref{subsec:prelim_ambgfun}, radars form an image of the scattering environment by computing the cross-ambiguity function from Definition~\ref{def:amb_fun} between the received and transmitted signals. In the following Lemma, we show that the complexity of calculating the cross-ambiguity with our proposed discrete radar architecture in Fig.~\ref{fig:block_diag}(\subref{fig:disc_prop_radar}) is only $\mathcal{O}(BT\log T) = \mathcal{O}(MN\log N)$ when the pulsone basis element from Lemma~\ref{lmm:eigenvec_amb_ex1_pulsone} (or its symplectic transformation) is used as the transmit waveform. In contrast, the complexity of the cross-ambiguity with the continuous and discrete radar architectures in Figs.~\ref{fig:block_diag}(\subref{fig:cont_radar}) and (\subref{fig:disc_phase_coded_radar}) is $\mathcal{O}(B^2T^2) = \mathcal{O}(M^2N^2)$.


\begin{lemma}
    \label{lmm:low_compl_radar}
    The cross-ambiguity between two unit-norm $MN$-periodic sequences $\mathbf{x}$ and $\mathbf{y}$, where $\mathbf{y}$ is the Zak-OTFS pulsone from Lemma~\ref{lmm:eigenvec_amb_ex1_pulsone}, has $\mathcal{O}(MN\log N)$ complexity. Moreover, the complexity remains unchanged on applying a symplectic transformation to $\mathbf{x}$ and $\mathbf{y}$.
\end{lemma}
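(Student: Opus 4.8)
The plan is to substitute the explicit form of the pulsone $\mathbf{y}=\mathbf{v}_{i}$, $i=k_{0}+l_{0}M$, from Lemma~\ref{lmm:eigenvec_amb_ex1_pulsone} into the cross-ambiguity of Definition~\ref{def:amb_fun} and show that the defining sum collapses to a single length-$N$ discrete Fourier transform of a subsampled copy of $\mathbf{x}$. Concretely, the Kronecker deltas carried by $\mathbf{y}^{*}$ restrict the summation index to $n\equiv k+k_{0}+dM \bmod MN$ with $d\in\{0,\dots,N-1\}$; factoring the $l$-dependence of the exponential via $e^{-\frac{j2\pi}{MN}l(k_{0}+dM)}=e^{-\frac{j2\pi}{MN}lk_{0}}e^{-\frac{j2\pi}{N}ld}$ gives
\begin{align*}
    \mathbf{A}_{\mathbf{x},\mathbf{y}}[k,l] \;=\; \frac{1}{\sqrt{N}}\,e^{-\frac{j2\pi}{MN}lk_{0}} \sum_{d=0}^{N-1}\mathbf{x}\big[(k+k_{0}+dM)_{{}_{MN}}\big]\,e^{-\frac{j2\pi}{N}d(l_{0}+l)}.
\end{align*}
Setting $g_{k}[d]:=\mathbf{x}\big[(k+k_{0}+dM)_{{}_{MN}}\big]$, which is $N$-periodic in $d$, the inner sum is exactly $\widehat{g_{k}}\big[(l_{0}+l)_{{}_{N}}\big]$, the $N$-point DFT of $g_{k}$. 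So the whole cross-ambiguity is, up to $\mathcal{O}(1)$ phase factors, a lookup into the $M\times N$ table $\{\widehat{g_{k_{1}}}[r]: k_{1}\in\mathbb{Z}_{M},\,r\in\mathbb{Z}_{N}\}$.

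Next I would bound the cost of building that table. For each of the $M$ residues $k_{1}\in\{0,\dots,M-1\}$, forming $g_{k_{1}}$ is $\mathcal{O}(N)$ reads and its DFT is $\mathcal{O}(N\log N)$ (an FFT, e.g.\ Bluestein's algorithm since $N$ is prime), so the table costs $\mathcal{O}(MN\log N)$. If the DD image is required on the full grid $k\in\mathbb{Z}_{MN}$ rather than only on $k\in\mathbb{Z}_{M}$, no extra transforms are needed: a direct check gives $g_{k+M}[d]=g_{k}[d+1]$, i.e.\ $g_{k+M}$ is a cyclic shift of $g_{k}$, so $\widehat{g_{k+M}}[r]=e^{\frac{j2\pi}{N}r}\widehat{g_{k}}[r]$ and hence $\widehat{g_{k_{1}+k_{2}M}}[r]=e^{\frac{j2\pi}{N}k_{2}r}\widehat{g_{k_{1}}}[r]$. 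Consequently every value $\mathbf{A}_{\mathbf{x},\mathbf{y}}[k,l]$ is one table lookup plus two phase multiplications, i.e.\ $\mathcal{O}(1)$, and evaluating the cross-ambiguity over any set of $\mathcal{O}(MN)$ delay-Doppler indices of interest (the $M\times N$ Zak grid, or the crystallization region of Section~\ref{subsubsec:radar_wvf_cryst}) costs $\mathcal{O}(MN\log N)=\mathcal{O}(BT\log T)$, against $\mathcal{O}(M^{2}N^{2})$ for the naive evaluation of Definition~\ref{def:amb_fun}.

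For the symplectic claim I would invoke Lemma~\ref{lmm:weil_prop2} rather than redo the computation: $\mathbf{A}_{\mathcal{W}(g)\mathbf{x},\mathcal{W}(g)\mathbf{y}}[k',l']=e^{\frac{j2\pi}{MN}m'}\,\mathbf{A}_{\mathbf{x},\mathbf{y}}\big[g^{-1}\cdot(k',l')\big]$, where $g^{-1}\cdot(k',l')$ is an affine map of $(k',l')$ over $\mathbb{Z}_{MN}$ and $m'$ is the explicit quadratic phase recorded in Lemma~\ref{lmm:weil_prop2} and Examples~\ref{ex:sympl_tx_ex1}--\ref{ex:sympl_tx_ex2}; both are $\mathcal{O}(1)$ per point. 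Thus, having tabulated $\mathbf{A}_{\mathbf{x},\mathbf{y}}$ as above, each value of $\mathbf{A}_{\mathcal{W}(g)\mathbf{x},\mathcal{W}(g)\mathbf{y}}$ is again $\mathcal{O}(1)$, so the $\mathcal{O}(MN\log N)$ bound is unchanged. If only the transformed waveforms are physically available, one first applies $\mathcal{W}^{-1}(g)$ to recover $\mathbf{x}$ and $\mathbf{y}$; since the transforms of Examples~\ref{ex:sympl_tx_ex1}--\ref{ex:sympl_tx_ex2} are compositions of pointwise chirp multiplications with a single DFT (which factors, as $(M,N)=1$, by the Chinese remainder theorem into an $M$-point and an $N$-point transform), this is one FFT-type pass and does not change the order.

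The hard part is the reduction hidden in the second paragraph: proving that only $M$ length-$N$ DFTs — not one per delay index — are needed. Without the shift-covariance $g_{k+M}[d]=g_{k}[d+1]$ the count would be $\mathcal{O}(MN)\cdot\mathcal{O}(N\log N)=\mathcal{O}(MN^{2}\log N)$, no better than linear in the grid size up to logs; establishing it cleanly requires careful tracking of the nested reductions $(k+k_{0}+dM)_{{}_{MN}}$, in particular the wrap-around $g_{k}[N]=g_{k}[0]$ that makes the shift cyclic. A secondary, mostly expository, point is to be precise about what ``computing the cross-ambiguity'' means: the transform itself is $\mathcal{O}(MN\log N)$, and reading off the $\mathcal{O}(MN)$ entries of interest afterwards is $\mathcal{O}(MN)$, so the total is genuinely sublinear in the $M^{2}N^{2}$ size of the full DD grid.
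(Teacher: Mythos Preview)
Your proposal is correct and follows essentially the same route as the paper: substitute the pulsone into Definition~\ref{def:amb_fun}, let the Kronecker deltas collapse the length-$MN$ sum to a length-$N$ DFT of the subsampled sequence $\mathbf{x}[(k+k_{0})_{{}_{M}}+\bar l M]$, count $M$ such FFTs for $\mathcal{O}(MN\log N)$, and invoke Lemma~\ref{lmm:weil_prop2} for the symplectic claim. Your derivation is in fact slightly cleaner than the paper's (you parametrize directly by $d$ rather than splitting $n=\bar k+\bar l M$), and you supply details the paper leaves implicit: the shift-covariance $g_{k+M}[d]=g_{k}[d+1]$ showing that no extra transforms are needed for $k\in\mathbb{Z}_{MN}$, the use of Bluestein's algorithm for prime $N$, and the observation that applying $\mathcal{W}^{-1}(g)$ costs at most one FFT-type pass.
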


\begin{proof}
    Substituting the pulsone from Lemma~\ref{lmm:eigenvec_amb_ex1_pulsone} in Definition~\ref{def:amb_fun}:
    \begin{align}
        \label{eq:low_compl1}
        \mathbf{A}_{\mathbf{x},\mathbf{y}}[k, l] &= \sum_{n=0}^{MN-1}\mathbf{x}[n]\mathbf{y}^*[(n-k)_{{}_{MN}}]e^{-\frac{j2\pi}{MN}l(n-k)} \nonumber \\
        &= \frac{1}{\sqrt{N}}\sum_{n=0}^{MN-1}\mathbf{x}[n] e^{-\frac{j2\pi}{MN}l(n-k)} \sum_{d \in \mathbb{Z}} e^{-\frac{j2\pi}{N} d l_0} \delta[(n-k-k_0-dM)_{{}_{MN}}] \nonumber \\
        &= \frac{1}{\sqrt{N}}\sum_{\bar{k}=0}^{M-1}\sum_{\bar{l}=0}^{N-1}\mathbf{x}[\bar{k}+\bar{l}M] e^{-\frac{j2\pi}{MN}l(\bar{k}+\bar{l}M-k)} \sum_{d \in \mathbb{Z}} e^{-\frac{j2\pi}{N} d l_0} \nonumber \\ &~~~\times \delta[(\bar{k}+\bar{l}M-k-k_0-dM)_{{}_{MN}}] \nonumber \\
        &= \frac{1}{\sqrt{N}}\sum_{\bar{k}=0}^{M-1}\sum_{\bar{l}=0}^{N-1}\mathbf{x}[\bar{k}+\bar{l}M] e^{-\frac{j2\pi}{MN}l(\bar{k}+\bar{l}M-k)} \sum_{d \in \mathbb{Z}} e^{-\frac{j2\pi}{N} d l_0} \nonumber \\ &~~~\times \delta\bigg[\bar{k}-(k+k_0)_{{}_M}+\bigg(\bar{l}-d-\bigg\lfloor \frac{k+k_0}{M} \bigg\rfloor\bigg)M\bigg] \nonumber \\
        &= \frac{1}{\sqrt{N}}\sum_{\bar{k}=0}^{M-1}\sum_{\bar{l}=0}^{N-1}\mathbf{x}[\bar{k}+\bar{l}M] e^{-\frac{j2\pi}{MN}l(\bar{k}+\bar{l}M-k)} \sum_{d \in \mathbb{Z}} e^{-\frac{j2\pi}{N} d l_0} \mathds{1}\bigg\{\bar{k} \equiv (k+k_0)_{{}_M} \bigg\} \nonumber \\ &~~~\times \delta\bigg[d-\bar{l}+\bigg\lfloor \frac{k+k_0}{M} \bigg\rfloor\bigg] \nonumber \\
        &= \frac{1}{\sqrt{N}}\sum_{\bar{l}=0}^{N-1} \mathbf{x}[(k+k_0)_{{}_M}+\bar{l}M] e^{-\frac{j2\pi}{MN}l((k+k_0)_{{}_M}+\bar{l}M-k)} e^{-\frac{j2\pi}{N} (\bar{l}-\lfloor \frac{k+k_0}{M} \rfloor) l_0} \nonumber \\
        &= e^{-\frac{j2\pi}{MN} \big[l((k+k_0)_{{}_M}-k) - \lfloor \frac{k+k_0}{M} \rfloor l_0 M \big]} \bigg(\frac{1}{\sqrt{N}}\sum_{\bar{l}=0}^{N-1} \mathbf{x}[(k+k_0)_{{}_M}+\bar{l}M] e^{-\frac{j2\pi}{N}(l + l_0)\bar{l}}\bigg).
    \end{align}
    
    The summation within the parenthesis simply corresponds to an FFT operation, hence for a given $k \in \mathbb{Z}_{{}_{M}}$ its value for all $l \in \mathbb{Z}_{{}_{N}}$ can be computed in $\mathcal{O}(N\log N)$ complexity. Therefore, across all $(k,l) \in \mathbb{Z}_{{}_{M}} \times \mathbb{Z}_{{}_{N}}$, the complexity is $\mathcal{O}(MN\log N)$.

    A direct consequence of the above result in conjunction with Lemma~\ref{lmm:weil_prop2} is that the complexity remains $\mathcal{O}(MN\log N)$ on applying a symplectic transformation to $\mathbf{x}$ and $\mathbf{y}$. 
\end{proof}

\begin{lemma}[\cite{Fei2024_phasecoded_compl}]
    \label{lmm:compl_radarcoding}
    The cross-ambiguity between two unit-norm $MN$-periodic sequences $\mathbf{x}$ and $\mathbf{y}$, where $\mathbf{y}$ is an amplitude-, phase- or frequency-coded waveform, has $\mathcal{O}(M^2N^2)$ complexity.
\end{lemma}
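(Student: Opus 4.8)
The plan is to expand the cross-ambiguity directly from Definition~\ref{def:amb_fun} and carry out an operation count, using the pulsone computation in Lemma~\ref{lmm:low_compl_radar} as a foil. The $\mathcal{O}(MN\log N)$ bound obtained there came \emph{entirely} from the quasi-periodicity $\mathbf{y}[n+M]=(\text{phase})\cdot\mathbf{y}[n]$ of the pulsone (Lemma~\ref{lmm:eigenvec_amb_ex1_pulsone}): in~\eqref{eq:low_compl1} this structure, together with Identity~\ref{idty:sumrootsofunity}, collapsed the length-$MN$ inner sum to a single length-$N$ DFT per delay shift. An amplitude-, phase-, or frequency-coded waveform $\mathbf{y}$ is a generic unit-norm sequence mounted on a rectangular carrier: it satisfies no quasi-periodicity finer than period $MN$, and it is not an eigenvector of any maximal commutative subgroup of $\mathcal{H}_{MN}$, so its self-ambiguity lacks the concentrated ``bed-of-nails'' structure of Corollary~\ref{corr:max_comm_ambg}. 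Hence the collapse in~\eqref{eq:low_compl1} does not occur and the inner sum stays full length.

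Concretely, I would fix a delay shift $k$, form the pointwise product $z_k[n]=\mathbf{x}[n]\,\mathbf{y}^{*}[(n-k)_{{}_{MN}}]$ (cost $\mathcal{O}(MN)$, with no substructure inherited for a generic $\mathbf{y}$), and write $\mathbf{A}_{\mathbf{x},\mathbf{y}}[k,l]=e^{\frac{j2\pi}{MN}lk}\sum_{n=0}^{MN-1}z_k[n]\,e^{-\frac{j2\pi}{MN}ln}$, a length-$MN$ discrete Fourier sum of $z_k$ evaluated at Doppler bin $l$. Since each such value is an honest sum of $MN$ terms in which nothing cancels, evaluating the cross-ambiguity on the $\mathcal{O}(MN)$ delay-Doppler bins of interest---the grid $\mathbb{Z}_M\times\mathbb{Z}_N$ used in Lemma~\ref{lmm:low_compl_radar}---costs $\mathcal{O}(MN)\times\mathcal{O}(MN)=\mathcal{O}(M^2N^2)$, matching the continuous and conventional-discrete baselines of Figs.~\ref{fig:block_diag}(\subref{fig:cont_radar})--(\subref{fig:disc_phase_coded_radar}).

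The main obstacle is not proving the $\mathcal{O}(M^2N^2)$ upper bound but giving it teeth: the content of the statement, established in~\cite{Fei2024_phasecoded_compl}, is that a generic coded waveform admits \emph{no} asymptotic speedup below $\mathcal{O}(M^2N^2)$ in these architectures. I would argue this by noting that any Fourier acceleration of the Doppler sum for such a $\mathbf{y}$ necessarily acts on a sequence of length $MN$ with no exploitable periodicity---in contrast to the length-$N$ sequence exposed by the pulsone's period-$M$ quasi-periodicity in~\eqref{eq:low_compl1}---and that the baseline architectures compute this correlation in continuous time prior to sampling, so the short-FFT shortcut of Lemma~\ref{lmm:low_compl_radar} is structurally unavailable; the precise per-coding-family accounting is the one carried out in~\cite{Fei2024_phasecoded_compl}.
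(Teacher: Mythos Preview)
Your operation count is correct and reaches the same $\mathcal{O}(M^2N^2)$ conclusion, but the paper takes a structurally different route. Rather than treating $\mathbf{y}$ as an opaque generic sequence, the paper writes out what ``amplitude-, phase- or frequency-coded'' actually means: a code sequence $\mathbf{z}$ mounted on a chip $s(\cdot)$ of duration $T_{\mathsf{c}}$, giving after sampling the convolution $\mathbf{y}[n]=\sum_{m=0}^{MN-1}\mathbf{z}_m\,s[n-m]$. Substituting this into Definition~\ref{def:amb_fun} and interchanging sums yields
\[
\mathbf{A}_{\mathbf{x},\mathbf{y}}[k,l]=\sum_{m=0}^{MN-1}\mathbf{z}^{*}_{m}\sum_{n=0}^{MN-1}\mathbf{x}[n]\,s^{*}\!\big[(n-k)_{{}_{MN}}-m\big]\,e^{-\frac{j2\pi}{MN}l(n-k)},
\]
an explicit double sum over the code index $m$ and the time index $n$, each of length $MN$, which in the absence of further structure on $\mathbf{z}$ or $s$ is the $\mathcal{O}(M^2N^2)$ count. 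The paper's version thus makes the point that the code--chip decomposition itself does not factor or collapse the sum, whereas your version argues more abstractly that $\mathbf{y}$ lacks the period-$M$ quasi-periodicity that made~\eqref{eq:low_compl1} work. Both are direct counts rather than genuine lower bounds---a limitation you rightly flag and the paper handles by citing~\cite{Fei2024_phasecoded_compl}---but the paper's derivation is tighter to the lemma statement because it actually invokes the defining model of a coded waveform rather than treating it as arbitrary.
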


\begin{proof}
    We substitute the general expression for an amplitude-, phase- or frequency-coded waveform~\cite{Pezeshki2009}:
    \begin{equation}
        \label{eq:radarcoding1}
        \mathbf{y}(t) = \sum_{m = 0}^{MN-1} \mathbf{z}_{m} s(t-m T_{\mathsf{c}}),
    \end{equation}
    where $\mathbf{z}$ denotes an $MN$-periodic coding sequence and $s(t)$ denotes the carrier waveform (``chip'') with chip duration $T_{\mathsf{c}}$. On sampling~\eqref{eq:radarcoding1} at sampling rate $\nicefrac{1}{T_{\mathsf{c}}}$, we obtain the $MN$-length sampled waveform:
    \begin{equation}
        \label{eq:radarcoding2}
        \mathbf{y}[n] = \sum_{m = 0}^{MN-1} \mathbf{z}_{m} s[n-m],
    \end{equation}
    where $\mathbf{y}[n] = \mathbf{y}(nT_{\mathsf{c}})$ and $s[n-m] = s(nT_{\mathsf{c}}-m T_{\mathsf{c}})$. It is assumed that the coding sequence $\mathbf{z}$ and the sampled carrier waveform are normalized appropriately such that $\mathbf{y}$ is unit-norm.

    Substituting~\eqref{eq:radarcoding2} in Definition~\ref{def:amb_fun}:
    \begin{align}
        \label{eq:radarcoding3}
        \mathbf{A}_{\mathbf{x},\mathbf{y}}[k, l] &= \sum_{n=0}^{MN-1}\mathbf{x}[n]\mathbf{y}^*[(n-k)_{{}_{MN}}]e^{-\frac{j2\pi}{MN}l(n-k)} \nonumber \\
        &= \sum_{n=0}^{MN-1}\mathbf{x}[n] \sum_{m=0}^{MN-1}\mathbf{z}^{*}_{m} s^{*}[(n-k)_{{}_{MN}}-m]e^{-\frac{j2\pi}{MN}l(n-k)} \nonumber \\
        &= \sum_{m=0}^{MN-1}\mathbf{z}^{*}_{m} \sum_{n=0}^{MN-1}\mathbf{x}[n] s^{*}[(n-k)_{{}_{MN}}-m]e^{-\frac{j2\pi}{MN}l(n-k)}.
    \end{align}

    In the absence of any additional structure on the coding sequence $\mathbf{z}$ and the sampled carrier waveform, the final expression in~\eqref{eq:radarcoding3} corresponds to $\mathcal{O}(M^2N^2)$ complexity.
\end{proof}

It has been shown in~\cite{Calderbank2015_ltv} that the complexity of calculating the cross-ambiguity with a continuous radar architecture based on LFM waveforms is also $\mathcal{O}(M^2N^2)$.

\section{Conclusion}
\label{sec:conclusion}
We have described an architecture for radar sensing where the radar waveform is formed in the DD domain, converted to the TD for transmission, match filtered on receive, then returned to the DD domain for ambiguity function analysis. The delay and Doppler periods that characterize the radar waveform are determined by the intended delay and Doppler resolution of the radar. We have described how the intended resolution determines a finite group of delay and Doppler shifts, which we have termed the discrete Heisenberg-Weyl group. We have used this discrete group to provide a framework for ambiguity function analysis, replacing the standard framework which is based on the continuous Heisenberg-Weyl group. By choosing the radar waveform to be a common eigenvector of a maximal commutative subgroup of the discrete Heisenberg-Weyl group, every delay and Doppler shift in the subgroup acts on the waveform by multiplication by a complex phase. Our choice of waveform leads to ambiguity functions that are constant on cosets of the maximal commutative subgroup. We have shown that this choice of waveform reduces the complexity of radar signal processing from quadratic to sublinear in the time-bandwidth product. We have also shown that our approach enables defining waveform libraries with small PAPR. 

\section{Declarations}

\subsection{Funding}

This work is supported by the National Science Foundation under grants 2342690 and 2148212, in part by funds from federal agency and industry partners as specified in the Resilient \& Intelligent NextG Systems (RINGS) program, by the Department of Science and Technology, Govt. of India under grant TPN-96226, and by the Air Force Office of Scientific Research under grants FA 8750-20-2-0504 and FA 9550-23-1-0249.

\subsection{Competing interests}

Not applicable.

\subsection{Ethics approval and consent to participate}

Not applicable.

\subsection{Consent for publication}

Not applicable.

\subsection{Data availability}

Not applicable.

\subsection{Materials availability}

Not applicable.

\subsection{Code availability}

Not applicable.

\subsection{Author contribution}

NM \& SRM contributed equally to the work and jointly derived the theoretical results and performed the simulations. SKM, RH \& RC conceived the idea for the manuscript. NM \& RC wrote the manuscript. All authors read and approved the final manuscript.

\bibliography{references}


\begin{thebibliography}{46}
\ifx \bisbn   \undefined \def \bisbn  #1{ISBN #1}\fi
\ifx \binits  \undefined \def \binits#1{#1}\fi
\ifx \bauthor  \undefined \def \bauthor#1{#1}\fi
\ifx \batitle  \undefined \def \batitle#1{#1}\fi
\ifx \bjtitle  \undefined \def \bjtitle#1{#1}\fi
\ifx \bvolume  \undefined \def \bvolume#1{\textbf{#1}}\fi
\ifx \byear  \undefined \def \byear#1{#1}\fi
\ifx \bissue  \undefined \def \bissue#1{#1}\fi
\ifx \bfpage  \undefined \def \bfpage#1{#1}\fi
\ifx \blpage  \undefined \def \blpage #1{#1}\fi
\ifx \burl  \undefined \def \burl#1{\textsf{#1}}\fi
\ifx \doiurl  \undefined \def \doiurl#1{\url{https://doi.org/#1}}\fi
\ifx \betal  \undefined \def \betal{\textit{et al.}}\fi
\ifx \binstitute  \undefined \def \binstitute#1{#1}\fi
\ifx \binstitutionaled  \undefined \def \binstitutionaled#1{#1}\fi
\ifx \bctitle  \undefined \def \bctitle#1{#1}\fi
\ifx \beditor  \undefined \def \beditor#1{#1}\fi
\ifx \bpublisher  \undefined \def \bpublisher#1{#1}\fi
\ifx \bbtitle  \undefined \def \bbtitle#1{#1}\fi
\ifx \bedition  \undefined \def \bedition#1{#1}\fi
\ifx \bseriesno  \undefined \def \bseriesno#1{#1}\fi
\ifx \blocation  \undefined \def \blocation#1{#1}\fi
\ifx \bsertitle  \undefined \def \bsertitle#1{#1}\fi
\ifx \bsnm \undefined \def \bsnm#1{#1}\fi
\ifx \bsuffix \undefined \def \bsuffix#1{#1}\fi
\ifx \bparticle \undefined \def \bparticle#1{#1}\fi
\ifx \barticle \undefined \def \barticle#1{#1}\fi
\bibcommenthead
\ifx \bconfdate \undefined \def \bconfdate #1{#1}\fi
\ifx \botherref \undefined \def \botherref #1{#1}\fi
\ifx \url \undefined \def \url#1{\textsf{#1}}\fi
\ifx \bchapter \undefined \def \bchapter#1{#1}\fi
\ifx \bbook \undefined \def \bbook#1{#1}\fi
\ifx \bcomment \undefined \def \bcomment#1{#1}\fi
\ifx \oauthor \undefined \def \oauthor#1{#1}\fi
\ifx \citeauthoryear \undefined \def \citeauthoryear#1{#1}\fi
\ifx \endbibitem  \undefined \def \endbibitem {}\fi
\ifx \bconflocation  \undefined \def \bconflocation#1{#1}\fi
\ifx \arxivurl  \undefined \def \arxivurl#1{\textsf{#1}}\fi
\csname PreBibitemsHook\endcsname

\bibitem[\protect\citeauthoryear{Folland}{2016}]{Folland2016}
\begin{botherref}
\oauthor{\bsnm{Folland}, \binits{G.B.}}:
{Harmonic Analysis in Phase Space.(AM-122)}.
Annals of Mathematics Studies
(2016)
\end{botherref}
\endbibitem

\bibitem[\protect\citeauthoryear{Auslander and Tolimieri}{1985}]{Auslander1985}
\begin{barticle}
\bauthor{\bsnm{Auslander}, \binits{L.}},
\bauthor{\bsnm{Tolimieri}, \binits{R.}}:
\batitle{{Radar Ambiguity Functions and Group Theory}}.
\bjtitle{SIAM Journal on Mathematical Analysis}
\bvolume{16}(\bissue{3}),
\bfpage{577}--\blpage{601}
(\byear{1985})
\end{barticle}
\endbibitem

\bibitem[\protect\citeauthoryear{Miller et~al.}{1991}]{Miller1991}
\begin{botherref}
\oauthor{\bsnm{Miller}, \binits{W.}},
\oauthor{\bsnm{Blahut}, \binits{R.}},
\oauthor{\bsnm{Wilcox}, \binits{C.}}:
{Topics in Harmonic Analysis with Applications to Radar and Sonar}.
IMA Volumes in Mathematics and its Applications
(1991)
\end{botherref}
\endbibitem

\bibitem[\protect\citeauthoryear{Moran}{2001}]{Moran2001_mathofradar}
\begin{bchapter}
\bauthor{\bsnm{Moran}, \binits{W.}}:
\bctitle{{Mathematics of Radar}}.
In: \bbtitle{Twentieth Century Harmonic Analysis—A Celebration},
pp. \bfpage{295}--\blpage{328}
(\byear{2001})
\end{bchapter}
\endbibitem

\bibitem[\protect\citeauthoryear{Moran and Manton}{2004}]{Moran2004_grouptheory_radar}
\begin{bchapter}
\bauthor{\bsnm{Moran}, \binits{W.}},
\bauthor{\bsnm{Manton}, \binits{J.H.}}:
\bctitle{{Group Theory in Radar and Signal Processing}}.
In: \bbtitle{Computational Noncommutative Algebra and Applications},
pp. \bfpage{339}--\blpage{362}
(\byear{2004})
\end{bchapter}
\endbibitem

\bibitem[\protect\citeauthoryear{Bolcskei and Hlawatsch}{1997}]{dzt}
\begin{barticle}
\bauthor{\bsnm{Bolcskei}, \binits{H.}},
\bauthor{\bsnm{Hlawatsch}, \binits{F.}}:
\batitle{{Discrete {Z}ak Transforms, Polyphase Transforms, and Applications}}.
\bjtitle{IEEE Transactions on Signal Processing}
\bvolume{45}(\bissue{4}),
\bfpage{851}--\blpage{866}
(\byear{1997})
\doiurl{10.1109/78.564174}
\end{barticle}
\endbibitem

\bibitem[\protect\citeauthoryear{Segal}{1963}]{Segal1963_finiteHW}
\begin{barticle}
\bauthor{\bsnm{Segal}, \binits{I.E.}}:
\batitle{{Transforms for Operators and Symplectic Automorphisms over a Locally Compact Abelian Group}}.
\bjtitle{Mathematica Scandinavica}
\bvolume{13}(\bissue{1}),
\bfpage{31}--\blpage{43}
(\byear{1963})
\end{barticle}
\endbibitem

\bibitem[\protect\citeauthoryear{Weil et~al.}{1964}]{Weil1964_finiteHW}
\begin{barticle}
\bauthor{\bsnm{Weil}, \binits{A.}}, \betal:
\batitle{{Sur Certains Groupes d’op{\'e}rateurs Unitaires}}.
\bjtitle{Acta math}
\bvolume{111}(\bissue{143-211}),
\bfpage{14}
(\byear{1964})
\end{barticle}
\endbibitem

\bibitem[\protect\citeauthoryear{Mackey}{1965}]{Mackey1965_finiteHW}
\begin{barticle}
\bauthor{\bsnm{Mackey}, \binits{G.W.}}:
\batitle{{Some Remarks on Symplectic Automorphisms}}.
\bjtitle{Proceedings of the American Mathematical Society}
\bvolume{16}(\bissue{3}),
\bfpage{393}--\blpage{397}
(\byear{1965})
\end{barticle}
\endbibitem

\bibitem[\protect\citeauthoryear{Tolimieri and An}{1997}]{Tolimieri1997_finiteHW}
\begin{bbook}
\bauthor{\bsnm{Tolimieri}, \binits{R.}},
\bauthor{\bsnm{An}, \binits{M.}}:
\bbtitle{{T}ime-{F}requency {R}epresentations}.
\bpublisher{Springer},
\blocation{Boston, MA}
(\byear{1997})
\end{bbook}
\endbibitem

\bibitem[\protect\citeauthoryear{Richman et~al.}{1998}]{Richman1998_finiteHW}
\begin{barticle}
\bauthor{\bsnm{Richman}, \binits{M.S.}},
\bauthor{\bsnm{Parks}, \binits{T.W.}},
\bauthor{\bsnm{Shenoy}, \binits{R.G.}}:
\batitle{{Discrete-Time, Discrete-Frequency, Time-Frequency Analysis}}.
\bjtitle{IEEE Transactions on Signal Processing}
\bvolume{46}(\bissue{6}),
\bfpage{1517}--\blpage{1527}
(\byear{1998})
\end{barticle}
\endbibitem

\bibitem[\protect\citeauthoryear{Parisi et~al.}{2025}]{AFRL2025}
\begin{botherref}
\oauthor{\bsnm{Parisi}, \binits{C.}},
\oauthor{\bsnm{Khammammetti}, \binits{V.}},
\oauthor{\bsnm{Calderbank}, \binits{R.}},
\oauthor{\bsnm{Huie}, \binits{L.}}:
{Over-the-Air Transmission of Zak-OTFS with Spread Pilots on Sub-THz Communications Testbed}
(2025).
\url{https://arxiv.org/abs/2504.15947}
\end{botherref}
\endbibitem

\bibitem[\protect\citeauthoryear{Mohammed et~al.}{2024}]{otfs_book}
\begin{bbook}
\bauthor{\bsnm{Mohammed}, \binits{S.K.}},
\bauthor{\bsnm{Hadani}, \binits{R.}},
\bauthor{\bsnm{Chockalingam}, \binits{A.}}:
\bbtitle{{OTFS} {M}odulation: {T}heory and {A}pplications}.
\bpublisher{Wiley-IEEE Press},
\blocation{Hoboken, NJ}
(\byear{2024}).
\doiurl{10.1002/9781119984221}
\end{bbook}
\endbibitem

\bibitem[\protect\citeauthoryear{Mohammed et~al.}{2022}]{bitspaper1}
\begin{barticle}
\bauthor{\bsnm{Mohammed}, \binits{S.K.}},
\bauthor{\bsnm{Hadani}, \binits{R.}},
\bauthor{\bsnm{Chockalingam}, \binits{A.}},
\bauthor{\bsnm{Calderbank}, \binits{R.}}:
\batitle{{OTFS}—{A} {M}athematical {F}oundation for {C}ommunication and {R}adar {S}ensing in the {D}elay-{D}oppler {D}omain}.
\bjtitle{IEEE BITS the Information Theory Magazine}
\bvolume{2}(\bissue{2}),
\bfpage{36}--\blpage{55}
(\byear{2022})
\doiurl{10.1109/MBITS.2022.3216536}
\end{barticle}
\endbibitem

\bibitem[\protect\citeauthoryear{Mohammed et~al.}{2023}]{bitspaper2}
\begin{barticle}
\bauthor{\bsnm{Mohammed}, \binits{S.K.}},
\bauthor{\bsnm{Hadani}, \binits{R.}},
\bauthor{\bsnm{Chockalingam}, \binits{A.}},
\bauthor{\bsnm{Calderbank}, \binits{R.}}:
\batitle{{OTFS}—{P}redictability in the {D}elay-{D}oppler {D}omain and {I}ts {V}alue to {C}ommunication and {R}adar {S}ensing}.
\bjtitle{IEEE BITS the Information Theory Magazine}
\bvolume{3}(\bissue{2}),
\bfpage{7}--\blpage{31}
(\byear{2023})
\doiurl{10.1109/MBITS.2023.3319595}
\end{barticle}
\endbibitem

\bibitem[\protect\citeauthoryear{Benedetto et~al.}{2009}]{benedetto_phasecoded}
\begin{barticle}
\bauthor{\bsnm{Benedetto}, \binits{J.J.}},
\bauthor{\bsnm{Konstantinidis}, \binits{I.}},
\bauthor{\bsnm{Rangaswamy}, \binits{M.}}:
\batitle{{Phase-Coded Waveforms and Their Design}}.
\bjtitle{IEEE Signal Processing Magazine}
\bvolume{26}(\bissue{1}),
\bfpage{22}--\blpage{31}
(\byear{2009})
\doiurl{10.1109/MSP.2008.930416}
\end{barticle}
\endbibitem

\bibitem[\protect\citeauthoryear{Andrews}{2022}]{zadoff_chu}
\begin{botherref}
\oauthor{\bsnm{Andrews}, \binits{J.G.}}:
{A Primer on Zadoff Chu Sequences}.
arXiv preprint arXiv:2211.05702
(2022)
\end{botherref}
\endbibitem

\bibitem[\protect\citeauthoryear{Skolnik et~al.}{1980}]{Skolnik1980}
\begin{bbook}
\bauthor{\bsnm{Skolnik}, \binits{M.I.}}, \betal:
\bbtitle{{I}ntroduction to {R}adar {S}ystems}
vol. \bseriesno{3}.
\bpublisher{McGraw-Hill New York},
\blocation{New York, NY}
(\byear{1980})
\end{bbook}
\endbibitem

\bibitem[\protect\citeauthoryear{Spafford}{1968}]{Spafford1968_clutter}
\begin{barticle}
\bauthor{\bsnm{Spafford}, \binits{L.}}:
\batitle{{Optimum radar signal processing in clutter}}.
\bjtitle{IEEE Transactions on Information Theory}
\bvolume{14}(\bissue{5}),
\bfpage{734}--\blpage{743}
(\byear{1968})
\doiurl{10.1109/TIT.1968.1054205}
\end{barticle}
\endbibitem

\bibitem[\protect\citeauthoryear{Shnidman}{1999}]{Shnidman1999_clutter}
\begin{barticle}
\bauthor{\bsnm{Shnidman}, \binits{D.A.}}:
\batitle{{Generalized Radar Clutter Model}}.
\bjtitle{IEEE Transactions on Aerospace and Electronic Systems}
\bvolume{35}(\bissue{3}),
\bfpage{857}--\blpage{865}
(\byear{1999})
\doiurl{10.1109/7.784056}
\end{barticle}
\endbibitem

\bibitem[\protect\citeauthoryear{Shnidman}{2005}]{Shnidman2005_clutter}
\begin{barticle}
\bauthor{\bsnm{Shnidman}, \binits{D.A.}}:
\batitle{{Radar Detection in Clutter}}.
\bjtitle{IEEE Transactions on Aerospace and Electronic Systems}
\bvolume{41}(\bissue{3}),
\bfpage{1056}--\blpage{1067}
(\byear{2005})
\doiurl{10.1109/TAES.2005.1541450}
\end{barticle}
\endbibitem

\bibitem[\protect\citeauthoryear{Sira et~al.}{2007}]{Calderbank2007_seaclutter}
\begin{barticle}
\bauthor{\bsnm{Sira}, \binits{S.P.}},
\bauthor{\bsnm{Cochran}, \binits{D.}},
\bauthor{\bsnm{Papandreou-Suppappola}, \binits{A.}},
\bauthor{\bsnm{Morrell}, \binits{D.}},
\bauthor{\bsnm{Moran}, \binits{W.}},
\bauthor{\bsnm{Howard}, \binits{S.D.}},
\bauthor{\bsnm{Calderbank}, \binits{R.}}:
\batitle{{Adaptive Waveform Design for Improved Detection of Low-RCS Targets in Heavy Sea Clutter}}.
\bjtitle{IEEE Journal of Selected Topics in Signal Processing}
\bvolume{1}(\bissue{1}),
\bfpage{56}--\blpage{66}
(\byear{2007})
\doiurl{10.1109/JSTSP.2007.897048}
\end{barticle}
\endbibitem

\bibitem[\protect\citeauthoryear{Nisar et~al.}{2025}]{zakotfs_ltv}
\begin{botherref}
\oauthor{\bsnm{Nisar}, \binits{D.}},
\oauthor{\bsnm{Mohammed}, \binits{S.K.}},
\oauthor{\bsnm{Hadani}, \binits{R.}},
\oauthor{\bsnm{Chockalingam}, \binits{A.}},
\oauthor{\bsnm{Calderbank}, \binits{R.}}:
{Zak-OTFS for Identification of Linear Time-Varying Systems}
(2025).
\url{https://arxiv.org/abs/2503.18900}
\end{botherref}
\endbibitem

\bibitem[\protect\citeauthoryear{Howard et~al.}{2004}]{Moran2004_wvf_lib}
\begin{bchapter}
\bauthor{\bsnm{Howard}, \binits{S.D.}},
\bauthor{\bsnm{Suvorova}, \binits{S.}},
\bauthor{\bsnm{Moran}, \binits{W.}}:
\bctitle{{Waveform Libraries for Radar Tracking Applications}}.
In: \bbtitle{2004 International Waveform Diversity \& Design Conference},
pp. \bfpage{1}--\blpage{5}
(\byear{2004}).
\doiurl{10.1109/IWDDC.2004.8317579}
\end{bchapter}
\endbibitem

\bibitem[\protect\citeauthoryear{Cochran et~al.}{2009}]{Moran2009_wvf_lib_survey}
\begin{barticle}
\bauthor{\bsnm{Cochran}, \binits{D.}},
\bauthor{\bsnm{Suvorova}, \binits{S.}},
\bauthor{\bsnm{Howard}, \binits{S.D.}},
\bauthor{\bsnm{Moran}, \binits{B.}}:
\batitle{{Waveform Libraries}}.
\bjtitle{IEEE Signal Processing Magazine}
\bvolume{26}(\bissue{1}),
\bfpage{12}--\blpage{21}
(\byear{2009})
\doiurl{10.1109/MSP.2008.930415}
\end{barticle}
\endbibitem

\bibitem[\protect\citeauthoryear{Kershaw and Evans}{2002}]{Evans2002}
\begin{barticle}
\bauthor{\bsnm{Kershaw}, \binits{D.J.}},
\bauthor{\bsnm{Evans}, \binits{R.J.}}:
\batitle{{Optimal Waveform Selection for Tracking Systems}}.
\bjtitle{IEEE Transactions on Information Theory}
\bvolume{40}(\bissue{5}),
\bfpage{1536}--\blpage{1550}
(\byear{2002})
\end{barticle}
\endbibitem

\bibitem[\protect\citeauthoryear{Suvorova et~al.}{2006}]{Moran2006}
\begin{bchapter}
\bauthor{\bsnm{Suvorova}, \binits{S.}},
\bauthor{\bsnm{Howard}, \binits{S.}},
\bauthor{\bsnm{Moran}, \binits{W.}}:
\bctitle{{Beam and Waveform Scheduling Approach to Combined Radar Surveillance \& Tracking—The Paranoid Tracker}}.
In: \bbtitle{2006 International Waveform Diversity \& Design Conference},
pp. \bfpage{1}--\blpage{5}
(\byear{2006}).
\bcomment{IEEE}
\end{bchapter}
\endbibitem

\bibitem[\protect\citeauthoryear{Iwaniec and Kowalski}{2004}]{iwaniec2021_numtheorybook}
\begin{bbook}
\bauthor{\bsnm{Iwaniec}, \binits{H.}},
\bauthor{\bsnm{Kowalski}, \binits{E.}}:
\bbtitle{{A}nalytic {N}umber {T}heory}
vol. \bseriesno{53}.
\bpublisher{American Mathematical Soc.},
\blocation{Providence, RI}
(\byear{2004})
\end{bbook}
\endbibitem

\bibitem[\protect\citeauthoryear{Woodward}{1953}]{Woodward1953}
\begin{bbook}
\bauthor{\bsnm{Woodward}, \binits{P.M.}}:
\bbtitle{{P}robability and {I}nformation {T}heory, with {A}pplications to {R}adar ({S}econd {E}dition)},
\bedition{Second edition} edn.
\bsertitle{International Series of Monographs on Electronics and Instrumentation}.
\bpublisher{Pergamon},
\blocation{Oxford, London}
(\byear{1953})
\end{bbook}
\endbibitem

\bibitem[\protect\citeauthoryear{Levanon and Mozeson}{2004}]{Levanon2004}
\begin{bbook}
\bauthor{\bsnm{Levanon}, \binits{N.}},
\bauthor{\bsnm{Mozeson}, \binits{E.}}:
\bbtitle{Radar {S}ignals}.
\bpublisher{John Wiley \& Sons},
\blocation{Hoboken, NJ}
(\byear{2004})
\end{bbook}
\endbibitem

\bibitem[\protect\citeauthoryear{Ubadah et~al.}{2025}]{Calderbank2025_isac}
\begin{botherref}
\oauthor{\bsnm{Ubadah}, \binits{M.}},
\oauthor{\bsnm{Mohammed}, \binits{S.K.}},
\oauthor{\bsnm{Hadani}, \binits{R.}},
\oauthor{\bsnm{Kons}, \binits{S.}},
\oauthor{\bsnm{Chockalingam}, \binits{A.}},
\oauthor{\bsnm{Calderbank}, \binits{R.}}:
{Zak-OTFS to Integrate Sensing the I/O Relation and Data Communication}
(2025).
\url{https://arxiv.org/abs/2404.04182}
\end{botherref}
\endbibitem

\bibitem[\protect\citeauthoryear{Mehrotra et~al.}{2025}]{Mehrotra2025_WCLSpread}
\begin{botherref}
\oauthor{\bsnm{Mehrotra}, \binits{N.}},
\oauthor{\bsnm{Mattu}, \binits{S.R.}},
\oauthor{\bsnm{Calderbank}, \binits{R.}}:
{Zak-OTFS With Spread Carrier Waveforms}.
IEEE Wireless Communications Letters,
1--1
(2025)
\doiurl{10.1109/LWC.2025.3590254}
\end{botherref}
\endbibitem

\bibitem[\protect\citeauthoryear{Moyal}{1949}]{Moyal1949}
\begin{barticle}
\bauthor{\bsnm{Moyal}, \binits{J.E.}}:
\batitle{{Quantum Mechanics as a Statistical Theory}}.
\bjtitle{Mathematical Proceedings of the Cambridge Philosophical Society}
\bvolume{45}(\bissue{1}),
\bfpage{99}--\blpage{124}
(\byear{1949})
\doiurl{10.1017/S0305004100000487}
\end{barticle}
\endbibitem

\bibitem[\protect\citeauthoryear{Howard et~al.}{2006}]{Howard2006_HW}
\begin{barticle}
\bauthor{\bsnm{Howard}, \binits{S.D.}},
\bauthor{\bsnm{Calderbank}, \binits{A.R.}},
\bauthor{\bsnm{Moran}, \binits{W.}}:
\batitle{{The Finite Heisenberg-Weyl Groups in Radar and Communications}}.
\bjtitle{EURASIP Journal on Advances in Signal Processing}
\bvolume{2006},
\bfpage{1}--\blpage{12}
(\byear{2006})
\end{barticle}
\endbibitem

\bibitem[\protect\citeauthoryear{Harms et~al.}{2015}]{Calderbank2015_ltv}
\begin{barticle}
\bauthor{\bsnm{Harms}, \binits{A.}},
\bauthor{\bsnm{Bajwa}, \binits{W.U.}},
\bauthor{\bsnm{Calderbank}, \binits{R.}}:
\batitle{{Identification of Linear Time-Varying Systems Through Waveform Diversity}}.
\bjtitle{IEEE Transactions on Signal Processing}
\bvolume{63}(\bissue{8}),
\bfpage{2070}--\blpage{2084}
(\byear{2015})
\doiurl{10.1109/TSP.2015.2407319}
\end{barticle}
\endbibitem

\bibitem[\protect\citeauthoryear{Jankiraman}{2018}]{Jankiraman2018_fmcw}
\begin{bbook}
\bauthor{\bsnm{Jankiraman}, \binits{M.}}:
\bbtitle{{FMCW} {Radar} {Design}}.
\bpublisher{Artech House},
\blocation{Norwood, MA}
(\byear{2018})
\end{bbook}
\endbibitem

\bibitem[\protect\citeauthoryear{Pezeshki et~al.}{2008}]{Pezeshki2008}
\begin{barticle}
\bauthor{\bsnm{Pezeshki}, \binits{A.}},
\bauthor{\bsnm{Calderbank}, \binits{A.R.}},
\bauthor{\bsnm{Moran}, \binits{W.}},
\bauthor{\bsnm{Howard}, \binits{S.D.}}:
\batitle{{Doppler Resilient Golay Complementary Waveforms}}.
\bjtitle{IEEE Transactions on Information Theory}
\bvolume{54}(\bissue{9}),
\bfpage{4254}--\blpage{4266}
(\byear{2008})
\doiurl{10.1109/TIT.2008.928292}
\end{barticle}
\endbibitem

\bibitem[\protect\citeauthoryear{Dang et~al.}{2020}]{Dang2020}
\begin{botherref}
\oauthor{\bsnm{Dang}, \binits{W.}},
\oauthor{\bsnm{Pezeshki}, \binits{A.}},
\oauthor{\bsnm{Howard}, \binits{S.D.}},
\oauthor{\bsnm{Moran}, \binits{W.}},
\oauthor{\bsnm{Calderbank}, \binits{R.}}:
{Coordinating Complementary Waveforms for Suppressing Range Sidelobes in a Doppler Band}.
arXiv preprint arXiv:2001.09397
(2020)
\end{botherref}
\endbibitem

\bibitem[\protect\citeauthoryear{Ye et~al.}{2022}]{Tang2022}
\begin{barticle}
\bauthor{\bsnm{Ye}, \binits{Z.}},
\bauthor{\bsnm{Zhou}, \binits{Z.}},
\bauthor{\bsnm{Fan}, \binits{P.}},
\bauthor{\bsnm{Liu}, \binits{Z.}},
\bauthor{\bsnm{Lei}, \binits{X.}},
\bauthor{\bsnm{Tang}, \binits{X.}}:
\batitle{{Low Ambiguity Zone: Theoretical Bounds and Doppler-Resilient Sequence Design in Integrated Sensing and Communication Systems}}.
\bjtitle{IEEE Journal on Selected Areas in Communications}
\bvolume{40}(\bissue{6}),
\bfpage{1809}--\blpage{1822}
(\byear{2022})
\doiurl{10.1109/JSAC.2022.3155510}
\end{barticle}
\endbibitem

\bibitem[\protect\citeauthoryear{Costas}{1984}]{Costas1984}
\begin{barticle}
\bauthor{\bsnm{Costas}, \binits{J.P.}}:
\batitle{{A Study of a Class of Detection Waveforms Having Nearly Ideal Range—Doppler Ambiguity Properties}}.
\bjtitle{Proceedings of the IEEE}
\bvolume{72}(\bissue{8}),
\bfpage{996}--\blpage{1009}
(\byear{1984})
\doiurl{10.1109/PROC.1984.12967}
\end{barticle}
\endbibitem

\bibitem[\protect\citeauthoryear{Chang and Bell}{2003}]{Bell2003}
\begin{barticle}
\bauthor{\bsnm{Chang}, \binits{C.-F.}},
\bauthor{\bsnm{Bell}, \binits{M.R.}}:
\batitle{{Frequency-Coded Waveforms for Enhanced Delay-Doppler Resolution}}.
\bjtitle{IEEE Transactions on Information Theory}
\bvolume{49}(\bissue{11}),
\bfpage{2960}--\blpage{2971}
(\byear{2003})
\doiurl{10.1109/TIT.2003.818408}
\end{barticle}
\endbibitem

\bibitem[\protect\citeauthoryear{Neuberger and Vehmas}{2021}]{Vehmas2021}
\begin{barticle}
\bauthor{\bsnm{Neuberger}, \binits{N.}},
\bauthor{\bsnm{Vehmas}, \binits{R.}}:
\batitle{A costas-based waveform for local range-doppler sidelobe level reduction}.
\bjtitle{IEEE Signal Processing Letters}
\bvolume{28},
\bfpage{673}--\blpage{677}
(\byear{2021})
\doiurl{10.1109/LSP.2021.3067219}
\end{barticle}
\endbibitem

\bibitem[\protect\citeauthoryear{Pezeshki et~al.}{2009}]{Pezeshki2009}
\begin{bchapter}
\bauthor{\bsnm{Pezeshki}, \binits{A.}},
\bauthor{\bsnm{Calderbank}, \binits{R.}},
\bauthor{\bsnm{Scharf}, \binits{L.L.}}:
\bctitle{{Sidelobe Suppression in a Desired Range/Doppler Interval}}.
In: \bbtitle{2009 IEEE Radar Conference},
pp. \bfpage{1}--\blpage{5}
(\byear{2009}).
\doiurl{10.1109/RADAR.2009.4977144}
\end{bchapter}
\endbibitem

\bibitem[\protect\citeauthoryear{Bai et~al.}{2024}]{Fei2024_phasecoded_compl}
\begin{barticle}
\bauthor{\bsnm{Bai}, \binits{J.}},
\bauthor{\bsnm{Wei}, \binits{G.}},
\bauthor{\bsnm{Shao}, \binits{J.}},
\bauthor{\bsnm{Wang}, \binits{X.}},
\bauthor{\bsnm{Fei}, \binits{Z.}}:
\batitle{{Design of Orthogonal Polyphase Code Set Based on Multipulse Joint Processing}}.
\bjtitle{IEEE Transactions on Aerospace and Electronic Systems}
\bvolume{60}(\bissue{6}),
\bfpage{8901}--\blpage{8913}
(\byear{2024})
\doiurl{10.1109/TAES.2024.3435172}
\end{barticle}
\endbibitem

\bibitem[\protect\citeauthoryear{Bell}{2002}]{Bell2002}
\begin{barticle}
\bauthor{\bsnm{Bell}, \binits{M.R.}}:
\batitle{{Information Theory and Radar Waveform Design}}.
\bjtitle{IEEE Transactions on Information Theory}
\bvolume{39}(\bissue{5}),
\bfpage{1578}--\blpage{1597}
(\byear{2002})
\end{barticle}
\endbibitem

\bibitem[\protect\citeauthoryear{Jiang and Wu}{2008}]{Jiang2008_papr}
\begin{barticle}
\bauthor{\bsnm{Jiang}, \binits{T.}},
\bauthor{\bsnm{Wu}, \binits{Y.}}:
\batitle{{An Overview: Peak-to-Average Power Ratio Reduction Techniques for OFDM Signals}}.
\bjtitle{IEEE Transactions on Broadcasting}
\bvolume{54}(\bissue{2}),
\bfpage{257}--\blpage{268}
(\byear{2008})
\doiurl{10.1109/TBC.2008.915770}
\end{barticle}
\endbibitem

\end{thebibliography}

\end{document}